\pgfplotsset{width = 10cm,compat = 1.9}
\def\t{\textrm}
\newtheorem{theorem}{Theorem}
\newtheorem{lemma}{Lemma}
\newtheorem*{axiom*}{Axiom}
\newtheorem{proposition}{Proposition}
\newtheorem{remark}{Remark}
\newtheorem*{theorem*}{Theorem}
\theoremstyle{definition} 
\newtheorem{definition}{Definition}
\newtheorem{example}{Example}
\newcommand{\spbr}{:}
\newcommand{\cmark}{\ding{51}}%
\newcommand{\xmark}{\ding{55}}%
\crefname{assumption}{assumption}{assumptions}
\newcolumntype{C}{>{$}c<{$}} 
\def\squarecorner#1{
	%
	\pgf@x=\the\wd\pgfnodeparttextbox%
	\pgfmathsetlength\pgf@xc{\pgfkeysvalueof{/pgf/inner xsep}}%
	\advance\pgf@x by 2\pgf@xc%
	\pgfmathsetlength\pgf@xb{\pgfkeysvalueof{/pgf/minimum width}}%
	\ifdim\pgf@x<\pgf@xb%
	\pgf@x=\pgf@xb%
	\fi%
	%
	\pgf@y=\ht\pgfnodeparttextbox%
	\advance\pgf@y by\dp\pgfnodeparttextbox%
	\pgfmathsetlength\pgf@yc{\pgfkeysvalueof{/pgf/inner ysep}}%
	\advance\pgf@y by 2\pgf@yc%
	\pgfmathsetlength\pgf@yb{\pgfkeysvalueof{/pgf/minimum height}}%
	\ifdim\pgf@y<\pgf@yb%
	\pgf@y=\pgf@yb%
	\fi%
	%
	\ifdim\pgf@x<\pgf@y%
	\pgf@x=\pgf@y%
	\else
	\pgf@y=\pgf@x%
	\fi
	%
	\pgf@x=#1.5\pgf@x%
	\advance\pgf@x by.5\wd\pgfnodeparttextbox%
	\pgfmathsetlength\pgf@xa{\pgfkeysvalueof{/pgf/outer xsep}}%
	\advance\pgf@x by#1\pgf@xa%
	\pgf@y=#1.5\pgf@y%
	\advance\pgf@y by-.5\dp\pgfnodeparttextbox%
	\advance\pgf@y by.5\ht\pgfnodeparttextbox%
	\pgfmathsetlength\pgf@ya{\pgfkeysvalueof{/pgf/outer ysep}}%
	\advance\pgf@y by#1\pgf@ya%
}
	\savedanchor\northeast{\squarecorner{}}
	\savedanchor\southwest{\squarecorner{-}}
\tikzset{
	pref/.style={
		align = center,
	},
}
\newcommand{\threedots}{%
	\mathrel{\mathpalette\threedots@\relax}%
}
\newcommand{\threedots@}[2]{%
	\sbox\z@{$\m@th#1:$}%
	\vbox to\ht\z@{%
		\hbox{$\m@th#1.$}%
		\vss
		\hbox{$\m@th#1.$}%
		\vss
		\hbox{$\m@th#1.$}%
	}%
}
\DeclareFontFamily{U}{mathb}{\hyphenchar\font45}
\DeclareFontShape{U}{mathb}{m}{n}{%
	<-6> mathb5
	<6-7> mathb6
	<7-8> mathb7
	<8-9> mathb8
	<9-10> mathb9
	<10-12> mathb10
	<12-> mathb12 }{}
\DeclareSymbolFont{mathb}{U}{mathb}{m}{n}
\DeclareMathSymbol{\square}   {2}{mathb}{"9C}
\newcommand*\circled[1]{#1^\circ}
\newcommand*\circledinv[1]{#1^{\textcolor{white}{\circ}}}
\newcommand*\squared[1]{#1^\squa}
\newcommand*\squaredinv[1]{#1^{\textcolor{white}{\squa}}}
\newcommand{\wor}{w}
\newcommand{\Wor}{W}
\newcommand{\fir}{f}
\newcommand{\Fir}{F}
\newcommand{\spref}{\succ}
\newcommand{\wpref}{\succsim}
\newcommand{\unmat}{\varnothing}
\newcommand{\age}{a}
\newcommand{\Age}{A}
\newcommand{\bge}{b}
\newcommand{\rema}{\mat_0}
\newcommand{\Coal}{C}
\newcommand{\mat}{\mu}
\newcommand{\nat}{\nu}
\newcommand{\mech}{\psi}
\newcommand{\squa}{\square}
\newcommand{\matcir}{\mat^\circ}
\newcommand{\matsqu}{\mat^\squa}
\newcommand{\mattil}{\tilde{\mat}}
\newcommand{\matProp}{\mat_1}
\newcommand{\matExch}{\mat_2}
\newcommand{\matTTC}{\mat^{\t{TTC}}}
\newcommand{\matDA}{\mat^{\t{DA}}}
\newcommand{\Propose}{Propose }
\newcommand{\Exchange}{Exchange }
\newcommand{\Graph}{G}
\newcommand{\Edges}{E}
\newcommand{\Vertices}{V}
\newcommand{\Path}{P}
\newcommand{\Qath}{Q}
\newcommand{\edg}{e}
\newcommand{\Imp}{I}
\newcommand{\Lat}{L}
\newcommand{\lgeq}{ \geq}
\newcommand{\join}{\vee}
\newcommand{\meet}{\wedge}
\newcommand{\AgeFree}{\Age^\Free}
\newcommand{\Free}{F}
\newcommand{\quo}{q}
\newcommand{\wdom}{\geq}
\newcommand{\sema}{\rema^\sigma}
\title{Matching With Pre-Existing Binding Agreements: The Agreeable Core}
\author{Peter Doe\footnote{California Institute of Technology; email: \href{mailto:pdoe@caltech.edu}{pdoe@caltech.edu}. This paper has benefited from the support and guidance of my advisors Federico Echenique and Luciano Pomatto, as well as conversations with Peter Caradonna, Axel Niemeyer, Ke Shi, and the participants of the Caltech theory group and EC'24.}}
\date{\today}
\begin{document}
	\maketitle
	
	\begin{abstract}
		Matching market models ignore prior commitments.
		Yet many job seekers, for example, are already employed, and the same holds for many other matching markets.
		I analyze two-sided matching markets with pre-existing binding agreements between market participants.
		In this model, a pair of participants bound to each other by a pre-existing agreement must agree to any action they take.
		To analyze their behavior, I propose a new solution concept, the \textit{agreeable core}, consisting of the matches which cannot be renegotiated without violating the binding agreements.
		My main contribution is an algorithm that constructs such a match by a novel combination of the Deferred Acceptance and Top Trading Cycles algorithms.
		The algorithm is robust to various manipulations and has applications to numerous markets including the resident-to-hospital match, college admissions, school choice, and labor markets.
	\end{abstract}
	\section{Introduction}
	
	In matching markets, pre-existing agreements are common.
	For example, when a student is admitted to a college through an Early Decision program, she commits to attend the college; she is bound to the college, and it now controls her right to participate in the regular admission cycle.  
	When a professional athlete signs on to a sports team, that team purchases her right to sign on to other teams.
	Both examples include market participants---whether students or athletes---who have bound themselves to others.
	They are denied the right to find a new partner unless they are released from their agreements.
	
	The standard model of matching markets ignores these interdependencies.
	It gives participants unrestricted rights to form new agreements, regardless of their earlier agreements.
	That is clearly unrealistic.
	
	I propose a new model to solve this problem.
	It makes it possible to analyze such markets.
	At bottom, my model requires that any action taken by one person must receive the approval of the person to whom she is bound.
	For example, a professional athlete can only seek another position with the approval of her team.
	Without the approval, she faces penalties for breaching her agreement.
	To manage these constraints, I introduce the concept of an \textit{agreeable} group of participants.
	A group is agreeable if no member of the group is bound to someone outside of the group by a pre-existing agreement.
	In my example, an agreeable group only contains the athlete if it also contains her team, and vice versa.
	Critically, neither the athlete nor the team needs to be released from an agreement by anyone outside of the group.
	
	My solution, the \textit{agreeable core}, consists of the outcomes that cannot be renegotiated by any agreeable group.
	For a candidate outcome $\mat$, the agreeable core considers every agreeable group and checks whether the group can achieve a better outcome for its members.
	If no such agreeable group exists, then the agreeable core includes $\mat$.
	In the professional sports example, an agreeable group may contain some athletes and their respective teams.
	The agreeable core allows those athletes and teams to renegotiate their new contracts \textit{before they are signed} so long as every member of the group benefits.
	No person outside of this group can impede the negotiation because the group is agreeable.
	
	In the agreeable core participants are granted maximum flexibility in dissolving agreements and forming new matches.
	Critically, when a group considers renegotiating a candidate outcome, it does so before the pre-existing agreements are dissolved; after the pre-existing agreements are dissolved, the candidate outcome becomes the set of binding agreements.
	In effect, each participant has the flexibility to secure a replacement before releasing her partner from their binding agreement.
	For example, a team may condition the release of a player from their pre-existing contract on whether it secures a more-preferred replacement to sign a new contract.
	Without this flexibility, market congestion---where participants delay decision-making due to the lack of guarantees on replacement agreements---can result when participants with binding agreements hesitate to dissolve them without certainty of better outcomes.
	For instance, if a player is released from her pre-existing contract, the particular team she signs on to next directly affects which other players are available to her original team.
	The agreeable core resolves this by allowing the formation of a new agreement immediately after the old agreement is dissolved.
	
	Notably, I show that there are only two ways that agreements are dissolved in the agreeable core (\Cref{pro:nofreeblockingpairs}).
	First, some agreements are dissolved unconditionally by both parties.
	Both parties are able to find better alternative partners regardless of the action the other takes.
	The outcome would be the same with or without the agreement.
	For example, this occurs when a team and an athlete jointly agree to cancel their contract; whether or not the contract initially existed is irrelevant to their future decisions.
	Second, the remaining agreements are only dissolved through \say{trades.} 
	In a trade, two or more participants exchange the partners to whom they bound.
	For example, this occurs when two teams trade players.
	I show that at every outcome in the agreeable core, every dissolved agreement is of one of these forms.
	
	The main contribution of this paper is a two-stage algorithm, the \textit{Propose-Exchange} algorithm (PE), which always produces an outcome in the agreeable core.
	The novel feature of the PE is how it leverages \Cref{pro:nofreeblockingpairs} to partition the participants according to how they dissolve their agreements.
	The PE uses a cascading process to determine which agreements can be dissolved unconditionally.
	Among participants who unconditionally dissolve their agreements, the PE then uses Deferred Acceptance algorithm (DA) from two-sided matching theory to assign a match in the core \citep{gale_college_1962}.
	For the participants who are bound by some agreements, the PE allows participants to trade their partners as in the Top Trading Cycles algorithm from the object allocation literature \citep{shapley_cores_1974}.
	
%
	The PE algorithm can replace existing algorithms in markets that suffer from a lack of participation.
	Two prominent applications of market design---\say{The Match} conducted by the National Resident Matching Program (NRMP) and open-enrollment programs---are only incomplete markets.
	In the NRMP, some residents are offered posts outside of The Match.
	Prospective residents are forced to decide between accepting an early offer and participating in The Match.
	In open-enrollment programs, students can simultaneously hold offers from both the school district and private schools, leading to market congestion.
	Both problems arise because some agents accept offers through a decentralized system.
	The PE algorithm resolves this problem by integrating the centralized market with the decentralized market.
	Both the NRMP and open-enrollment programs use a version of the DA or TTC, so the PE can implement either.
	Incorporating the decentralized market is also straightforward: simply take the outcome of the decentralized market as the set of binding agreements.
	Because the PE (and the agreeable core in general) leaves no agent worse off than they are with their pre-existing binding agreements, the PE encourages agents to participate who normally would not.
	Participating in the PE is a weakly dominant strategy for agents who have created binding agreements in the decentralized market.
	
	Second, the agreeable core provides an explainable solution in matching with minimum constraints.
	In some applications agents have minimum quotas that the designer must respect.
	In the context of matching residents to hospitals in Japan, the Japanese government seeks to guarantee that some regions receive a minimum number of residents \citep{kojima_designing_2018}.
	In public-school open-enrollment, the designer may have a preference for maintaining socioeconomic diversity at the schools; these are frequently written as minimum constraints assigned to different socioeconomic tiers (see\cite{fragiadakis_improving_2017} for discussion of these examples).
	In the United States Military Academy, cadets are assigned to positions subject to minimum manning constraints \citep{fragiadakis_improving_2017}.
	To accommodate minimum constraints in the agreeable core, the designer only needs to create artificial binding agreements.
	For example, if the designer adds an agreement between a hospital and a resident, then the hospital is guaranteed to match to (at least one) resident.
	The agreeable core provides a robust justification for the outcome: no other outcome could be reached without violating either agents' preferences or the minimum constraints.
	
	The results of this paper are grounded in the formalization of binding agreements as an \textit{initial match} denoted $\rema$.
	In this formalization, each participant is initially matched to at most one other participant.
	For concreteness, I label one side \textit{workers} and the other side \textit{firms}, and I refer to groups of agents as \textit{coalitions}.
	The initial match rules out any participant being \say{double-booked;} otherwise, one participant may be bound to two others, creating ambiguity as to which agreement has precedence.
	Similarly, the initial match only allows for binding agreements to be two-way.
	For example, this formulation requires that if a student is bound to a college, then that college is bound to this student.
	There are ways to allow for some types of one-way agreements, but these require modifying participants' preferences.
	In this formalization, agreeable coalitions of agents are those which only include one participant if and only if her initial match is also a member of the coalition.
	
	The agreeable core is an entirely different approach compared to previous research on matching with an initial match.
	Previous research emphasizes the properties of specific algorithms, such as strategy-proofness or efficiency \citep{combe_reallocation_2024, combe_design_2022, guillen_matching_2012, hafalir_efficient_2023, hamada_strategy-proof_2017}.
	In contrast, this paper first develops a solution concept and then constructs an algorithm.
	The advantage is that the outcome in the agreeable core can be justified \textit{without} relying upon the properties of the particular algorithm used to select it.
	Arguably, outcomes are easier to explain than algorithms.\footnote{For example, the statement \textit{your child is at highest ranked school you listed where she is above the school's cutoff} is easier for parents to understand compared to \textit{we used the only algorithm that satisfies non-wastefulness, population monotonicity, weak Maskin monotonicity, and mutual best}; see \cite{morrill_alternative_2013}.}
	The trade-off is that the PE algorithm does not have the same incentive properties that are often baked into existing algorithms; however, I show that the PE satisfies a weakened version of strategy-proofness.
	
	The Propose-Exchange algorithm is novel in its combination of both the Deferred Acceptance and Top Trading Cycles algorithms and has no similar predecessors.
	To the best of my knowledge, the only other algorithm capable of implementing both the DA and the TTC is the Stable Improvement Cycles algorithm of \cite{abdulkadirog_generalized_2011}, which operates in a very different fashion.
	My use of the DA to divide the matching problem into two is entirely new and has promising applications in other markets with an initial match.

	The rest of the paper proceeds as follows.
	In \Cref{sec:example} I motivate the agreeable core through an illustrative example.
	\Cref{sec:model} presents the model.
	In \Cref{sec:PEalg} I present the proof of my main result, the Propose-Exchange algorithm that always produces a match in the agreeable core.
	\Cref{sec:manipulability} contains several results related to the manipulability of the Propose-Exchange algorithm.
	I defer a discussion of the related literature until \Cref{sec:discussion}, where I discuss how the agreeable core presents an alternative understanding of several economic applications.

	\section{A Motivating Example}\label{sec:example}
	In this section I introduce an example to illustrate my main definitions.
	This example highlights the limitations of the standard solution concept---the core---in matching markets where an initial match exists (the pre-existing binding agreements).
	By way of reminder, a match is in the \textit{core} if no group of agents, known as a \textit{blocking coalition}, can strictly improve their outcomes by forming an alternative match solely among themselves.
	The core does not account for the binding agreements and fails to improve upon the initial match.
	
	\begin{example}\label{exa1}
		There are four workers ($\wor_1$, $\wor_2$, $\wor_3$, and $\wor_4$) and four firms ($\fir_A$, $\fir_B$, $\fir_C$, and $\fir_D$).
		All workers prefer $\fir_A$ to $\fir_B$ to $\fir_C$ to $\fir_D$, except worker $\wor_1$ who swaps the order of $\fir_A$ and $\fir_B$.
		All firms prefer $\wor_3$ to $\wor_1$ to $\wor_2$ to $\wor_4$, except for firm $\fir_A$ who swaps the order of $\wor_1$ and $\wor_2$.
		Worker $\wor_1$ and firm $\fir_A$ have a contract, as do worker $\wor_2$ and firm $\fir_B$, and also $\wor_4$ and firm $\fir_D$.
		Worker $\wor_3$ and firm $\fir_C$ do not have a contract.
		In the language of my model, these contracts are the initial match $\rema$ to which any agent can appeal (the set of pre-existing binding agreements which cannot be dissolved without the agreement both parties).
		Any outcome must guarantee that all agents are weakly better off than under the initial match.
		The initial match is essential because it limits the participants' flexibility in forming new contracts.
		The preferences are summarized in \Cref{fig:example}, with the initial match circled.
		\begin{figure}[t!]
			\begin{center}
				\begin{tikzpicture}
					
					\def\lshift{-3.92};
					\def\rshift{1};
					\def\colwid{0.933};
					\def\rowhei{1.14};
					\def\frow{-3.03};

					\def\mlw{0.5mm};
					
					\node (aligner) [] {};
					
					\draw[line width=\mlw] (\lshift,\frow+\rowhei+\rowhei+\rowhei) ellipse (0.3 and 0.3);
					\draw[line width=\mlw] (\lshift+\colwid,\frow+\rowhei+\rowhei+\rowhei) ellipse (0.3 and 0.3);
					\draw[line width=\mlw] (\lshift+\colwid+\colwid,\frow) ellipse (0.3 and 0.3);
					\draw[line width=\mlw] (\lshift+\colwid+\colwid+\colwid,\frow+\rowhei) ellipse (0.3 and 0.3);
					
					\draw[line width=\mlw] (\rshift,\frow+\rowhei+\rowhei) ellipse (0.3 and 0.3);
					\draw[line width=\mlw] (\rshift+\colwid,\frow+\rowhei+\rowhei) ellipse (0.3 and 0.3);
					\draw[line width=\mlw] (\rshift+\colwid+\colwid,\frow) ellipse (0.3 and 0.3);
					\draw[line width=\mlw] (\rshift+\colwid+\colwid+\colwid,\frow+\rowhei) ellipse (0.3 and 0.3);

					\node[pref] (players) [left = 2.5mm of aligner] {
						\begin{tabular}{C | C | C | C}
							\wor_1 & \wor_2 & \wor_3 & \wor_4   \\
							\hline  \hline 
							\fir_B & \fir_A & \fir_A & \fir_A \\
							\fir_A & \fir_B & \fir_B & \fir_B \\
							\fir_C & \fir_C & \fir_C & \fir_C \\
							\fir_D & \fir_D & \fir_D & \fir_D \\
							\unmat & \unmat & \unmat & \unmat \\
						\end{tabular}
					};	
					
					\node[pref] (coaches) [right = 2.5mm of aligner]{
						\begin{tabular}{C | C | C | C}
							\fir_A & \fir_B & \fir_C & \fir_D   \\
							\hline  \hline 
							\wor_3 & \wor_3 & \wor_3 & \wor_3 \\
							\wor_2 & \wor_1 & \wor_1 & \wor_1 \\
							\wor_1 & \wor_2 & \wor_2 & \wor_2 \\
							\wor_4 & \wor_4 & \wor_4 & \wor_4 \\
							\unmat & \unmat & \unmat & \unmat \\
						\end{tabular}
					};	
					
					\node[pref] (explanation) [below = 4 cm of aligner] {
						\begin{tikzpicture}
							\draw[line width = \mlw] (0,-0.13) ellipse (0.3 and 0.3);
							\node (alignerTwo) [] {};
							\node  [right = 0.25cm of alignerTwo, align=center] {= initial match $\rema$} ;
						\end{tikzpicture}
					};

				\end{tikzpicture}
				\caption{Preferences in \Cref{exa1}, listed from most to least preferred, with $\unmat$ indicating a preference for remaining unmatched; for example, this first column reads $\wor_1$ strictly prefers $\fir_B$ to $\fir_A$ to $\fir_C$ to $\fir_D$ to being unmatched.
				The circles indicate the initial match $\rema$; for example, $\wor_1$ is initially matched (that is, under contract) to $\fir_A$.
				}
				\label{fig:example}
			\end{center}
		\end{figure}

		Consider the core of this market.
		At any core outcome, worker $\wor_3$ must be matched to firm $\fir_A$ because they mutually rank each other as best; otherwise, the coalition of $\{\wor_3, \fir_A\}$ blocks the match.
		However, this implies that either $\wor_1$ or $\wor_2$ is \textit{not} matched to $\fir_A$ or $\fir_B$ and thus is worse-off than under $\rema$.
		This a violation of the initial match $\rema$.
		Therefore there is no match in the core that improves upon the initial match.
	\end{example}
	
	The failure of the core to provide a match that improves upon the initial match arises from the blocking coalitions allowed.
	Allowing every subset of agents to block is too permissive and ignores the initial match $\rema$.
	The core is usually justified by arguing that agents in a blocking coalition could form contracts among only themselves, which allows for coalitions such as $\{\wor_3, \fir_A\}$.
	
	Although the core is unsatisfactory, there are two Pareto improvements of the initial match, indicated in \Cref{fig:examplePE}.
	In both, $\wor_1$ is matched to $\fir_B$ and $\wor_2$ is matched to $\fir_A$.
	The first Pareto improvement, labeled $\bar\mat$, matches $\wor_3$ to $\fir_C$ and $\wor_4$ to $\fir_D$.
	Every blocking coalition contains $\{\wor_3, \fir_A\}$ or $\{\wor_3, \fir_B\}$ because no firm wants $\wor_4$ more than its partner in $\bar \mat$, and both $\wor_1$ and $\wor_2$ are matched to their most-preferred partners.
	Consider $\{\wor_3, \fir_A\}$ first.
	Both $\wor_3$ and $\fir_A$ prefer each other to the proposed match $\bar\mat$.
	But would worker $\wor_1$ release $\fir_A$ from her contract to go and match to $\wor_3$?
	Worker $\wor_1$'s release of $\fir_A$ is contingent upon $\wor_1$ signing a contract with $\fir_B$, but $\fir_B$ has the same constraint: $\wor_2$ must be induced to release $\fir_B$, which cannot be done without guaranteeing that $\wor_2$ matches to $\fir_A$.
	But the premise of this blocking coalition is that $\fir_A$ will match to $\wor_3$ instead of $\wor_2$, so $\wor_2$ would not consent to this plan.
	In the language of my model, the coalition $\{\wor_3, \fir_A\}$ is not agreeable and thus cannot renegotiate its contracts; a similar argument follows for the coalition $\{\wor_3, \fir_B\}$.
	
	\begin{figure}
		\begin{center}
			\vspace{0.5cm}
			\begin{subfigure}{0.9\textwidth}
				\centering
				\begin{tikzpicture}
					
					\def\lshift{-3.92};
					\def\rshift{1};
					\def\colwid{0.933};
					\def\rowhei{1.14};
					\def\frow{-3.03};

					\def\mlw{0.5mm};
					
					\node (aligner) [] {};
					
					\draw[line width=\mlw, dashed, dash pattern=on 8pt off 3pt] (\lshift,\frow+\rowhei+\rowhei+\rowhei+\rowhei) ellipse (0.3 and 0.3);
					\draw[line width=\mlw, dashed, dash pattern=on 8pt off 3pt] (\lshift+\colwid,\frow+\rowhei+\rowhei+\rowhei+\rowhei) ellipse (0.3 and 0.3);
					\draw[line width=\mlw, dashed, dash pattern=on 8pt off 3pt] (\lshift+\colwid+\colwid,\frow+\rowhei+\rowhei) ellipse (0.3 and 0.3);
					\draw[line width=\mlw, dashed, dash pattern=on 8pt off 3pt] (\lshift+\colwid+\colwid+\colwid,\frow+\rowhei) ellipse (0.3 and 0.3);
					
					\draw[line width=\mlw, dashed, dash pattern=on 8pt off 3pt] (\rshift,\frow+\rowhei+\rowhei+\rowhei) ellipse (0.3 and 0.3);
					\draw[line width=\mlw, dashed, dash pattern=on 8pt off 3pt] (\rshift+\colwid,\frow+\rowhei+\rowhei+\rowhei) ellipse (0.3 and 0.3);
					\draw[line width=\mlw, dashed, dash pattern=on 8pt off 3pt] (\rshift+\colwid+\colwid,\frow+\rowhei+\rowhei+\rowhei+\rowhei) ellipse (0.3 and 0.3);
					\draw[line width=\mlw, dashed, dash pattern=on 8pt off 3pt] (\rshift+\colwid+\colwid+\colwid,\frow+\rowhei) ellipse (0.3 and 0.3);

					\node[pref] (players) [left = 2.5mm of aligner] {
						\begin{tabular}{C | C | C | C}
							\wor_1 & \wor_2 & \wor_3 & \wor_4   \\
							\hline  \hline 
							\fir_B & \fir_A & \fir_A & \fir_A \\
							\fir_A & \fir_B & \fir_B & \fir_B \\
							\fir_C & \fir_C & \fir_C & \fir_C \\
							\fir_D & \fir_D & \fir_D & \fir_D \\
							\unmat & \unmat & \unmat & \unmat \\
						\end{tabular}
					};	
					
					\node[pref] (coaches) [right = 2.5mm of aligner]{
						\begin{tabular}{C | C | C | C}
							\fir_A & \fir_B & \fir_C & \fir_D   \\
							\hline  \hline 
							\wor_3 & \wor_3 & \wor_3 & \wor_3 \\
							\wor_2 & \wor_1 & \wor_1 & \wor_1 \\
							\wor_1 & \wor_2 & \wor_2 & \wor_2 \\
							\wor_4 & \wor_4 & \wor_4 & \wor_4 \\
							\unmat & \unmat & \unmat & \unmat \\
						\end{tabular}
					};	
					
				\end{tikzpicture}
				\caption{First Pareto Improvement, $\bar\mat$}
			\end{subfigure}
			\begin{subfigure}{0.9\textwidth}
				\centering
				\begin{tikzpicture}
					
					\def\lshift{-3.92};
					\def\rshift{1};
					\def\colwid{0.933};
					\def\rowhei{1.14};
					\def\frow{-3.03};
					\def\mlw{0.5mm};
					\node (aligner) [] {};
					
					\draw[line width=\mlw, dashed, dash pattern=on 8pt off 3pt] (\lshift,\frow+\rowhei+\rowhei+\rowhei+\rowhei) ellipse (0.3 and 0.3);
					\draw[line width=\mlw, dashed, dash pattern=on 8pt off 3pt] (\lshift+\colwid,\frow+\rowhei+\rowhei+\rowhei+\rowhei) ellipse (0.3 and 0.3);
					\draw[line width=\mlw, dashed, dash pattern=on 8pt off 3pt] (\lshift+\colwid+\colwid,\frow+\rowhei) ellipse (0.3 and 0.3);
					\draw[line width=\mlw, dashed, dash pattern=on 8pt off 3pt] (\lshift+\colwid+\colwid+\colwid,\frow+\rowhei+\rowhei) ellipse (0.3 and 0.3);
					
					\draw[line width=\mlw, dashed, dash pattern=on 8pt off 3pt] (\rshift,\frow+\rowhei+\rowhei+\rowhei) ellipse (0.3 and 0.3);
					\draw[line width=\mlw, dashed, dash pattern=on 8pt off 3pt] (\rshift+\colwid,\frow+\rowhei+\rowhei+\rowhei) ellipse (0.3 and 0.3);
					\draw[line width=\mlw, dashed, dash pattern=on 8pt off 3pt] (\rshift+\colwid+\colwid,\frow+\rowhei) ellipse (0.3 and 0.3);
					\draw[line width=\mlw, dashed, dash pattern=on 8pt off 3pt] (\rshift+\colwid+\colwid+\colwid,\frow+\rowhei+\rowhei+\rowhei+\rowhei) ellipse (0.3 and 0.3);

					\node[pref] (players) [left = 2.5mm of aligner] {
						\begin{tabular}{C | C | C | C}
							\wor_1 & \wor_2 & \wor_3 & \wor_4   \\
							\hline  \hline 
							\fir_B & \fir_A & \fir_A & \fir_A \\
							\fir_A & \fir_B & \fir_B & \fir_B \\
							\fir_C & \fir_C & \fir_C & \fir_C \\
							\fir_D & \fir_D & \fir_D & \fir_D \\
							\unmat & \unmat & \unmat & \unmat \\
						\end{tabular}
					};	
					
					\node[pref] (coaches) [right = 2.5mm of aligner]{
						\begin{tabular}{C | C | C | C}
							\fir_A & \fir_B & \fir_C & \fir_D   \\
							\hline  \hline 
							\wor_3 & \wor_3 & \wor_3 & \wor_3 \\
							\wor_2 & \wor_1 & \wor_1 & \wor_1 \\
							\wor_1 & \wor_2 & \wor_2 & \wor_2 \\
							\wor_4 & \wor_4 & \wor_4 & \wor_4 \\
							\unmat & \unmat & \unmat & \unmat \\
						\end{tabular}
					};	
					
				\end{tikzpicture}
			\caption{Second Pareto improvement, $\dot\mat$}
			\end{subfigure}
			\caption{Pareto improvements of $\rema$.
			\\
			\\
			Note: throughout I use solid lines to denote the initial match and dashed lines to denote possible other matches}
			\label{fig:examplePE}
		\end{center}
	\end{figure}
	
	The story is different for the other Pareto improvement, labeled $\dot\mat$.
	In this match, $\wor_3$ is matched to $\fir_D$ and $\wor_4$ to $\fir_C$.
	Here, the coalition $\{\wor_3, \fir_C\}$ blocks the match.
	Because neither $\wor_3$ nor $\fir_C$ is under contract, no agent can prevent them from renegotiating a new match.
	This coalition qualifies as agreeable.
	The agreeable core intuitively selects the first match but not the second.
	
	To illustrate the mechanics of the Propose-Exchange algorithm, the following steps outline how tentative matches are proposed and refined until no further improvements can be made.
	To compute the first Pareto improvement ( $\bar\mat$ ), I leverage the Propose-Exchange algorithm.
	In this example the Propose stage takes worker $\wor_3$, who is initially unmatched, declares him \say{active.}
	The Propose stage allows active workers to make proposals to their favorite firm which has not rejected them so far.
	In the first step, both $\wor_3$ proposes to $\fir_A$, who tentatively accepts him.
	Because $\fir_A$ receives a proposal she prefers to her initial worker $\wor_1$, $\wor_1$ is now declared \say{active} as well.
	This guarantees that every firm weakly prefers the outcome of the Propose stage to the initial match $\rema$ because she only releases her initial worker once she has a more-preferred tentative match.
	In the second step, $\wor_1$ proposes to $\fir_B$, who tentatively accepts $\wor_1$.
	Again, because $\fir_B$ receives a proposal she prefers to her initial worker $\wor_2$, $\wor_2$ is now declared \say{active.}
	In the third step, $\wor_2$ proposes to $\fir_A$, who rejects him.
	In the fourth step, $\wor_2$ proposes to his initial firm $\fir_B$; the Propose stage requires that $\fir_B$ accept $\wor_2$'s proposal and reject $\wor_1$.
	This guarantees that every worker weakly prefers the outcome of the Propose stage to the initial match $\rema$.
	Continuing in this fashion, $\wor_1$ proposes to his initial firm $\fir_1$, which causes $\wor_3$ to be rejected.
	Worker $\wor_3$ proposes to $\fir_B$ and is rejected, and then to $\fir_C$ and is tentatively accepted.
	These steps are visualized in \Cref{fig:examplePropose}.
	The outcome of the Propose stage is denoted $\matProp$ and is depicted in panel (h).
	However, an agreeable blocking coalition still exists because workers $\wor_1$ and $\wor_2$ would prefer to exchange their initial firms $\fir_A$ and $\fir_B$, and these firms also would prefer the exchange.

	\begin{figure}
		\centering
		\begin{subfigure}{0.225\textwidth}
			\centering
			\begin{tikzpicture}
				\def\heightdist{2.30940107676cm}
				\def\mlw{0.5mm}
				\def\rotationangle{30}
				\def\minsizenode{1cm}
				\def\vertdist{2cm}
				
				\path (0,0) 
				node[circle, minimum size = \minsizenode](fa) {\large$\fir_A$}
				to ++(180:\vertdist)
				node[circle, minimum size = \minsizenode](w1){\large$\wor_1$}
				to ++(0-\rotationangle:\heightdist)
				node[circle, minimum size = \minsizenode](fb) {\large$\fir_B$}
				to ++(180:\vertdist)
				node[circle, minimum size = \minsizenode](w2) {\large$\wor_2$}
				to ++(0-\rotationangle:\heightdist)
				node[circle, minimum size = \minsizenode] (fc) {\large$\fir_C$}
				to ++(180:\vertdist)
				node[circle, minimum size = \minsizenode] (w3) {\large$\wor_3$}
				to ++(0-\rotationangle:\heightdist)
				node[circle, minimum size = \minsizenode] (fd) {\large$\fir_D$}
				to ++(180:\vertdist)
				node[circle, minimum size = \minsizenode] (w4) {\large$\wor_4$};
				
				\draw[->, line width = \mlw, dashed, dash pattern=on 8pt off 3pt] (w3) to (fa);
				
			\end{tikzpicture}
			\caption{Step 1}
		\end{subfigure}
		\begin{subfigure}{0.225\textwidth}
			\centering
			\begin{tikzpicture}
				\def\heightdist{2.30940107676cm}
				\def\mlw{0.5mm}
				\def\rotationangle{30}
				\def\minsizenode{1cm}
				\def\vertdist{2cm}

				\path (0,0) 
				node[circle, minimum size = \minsizenode](fa) {\large$\fir_A$}
				to ++(180:\vertdist)
				node[circle, minimum size = \minsizenode](w1){\large$\wor_1$}
				to ++(0-\rotationangle:\heightdist)
				node[circle, minimum size = \minsizenode](fb) {\large$\fir_B$}
				to ++(180:\vertdist)
				node[circle, minimum size = \minsizenode](w2) {\large$\wor_2$}
				to ++(0-\rotationangle:\heightdist)
				node[circle, minimum size = \minsizenode] (fc) {\large$\fir_C$}
				to ++(180:\vertdist)
				node[circle, minimum size = \minsizenode] (w3) {\large$\wor_3$}
				to ++(0-\rotationangle:\heightdist)
				node[circle, minimum size = \minsizenode] (fd) {\large$\fir_D$}
				to ++(180:\vertdist)
				node[circle, minimum size = \minsizenode] (w4) {\large$\wor_4$};

				\draw[->, line width = \mlw, dashed, dash pattern=on 8pt off 3pt] (w3) to (fa);
				\draw[->, line width = \mlw, dashed, dash pattern=on 8pt off 3pt] (w1) to (fb);
				
			\end{tikzpicture}
			\caption{Step 2}
		\end{subfigure}
		\begin{subfigure}{0.225\textwidth}
			\centering
			\begin{tikzpicture}
				\def\heightdist{2.30940107676cm}
				\def\mlw{0.5mm}
				\def\rotationangle{30}
				\def\minsizenode{1cm}
				\def\vertdist{2cm}
				
				\path (0,0) 
				node[circle, minimum size = \minsizenode](fa) {\large$\fir_A$}
				to ++(180:\vertdist)
				node[circle, minimum size = \minsizenode](w1){\large$\wor_1$}
				to ++(0-\rotationangle:\heightdist)
				node[circle, minimum size = \minsizenode](fb) {\large$\fir_B$}
				to ++(180:\vertdist)
				node[circle, minimum size = \minsizenode](w2) {\large$\wor_2$}
				to ++(0-\rotationangle:\heightdist)
				node[circle, minimum size = \minsizenode] (fc) {\large$\fir_C$}
				to ++(180:\vertdist)
				node[circle, minimum size = \minsizenode] (w3) {\large$\wor_3$}
				to ++(0-\rotationangle:\heightdist)
				node[circle, minimum size = \minsizenode] (fd) {\large$\fir_D$}
				to ++(180:\vertdist)
				node[circle, minimum size = \minsizenode] (w4) {\large$\wor_4$};
				
				\draw[->, line width = \mlw, dashed, dash pattern=on 8pt off 3pt] (w3) to (fa);
				\draw[->, line width = \mlw, dashed, dash pattern=on 8pt off 3pt] (w1) to (fb);
				\draw[->, line width = \mlw, dashed, dash pattern=on 8pt off 3pt] (w2) to (fa);
			\end{tikzpicture}
			\caption{Step 3}
		\end{subfigure}
		\vspace{1cm}
		\begin{subfigure}{0.225\textwidth}
			\centering
			\begin{tikzpicture}
				\def\heightdist{2.30940107676cm}
				\def\mlw{0.5mm}
				\def\rotationangle{30}
				\def\minsizenode{1cm}
				\def\vertdist{2cm}
				
				\path (0,0) 
				node[circle, minimum size = \minsizenode](fa) {\large$\fir_A$}
				to ++(180:\vertdist)
				node[circle, minimum size = \minsizenode](w1){\large$\wor_1$}
				to ++(0-\rotationangle:\heightdist)
				node[circle, minimum size = \minsizenode](fb) {\large$\fir_B$}
				to ++(180:\vertdist)
				node[circle, minimum size = \minsizenode](w2) {\large$\wor_2$}
				to ++(0-\rotationangle:\heightdist)
				node[circle, minimum size = \minsizenode] (fc) {\large$\fir_C$}
				to ++(180:\vertdist)
				node[circle, minimum size = \minsizenode] (w3) {\large$\wor_3$}
				to ++(0-\rotationangle:\heightdist)
				node[circle, minimum size = \minsizenode] (fd) {\large$\fir_D$}
				to ++(180:\vertdist)
				node[circle, minimum size = \minsizenode] (w4) {\large$\wor_4$};
				
				\draw[->, line width = \mlw, dashed, dash pattern=on 8pt off 3pt] (w3) to (fa);
				\draw[->, line width = \mlw, dashed, dash pattern=on 8pt off 3pt] (w1) to (fb);
				\draw[->, line width = \mlw, dashed, dash pattern=on 8pt off 3pt, color = lightgray] (w2) to (fa);
				\draw[->, line width = \mlw, dashed, dash pattern=on 8pt off 3pt] (w2) to (fb);
				
			\end{tikzpicture}
			\caption{Step 4}
		\end{subfigure}
		\begin{subfigure}{0.225\textwidth}
			\centering
			\begin{tikzpicture}
				\def\heightdist{2.30940107676cm}
				\def\mlw{0.5mm}
				\def\rotationangle{30}
				\def\minsizenode{1cm}
				\def\vertdist{2cm}
				
				\path (0,0) 
				node[circle, minimum size = \minsizenode](fa) {\large$\fir_A$}
				to ++(180:\vertdist)
				node[circle, minimum size = \minsizenode](w1){\large$\wor_1$}
				to ++(0-\rotationangle:\heightdist)
				node[circle, minimum size = \minsizenode](fb) {\large$\fir_B$}
				to ++(180:\vertdist)
				node[circle, minimum size = \minsizenode](w2) {\large$\wor_2$}
				to ++(0-\rotationangle:\heightdist)
				node[circle, minimum size = \minsizenode] (fc) {\large$\fir_C$}
				to ++(180:\vertdist)
				node[circle, minimum size = \minsizenode] (w3) {\large$\wor_3$}
				to ++(0-\rotationangle:\heightdist)
				node[circle, minimum size = \minsizenode] (fd) {\large$\fir_D$}
				to ++(180:\vertdist)
				node[circle, minimum size = \minsizenode] (w4) {\large$\wor_4$};
				
				\draw[->, line width = \mlw, dashed, dash pattern=on 8pt off 3pt] (w3) to (fa);
				\draw[->, line width = \mlw, dashed, dash pattern=on 8pt off 3pt, color = lightgray] (w1) to (fb);
				\draw[->, line width = \mlw, dashed, dash pattern=on 8pt off 3pt, color = lightgray] (w2) to (fa);
				\draw[->, line width = \mlw, dashed, dash pattern=on 8pt off 3pt] (w2) to (fb);
				\draw[->, line width = \mlw, dashed, dash pattern=on 8pt off 3pt] (w1) to (fa);
				
			\end{tikzpicture}
			\caption{Step 5}
		\end{subfigure}
	\vspace{1cm}
		\begin{subfigure}{0.225\textwidth}
			\centering
			\begin{tikzpicture}
				\def\heightdist{2.30940107676cm}
				\def\mlw{0.5mm}
				\def\rotationangle{30}
				\def\minsizenode{1cm}
				\def\vertdist{2cm}
				
				\path (0,0) 
				node[circle, minimum size = \minsizenode](fa) {\large$\fir_A$}
				to ++(180:\vertdist)
				node[circle, minimum size = \minsizenode](w1){\large$\wor_1$}
				to ++(0-\rotationangle:\heightdist)
				node[circle, minimum size = \minsizenode](fb) {\large$\fir_B$}
				to ++(180:\vertdist)
				node[circle, minimum size = \minsizenode](w2) {\large$\wor_2$}
				to ++(0-\rotationangle:\heightdist)
				node[circle, minimum size = \minsizenode] (fc) {\large$\fir_C$}
				to ++(180:\vertdist)
				node[circle, minimum size = \minsizenode] (w3) {\large$\wor_3$}
				to ++(0-\rotationangle:\heightdist)
				node[circle, minimum size = \minsizenode] (fd) {\large$\fir_D$}
				to ++(180:\vertdist)
				node[circle, minimum size = \minsizenode] (w4) {\large$\wor_4$};
				
				\draw[->, line width = \mlw, dashed, dash pattern=on 8pt off 3pt, color = lightgray] (w3) to (fa);
				\draw[->, line width = \mlw, dashed, dash pattern=on 8pt off 3pt, color = lightgray] (w1) to (fb);
				\draw[->, line width = \mlw, dashed, dash pattern=on 8pt off 3pt, color = lightgray] (w2) to (fa);
				\draw[->, line width = \mlw, dashed, dash pattern=on 8pt off 3pt] (w2) to (fb);
				\draw[->, line width = \mlw, dashed, dash pattern=on 8pt off 3pt] (w1) to (fa);
				\draw[->, line width = \mlw, dashed, dash pattern=on 8pt off 3pt] (w3) to (fb);
				
			\end{tikzpicture}
			\caption{Step 6}
		\end{subfigure}
		\begin{subfigure}{0.225\textwidth}
			\centering
			\begin{tikzpicture}
				\def\heightdist{2.30940107676cm}
				\def\mlw{0.5mm}
				\def\rotationangle{30}
				\def\minsizenode{1cm}
				\def\vertdist{2cm}
				
				\path (0,0) 
				node[circle, minimum size = \minsizenode](fa) {\large$\fir_A$}
				to ++(180:\vertdist)
				node[circle, minimum size = \minsizenode](w1){\large$\wor_1$}
				to ++(0-\rotationangle:\heightdist)
				node[circle, minimum size = \minsizenode](fb) {\large$\fir_B$}
				to ++(180:\vertdist)
				node[circle, minimum size = \minsizenode](w2) {\large$\wor_2$}
				to ++(0-\rotationangle:\heightdist)
				node[circle, minimum size = \minsizenode] (fc) {\large$\fir_C$}
				to ++(180:\vertdist)
				node[circle, minimum size = \minsizenode] (w3) {\large$\wor_3$}
				to ++(0-\rotationangle:\heightdist)
				node[circle, minimum size = \minsizenode] (fd) {\large$\fir_D$}
				to ++(180:\vertdist)
				node[circle, minimum size = \minsizenode] (w4) {\large$\wor_4$};
				
				\draw[->, line width = \mlw, dashed, dash pattern=on 8pt off 3pt, color = lightgray] (w3) to (fa);
				\draw[->, line width = \mlw, dashed, dash pattern=on 8pt off 3pt, color = lightgray] (w1) to (fb);
				\draw[->, line width = \mlw, dashed, dash pattern=on 8pt off 3pt, color = lightgray] (w2) to (fa);
				\draw[->, line width = \mlw, dashed, dash pattern=on 8pt off 3pt] (w2) to (fb);
				\draw[->, line width = \mlw, dashed, dash pattern=on 8pt off 3pt] (w1) to (fa);
				\draw[->, line width = \mlw, dashed, dash pattern=on 8pt off 3pt, color = lightgray] (w3) to (fb);
				\draw[->, line width = \mlw, dashed, dash pattern=on 8pt off 3pt] (w3) to (fc);
				
			\end{tikzpicture}
			\caption{Step 7}
		\end{subfigure}
		\begin{subfigure}{0.225\textwidth}
			\centering
			\begin{tikzpicture}
				\def\heightdist{2.30940107676cm}
				\def\mlw{0.5mm}
				\def\rotationangle{30}
				\def\minsizenode{1cm}
				\def\vertdist{2cm}
				
				\path (0,0) 
				node[circle, minimum size = \minsizenode](fa) {\large$\fir_A$}
				to ++(180:\vertdist)
				node[circle, minimum size = \minsizenode](w1){\large$\wor_1$}
				to ++(0-\rotationangle:\heightdist)
				node[circle, minimum size = \minsizenode](fb) {\large$\fir_B$}
				to ++(180:\vertdist)
				node[circle, minimum size = \minsizenode](w2) {\large$\wor_2$}
				to ++(0-\rotationangle:\heightdist)
				node[circle, minimum size = \minsizenode] (fc) {\large$\fir_C$}
				to ++(180:\vertdist)
				node[circle, minimum size = \minsizenode] (w3) {\large$\wor_3$}
				to ++(0-\rotationangle:\heightdist)
				node[circle, minimum size = \minsizenode] (fd) {\large$\fir_D$}
				to ++(180:\vertdist)
				node[circle, minimum size = \minsizenode] (w4) {\large$\wor_4$};
				
				\draw[->, line width = \mlw, dashed, dash pattern=on 8pt off 3pt] (w2) to (fb);
				\draw[->, line width = \mlw, dashed, dash pattern=on 8pt off 3pt] (w1) to (fa);
				\draw[->, line width = \mlw, dashed, dash pattern=on 8pt off 3pt] (w3) to (fc);
				\draw[->, line width = \mlw, dashed, dash pattern=on 8pt off 3pt] (w4) to (fd);
				
			\end{tikzpicture}
			\caption{Output $\matProp$}
		\end{subfigure}
		
		\caption{A visualization of the steps of the Propose stage.
		The black dashed lines indicate active proposals, and the light gray dashed lines indicate rejected proposals.
		Note that $\wor_1$ and $\wor_2$ only make proposals \textit{after} $\fir_A$ and $\fir_B$ have each received a proposal, respectively.
		Worker $\wor_4$ never makes a proposal because $\fir_D$ never receives a proposal.}
		\label{fig:examplePropose}
	\end{figure}
	
	The Exchange stage modifies the outcome of the Propose stage to remove the agreeable blocking coalition $\{\wor_1, \wor_2, \fir_A, \fir_B\}$ of $\matProp$.
	In the Exchange stage, workers $\wor_1$, $\wor_2$, and $\wor_4$ and firms $\fir_A$, $\fir_B$, and $\fir_D$ are active because they have not improved their initial match through the Propose stage, while $\wor_3$ and $\fir_3$ are inactive.
	In the first step, $\wor_1$ points to $\fir_B$ because $\fir_B$ is $\wor_1$'s most-preferred firm.
	Again, $\wor_2$ points to $\fir_A$ because $\fir_A$ is $\wor_2$'s most-preferred active firm.
	Worker $\wor_4$ would point to either $\fir_A$ or $\fir_B$, but neither prefer him to their initial match, so $\wor_4$ is only allowed to point to his own firm $\fir_D$.
	This will guarantee that every firm weakly prefers the outcome of the Exchange stage to the initial match $\rema$.
	Each active firm points to her initial worker.
	The cycle $\wor_1 \rightarrow \fir_B \rightarrow \wor_2 \rightarrow \fir_A \rightarrow \wor_1$ forms, and $\wor_1$ and $\wor_2$ are both permanently matched to the firms they point at.
	The cycle $\wor_4 \rightarrow \fir_D$ also forms, and $\wor_4$ is permanently matched to $\fir_D$.
	The output is $\matExch$, which is depicted in \Cref{fig:exampleExch}.
	As expected, $\matExch$ is the first Pareto improvement that was discussed, the unique element of the agreeable core.
	\begin{figure}
		\centering
		\begin{subfigure}{0.3\textwidth}
			\centering
			\begin{tikzpicture}
				\def\heightdist{2.30940107676cm}
				\def\mlw{0.5mm}
				\def\rotationangle{30}
				\def\minsizenode{1cm}
				\def\vertdist{2cm}
				
				\path (0,0) 
				node[circle, minimum size = \minsizenode](fa) {\large$\fir_A$}
				to ++(180:\vertdist)
				node[circle, minimum size = \minsizenode](w1){\large$\wor_1$}
				to ++(0-\rotationangle:\heightdist)
				node[circle, minimum size = \minsizenode](fb) {\large$\fir_B$}
				to ++(180:\vertdist)
				node[circle, minimum size = \minsizenode](w2) {\large$\wor_2$}
				to ++(0-\rotationangle:\heightdist)
				node[circle, minimum size = \minsizenode] (fc) {\textcolor{lightgray}{\large$\fir_C$}}
				to ++(180:\vertdist)
				node[circle, minimum size = \minsizenode] (w3) {\textcolor{lightgray}{\large$\wor_3$}}
				to ++(0-\rotationangle:\heightdist)
				node[circle, minimum size = \minsizenode] (fd) {\large$\fir_D$}
				to ++(180:\vertdist)
				node[circle, minimum size = \minsizenode] (w4) {\large$\wor_4$};
				
				\draw[->, line width = \mlw, color=lightgray] (fc) edge [in = -30, out = 30, out looseness = 4, in looseness = 4] (fc);
				\draw[<-, line width = \mlw, color=lightgray] (w3) edge [in = 150, out = 210, out looseness = 4, in looseness = 4] (w3);
				
				\draw[->, line width = \mlw, dashed, dash pattern=on 8pt off 3pt, bend left = 30] (w2) to (fb);
				\draw[->, line width = \mlw, dashed, dash pattern=on 8pt off 3pt, bend left = 30] (w1) to (fa);
				\draw[->, line width = \mlw, dashed, dash pattern=on 8pt off 3pt, color=lightgray] (w3) to (fc);
				\draw[->, line width = \mlw, dashed, dash pattern=on 8pt off 3pt, bend left = 30] (w4) to (fd);
				
				\draw[<-, line width = \mlw] (w2) to (fb);
				\draw[<-, line width = \mlw] (w1) to (fa);
				\draw[<-, line width = \mlw] (w4) to (fd);
				
			\end{tikzpicture}
			\caption{$\rema$ (solid), $\matProp$ (dashed)}
		\end{subfigure}
		\begin{subfigure}{0.3\textwidth}
			\centering
			\begin{tikzpicture}
				\def\heightdist{2.30940107676cm}
				\def\mlw{0.5mm}
				\def\rotationangle{30}
				\def\minsizenode{1cm}
				\def\vertdist{2cm}
				
				\path (0,0) 
				node[circle, minimum size = \minsizenode](fa) {\large$\fir_A$}
				to ++(180:\vertdist)
				node[circle, minimum size = \minsizenode](w1){\large$\wor_1$}
				to ++(0-\rotationangle:\heightdist)
				node[circle, minimum size = \minsizenode](fb) {\large$\fir_B$}
				to ++(180:\vertdist)
				node[circle, minimum size = \minsizenode](w2) {\large$\wor_2$}
				to ++(0-\rotationangle:\heightdist)
				node[circle, minimum size = \minsizenode] (fc) {\textcolor{lightgray}{\large$\fir_C$}}
				to ++(180:\vertdist)
				node[circle, minimum size = \minsizenode] (w3) {\textcolor{lightgray}{\large$\wor_3$}}
				to ++(0-\rotationangle:\heightdist)
				node[circle, minimum size = \minsizenode] (fd) {\large$\fir_D$}
				to ++(180:\vertdist)
				node[circle, minimum size = \minsizenode] (w4) {\large$\wor_4$};
				
				\draw[->, line width = \mlw, color=lightgray] (fc) edge [in = -30, out = 30, out looseness = 4, in looseness = 4] (fc);
				\draw[<-, line width = \mlw, color=lightgray] (w3) edge [in = 150, out = 210, out looseness = 4, in looseness = 4] (w3);
				
				\draw[->, line width = \mlw, dashed, dash pattern=on 8pt off 3pt] (w1) to (fb);
				\draw[->, line width = \mlw, dashed, dash pattern=on 8pt off 3pt] (w2) to (fa);
				\draw[->, line width = \mlw, dashed, dash pattern=on 8pt off 3pt, color=lightgray] (w3) to (fc);
				\draw[->, line width = \mlw, dashed, dash pattern=on 8pt off 3pt, bend left = 30] (w4) to (fd);
				
				\draw[<-, line width = \mlw] (w2) to (fb);
				\draw[<-, line width = \mlw] (w1) to (fa);
				\draw[<-, line width = \mlw] (w4) to (fd);
				
			\end{tikzpicture}
			\caption{Step 1}
		\end{subfigure}
		\begin{subfigure}{0.3\textwidth}
			\centering
			\begin{tikzpicture}
				\def\heightdist{2.30940107676cm}
				\def\mlw{0.5mm}
				\def\rotationangle{30}
				\def\minsizenode{1cm}
				\def\vertdist{2cm}
				
				\path (0,0) 
				node[circle, minimum size = \minsizenode](fa) {\large$\fir_A$}
				to ++(180:\vertdist)
				node[circle, minimum size = \minsizenode](w1){\large$\wor_1$}
				to ++(0-\rotationangle:\heightdist)
				node[circle, minimum size = \minsizenode](fb) {\large$\fir_B$}
				to ++(180:\vertdist)
				node[circle, minimum size = \minsizenode](w2) {\large$\wor_2$}
				to ++(0-\rotationangle:\heightdist)
				node[circle, minimum size = \minsizenode] (fc) {\large$\fir_C$}
				to ++(180:\vertdist)
				node[circle, minimum size = \minsizenode] (w3) {\large$\wor_3$}
				to ++(0-\rotationangle:\heightdist)
				node[circle, minimum size = \minsizenode] (fd) {\large$\fir_D$}
				to ++(180:\vertdist)
				node[circle, minimum size = \minsizenode] (w4) {\large$\wor_4$};
				
				\draw[->, line width = \mlw, dashed, dash pattern=on 8pt off 3pt] (w1) to (fb);
				\draw[->, line width = \mlw, dashed, dash pattern=on 8pt off 3pt] (w2) to (fa);
				\draw[->, line width = \mlw, dashed, dash pattern=on 8pt off 3pt] (w3) to (fc);
				\draw[->, line width = \mlw, dashed, dash pattern=on 8pt off 3pt] (w4) to (fd);

			\end{tikzpicture}
			\caption{Output $\matExch$}
		\end{subfigure}
		\caption{A visualization of the steps of the Exchange stage.
		Agents $\wor_3$ and $\fir_C$ are excluded because their $\rema$- and $\matProp$-partners differ.
		There is only one step because each active agent is in a cycle in Step 1.}
		\label{fig:exampleExch}
	\end{figure}
	
	The Propose-Exchange algorithm involves both a \say{free market} phase in the Propose stage (but with participation restrictions on $\wor_1$, $\wor_2$, and $\wor_4$) as well as a \say{trading} phase in which could $\wor_1$, $\wor_2$, and $\wor_4$ exchanged their firm.
	The match at every stage of the Propose-Exchange algorithm is an improvement of the initial match.

	\section{Model}\label{sec:model}
	In this section I present a one-to-one matching model.
	Although many of my applications are many-to-one (e.g.\ many students match to one school), I defer a discussion of the nuances until \Cref{sec:discussion}.
	In most examples, the many-to-one case is a simple extension of the one-to-one model.
	Below I introduce the elements of a \textit{matching problem}, which is a tuple $(\Wor, \Fir, \spref, \rema)$ consisting of workers, firms, a profile of preferences, and an initial match.
	
	There is a set of workers $\Wor$ and the set of firms $\Fir$, and the union of both is the set of agents $\Age \equiv \Wor \cup \Fir$.
	For clarity of exposition I use masculine pronouns for workers and feminine pronouns for firms.
	Every worker $\wor\in\Wor$ has preference $\wpref_\wor$ over $\Fir\cup \{\wor\}$ and every $\fir\in\Fir$ has a preference $\wpref_\fir$ over $\Wor\cup \{\fir\}$.
	A preference for oneself is a preference to be unmatched: if $\age$ prefers $\age$ to $\bge$ this means that $\age$ prefers to remain unmatched than to match to $\bge$.
	Throughout I assume that $\wpref_\age$ is complete, reflexive, transitive, and anti-symmetric; that is, that $\age$ can rank partners from most to least preferred with no ties.
	
	A match is a function $\mat$ that takes in an agent $\age$ and returns the agent $\mat(\age)$ that he or she is matched, where $\age = \mat(\age)$ means that $\age$ is unmatched.
	Formally, \textit{match} is a function $\mat : \Age \rightarrow \Age$ such that:
	\begin{enumerate}
		\item if $\wor \in \Wor$ then $\mat(\wor) \in \Fir \cup \{\wor\}$; and
		\item if $\fir\in \Fir$ then $\mat(\fir) \in \Wor \cup \{\fir\}$; and
		\item $\mat(\mat(\age)) = \age$. 
	\end{enumerate}
	The first two require that the match is two-sided: every worker matches to a firm (or is unmatched) and every firm matches to a worker (or is unmatched).
	The third requires that every agent is matched to the agent matched to him or her.
	If $\mat(\age) = \age$ then $\age$ is \textit{$\mat$-unmatched}; otherwise, $\mat(\age)$ is the \textit{$\mat$-partner} of $\age$ (or possibly $\mat$-firm or $\mat$-worker).
	I write $\mat \wpref_X \mat '$ to mean $\mat(x) \wpref_x \mat ' (x)$ for all $x\in X$.
	
	There is an \textit{initial} match $\rema$.
	The initial match limits the set of matches I consider to the set of matches I consider to those satisfying the following:
	\begin{definition}
		Match $\mat$ is \textit{individually rational} if $\mat \wpref_\Age \rema$.
	\end{definition}
	The interpretation is that if an agent prefers their initial match $\rema$ to the proposed match $\mat$, they retain the right to demand $\rema$, as it represents an enforceable agreement.
	
	\subsection{The Core}
	Here I formally introduce the core, which is the set of all individually rational matches not blocked by any coalition of agents.
	A coalition blocks a match if it can collectively form a match within the coalition that everyone weakly prefers to the current match.
	Formally, a \textit{coalition} $\Coal \subseteq \Age$ is a nonempty\footnote{Coalitions throughout the paper assumed nonempty. For ease of exposition this quantifier will not be listed.} subset of agents who may form a match among themselves.
	Let $\mat(\Coal) \equiv \{\mat(\age) \spbr \age\in \Coal\}$.
	Note that if $\mat(\Coal) \subseteq \Coal$, then $\mat(\Coal) = \Coal$.
	If a coalition weakly prefers a match $\mat ' $ to $\mat$ and $\mat '$ only matches agents in $\Coal$ to agents in $\Coal$, then $\Coal$ may block $\mat$; formally,
	\begin{definition}
		Coalition $\Coal  $ \textit{blocks $\mat$ through $\nat $} if $\nat \wpref_\Coal \mat$, $\nat (\age) \spref_\age\mat(\age)$ for at least one $\age\in \Coal$, and $\nat (\Coal) = \Coal$.
	\end{definition}
	The \textit{core} is the set of all individually rational matches not blocked by any coalition through any match.\footnote{
		Formally, this is the \textit{strong core} because I consider all \textit{weak} blocks (allowing some coalition members to be indifferent between $\mat ' $ and $\mat$).
		In two-sided matching without indifferences all weak blocks are strong blocks.
		Because the coalitions I will consider later will usually contain agents who do not change partners, I use the strong core as it is smaller.
	}
	\subsection{The Agreeable Core}
	\Cref{exa1} demonstrates that the core may be empty.
	The nonexistence of a match that is both individually rational and unblocked by every coalition of agents motivates restricting either the matches a coalition can block through or the coalitions considered.
	The choice is nontrivial and hinges upon the interpretation of the initial match.
	
	If the matches that a coalition can block through are restricted, then the natural requirement is that any coalition can block but only through an individually rational match $\mat$.
	The interpretation is that the initial match is inviolable \textit{ex post}.
	In order to block a match, a coalition needs only to suggest an individually rational match; as long as all agents are weakly better off than at $\rema$, no agent can complain about his or her partner.
	However, it is easy to construct examples where this solution is empty.
	
	The alternative is to restrict the set of coalitions but not the matches they can block through.
	The interpretation is that the initial match is not only inviolable \textit{ex post} but also that any new contract formed by an agent requires the \textit{ex ante} approval of his or her $\rema$-partner.
	I consider only coalitions meeting the following criterion:
	\begin{definition}
		A coalition $\Coal$ is \textit{agreeable} if $\rema(\Coal) = \Coal$.
	\end{definition}
	A coalition $\Coal$ is agreeable if any contract in $\rema$ does not contain both an agent in $\Coal$ and an agent not in $\Coal$.
	By restricting my attention to agreeable coalitions, I require that every agent in a blocking coalition of $\mat$ guarantees his or her $\rema$-partner a weakly better partner at match $\mat '$ than at $\mat$.
	To guarantee such an improvement, the $\rema$-partner's partner at $\mat '$ must also be included in the coalition, which implies that the $\rema$-partner's $\mat '$-partner must also be included in the coalition, and so on.
	\Cref{def:agrcor} formalizes this idea.	
	\begin{definition}\label{def:agrcor}
		The \textit{agreeable core} is the set of individually rational matches not blocked by any agreeable coalition.
	\end{definition}
	
	The agreeable core puts a strong requirement on blocking coalitions: every agent in the coalition \textit{and their $\rema$-partners} must be made weakly better off.
	My interpretation is that if some agent $\age$ is harmed by a block and his or her $\rema$-partner is in the blocking coalition, then $\age$ can veto the block by refusing to dissolve the initial contract.
	The important nuance is that the harmed agent can veto $\mat '$ even if he or she prefers $\mat ' $ to $\rema$.
	
	The veto power inherent in the agreeable core allows one member of a initial match to dictate the matches his or her partner can form.
	The picture to have in mind is both agents in a initial match simultaneously searching for better matches.
	They both agree to cancel their initial match \textit{simultaneous to both confirming new partners.}
	Because the match of one partner influences who is willing to match with the other, both must agree not only to cancel their initial match but also approve of the other's new match.
	By only considering agreeable coalitions, I allow agents to veto a blocking coalition before the coalition acts.
	
	I find the following justification for the agreeable core helpful in explaining the agreeable core and how I allow agents veto blocking coalitions \textit{ex ante}.
	For a given initial match $\rema$, agents are considering forming the  individually rational match $\mat$.
	Before $\mat$ is realized among the agents (say, before the agents cancel their initial agreements and form the $\mat$ agreements), a coalition considers enforcing some match $\mat ' $ among themselves.
	If some agent $\age$ is in the coalition but $\rema(\age)$ is not in the coalition, then $\rema(\age)$ may refuse to permit $\age$ to form $\mat '$ unless $\rema(\age)$ is certain he or she will prefer $\mat ' $ to $\mat$.
	Hence, $\rema(\mat)$ must also be in the coalition.
	
	Perhaps surprisingly, the set of matches not blocked by any agreeable coalition is not a subset of the individually rational matches.
	My  definition of blocking coalition does not allow an agent to demand $\rema$, and hence the restriction to individually rational matches is substantive.
	For a simple example, restrict \Cref{exa1} to just contractor $1$ and city $A$.
	The match $\mat(1) = 1$ and $\mat(A) = A$ is not blocked by any coalition but does not Pareto improve $\rema$.
	
	I devote \Cref{sec:PEalg} to developing the machinery to prove my main result, namely, that the agreeable core is never empty.
	In the remainder of this section I briefly touch on several aspects of the agreeable core that do not require my more involved techniques.
	\Cref{sec:efficiency} shows that the agreeable core is always Pareto efficient, and conversely if $\rema$ is Pareto efficient then $\{\rema\}$ is the agreeable core.
	As alluded to in the introduction, my model features several connections with both the classical model of stability \citep{gale_college_1962} and more recent models of reassignment \citep{combe_design_2022, pereyra_dynamic_2013}.
	In \Cref{sec:connectiontostability} and \Cref{sec:connectiontoreassignment} I develop these connections; as an expository device and a prelude to my algorithm, I highlight the two leading algorithms in two-sided matching---the Deferred Acceptance and the Top Trading Cycles algorithms---and their adaptations used in the literature to guarantee individual rationality.
	
	\subsection{Efficiency}\label{sec:efficiency}
	In this subsection I investigate the efficiency of the agreeable core.	
	My first observation is that no match in the agreeable core is Pareto dominated:\footnote{
		I say that $\nat$ \textit{Pareto dominates} $\mat$ if every agent weakly prefers $\nat$ to $\mat$ and at least one agent strictly prefers $\nat$ to $\mat$.
	} if $\nat$ Pareto dominates $\mat$, then the grand coalition $\Age$ (which is always agreeable) blocks $\mat$ through $\nat$.
	My  second observations is a kind of a converse: if $\rema$ is not Pareto dominated, then $\rema$ is in the agreeable core.
	To see this, suppose (toward a contradiction) that some agreeable coalition $\Coal$ blocks $\rema$ through $\mat$.
	But then because $\rema(\Coal)= \mat(\Coal) = \Coal$, I can define $\mat ' $ that agrees with $\mat$ for agents in $\Coal$ and agrees with $\rema$ everywhere else.
	But $\mat ' $ then Pareto dominates $\rema$, a contradiction to the supposition that $\rema$ is Pareto efficient.	
	\begin{remark}\label{rem:efficiency}
		Every $\mat$ in the agreeable core is Pareto efficient.\footnote{
			If $\mat$ is not Pareto dominated by any $\nat$, then $\mat$ is \textit{Pareto efficient}.
		}
		Moreover, $\rema$ is Pareto efficient if and only if the $\rema$ is the unique element of the agreeable core.
	\end{remark}
	
	\Cref{rem:efficiency} assures us that the agreeable core  satisfies the most common efficiency standard.

	\subsection{Connection to Stability}\label{sec:connectiontostability}
	In this subsection I discuss the parallels between the agreeable core and the classic theory of stability introduced by \cite{gale_college_1962}.
	The models are the same except that the classical model does not include an initial match in the primitives.
	This connection allows me to leverage a significant tool from two-sided stability, the Deferred Acceptance algorithm (DA), in my analysis
	
	In the classic model, a \textit{blocking pair} of a match is any worker and firm pair such that both prefer each other to their match.
	A match is \textit{stable} if all agents prefer their match to being unmatched and there are no blocking pairs of the match.
	It is well-known \citep{Roth_Sotomayor_1990} that the set of stable matches is the core that I defined previously.
	My  definition of the agreeable core guarantees that if $\rema(\age) = \age$ for all $\age \in \Age$, then the agreeable core corresponds to the core because every coalition is agreeable.
	Therefore stability is the special case of the agreeable core when $\rema$ leaves all agents unmatched.
	
	\cite{gale_college_1962} gives an efficient algorithm for constructing a stable match: the Deferred Acceptance algorithm (\Cref{alg:da}).
	Initially, the DA \say{activates} every worker and designates every agent as \say{currently unmatched.}
	At every step of the DA, some active worker matched proposes to the firm he prefers the most {among those he has not proposed to yet} (if he would rather be unmatched, he is matched to himself and deactivated).
	Every firm then reviews the proposals she receives and her current match and rejects all but her most preferred proposal or match.
	The process continues until no more workers are matched active.
	
	\begin{algorithm}
		\caption{Deferred Acceptance (DA) algorithm}\label{alg:da}
		\begin{algorithmic}
			\State {Notation: when I write $\matDA(\age) \gets \wor$, I mean that $\age$ is matched to $\wor$ and $\wor$ is deactivated. If another worker $\wor ' $ was matched to $\age$, then $\age$ rejects $\wor'$, $\wor ' $ is matched to himself, and $\wor ' $ is activated.}
			\State
			\State {set} $\matDA(\fir) \gets \fir$ for all $\fir \in \Fir$.
			\State {activate} every worker.
			\While{some worker $\wor$ is activated}
			\State $\wor$ proposes to his most-preferred firm $\fir$ that he has not yet proposed to; if he would rather be unmatched, instead he proposes to himself and is deactivated, and we set $\matDA(\wor) \gets \wor$.
			\State \textbf{if} $\fir$ prefers $\wor$ to $\matDA(\fir)$ then set $\matDA(\fir) \gets \wor$.
			\State \textbf{else} $\fir$ rejects $\wor$.
			\EndWhile
			\State \textbf{return} $\matDA$
		\end{algorithmic}
	\end{algorithm}

	Although guaranteed to produce a match unblocked by any coalition, the DA fails to satisfy individual rationality (see \cite{pereyra_dynamic_2013} and \cite{ combe_design_2022}).
	There are two ways in which individual rationality can fail.
	First, a worker may strictly prefer his $\rema$-partner to his match.
	\cite{pereyra_dynamic_2013} resolves this issue by requiring that each firm accepts her $\rema$-partner if he proposes to her.
	This modification guarantees that workers find the outcome individually rational because no worker proposes to a less preferred firm without being rejected by his $\rema$-partner.
	
	In my setting firms also have individual rationality constraints.
	The DA fails to accommodate these because a worker makes proposals (and may be matched to another firm) even though his $\rema$-firm has not received a proposal she prefers to the worker.
	I will see in \Cref{sec:proposestage} how to resolve this tension by limiting which workers can propose.

	\subsection{Connection to Reassignment}\label{sec:connectiontoreassignment}
	In this subsection I highlight the connection between the agreeable core and the standard model of reassignment.
	Recent research in reassignment seeks to find a match through a strategyproof mechanism that is both individually rational and maximizes some objective function (see \citep{combe_design_2022, dur_two-sided_2019} for two such examples).
	Because the agreeable core is motivated with first principles (the core) rather than with an objective in mind (obtaining a strategyproof mechanism), there are substantial differences in definitions and results.
	However, both approaches employ the same method: the Top Trading Cycles algorithm (TTC).
	
	The TTC finds a match such that no coalition of workers can reallocate their $\rema$-firms among themselves and improve their matches.
	The TTC starts with every worker and firm \say{active.}
	At every step, every active firm points at the worker she is initially matched to, and every active worker points at his favorite active firm.
	At every step a cycle must form.
	The TTC assigns each worker in the cycle to the firm he points at, and then the agents in the cycle become inactive.
	The process terminates when no agents are active.
	
	I define the TTC in \Cref{alg:ttc}.
	\begin{algorithm}
		\caption{Top Trading Cycles (TTC) algorithm}\label{alg:ttc}
		\begin{algorithmic}
			\State set $\matTTC(\age) = \age$ for all $\age$.
			\State {every agent is activated.}
			\While{at least one agent is active}
			\State every active worker points to his most-preferred of the active firms.
			\State every active firm points to her most preferred of the active workers.
			\State choose an arbitrary cycle $(\wor_1, \fir_2, \ldots \wor_{2k-1} \equiv \wor_1, \fir_{2k} \equiv \fir_{2})$ such that every agent points to the next agent in the cycle.
			\State all agents in the cycle are deactivated.
			\State match every $\wor_k$ to $\fir_{k+1}$.
			\EndWhile
			\State \textbf{return} $\matTTC$
		\end{algorithmic}
	\end{algorithm}
	
	If some agents are matched by $\rema$, then the TTC may not be individually rational.
	To accommodate this, \cite{combe_design_2022} and \cite{combe_reallocation_2023} make the following two modifications.
	First, a firm must point to her $\rema$-worker so long as he is active.
	This guarantees that $\matTTC \wpref_\Wor \rema$.
	Second, no worker may point to a firm if that firm prefers her $\rema$-partner to the worker.
	This guarantees that $\matTTC \wpref_\Wor \rema$.
	
	In my setting, however, these modifications are not enough.
	As I saw in \Cref{sec:connectiontostability}, the agreeable core equals the set of stable matches when all agents are $\rema$-unmatched.
	At least in this case firms must be given power to decide between the workers pointing to them, as in the DA.
	In section \Cref{sec:exchangestage} I incorporate this by limiting which workers and firms participate in the TTC.

	\section{A Proof of Existence: The Propose-Exchange Algorithm}\label{sec:PEalg}
	In this section I present a computationally efficient and economically meaningful algorithm that always produces a match $\matExch$ (defined through this section) in the agreeable core.
	My algorithm is the \textit{Propose-Exchange} algorithm (PE) and is composed of two stages.
	The Propose stage resembles the Deferred Acceptance algorithm and eliminates any block by a coalition that either includes an agent who is unmatched in the initial match or who becomes unmatched by the block.
	The Exchange stage resembles the Top Trading Cycles algorithm and eliminates all blocks that involve reshuffling initial partners among themselves.
	For readers unfamiliar with the Deferred Acceptance and the Top Trading Cycles algorithms, I refer the reader to \Cref{sec:connectiontostability} and \Cref{sec:connectiontoreassignment}, respectively.
	
	The PE directly implies that the agreeable core exists and provides some insight into its structure.
	My main result is the following:
	\begin{theorem}\label{thm:exi}
		$\matExch$ is in the agreeable core.
	\end{theorem}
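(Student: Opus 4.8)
The plan is to verify the two defining properties of the agreeable core for $\matExch$ in turn: individual rationality, and immunity to every agreeable coalition. Individual rationality, $\matExch \wpref_\Age \rema$, I would establish as a loop invariant maintained by each phase. In the \Propose phase (a variant of the DA), worker-side IR follows from the Pereyra-style modification of \cref{sec:connectiontostability}---a firm holds onto her $\rema$-worker whenever he proposes, so no worker can end below $\rema(\wor)$ without first being rejected by her---while firm-side IR follows from restricting which workers may propose, so that no firm is ever displaced below her $\rema$-partner. In the \Exchange phase (a variant of the TTC), the analogous pointing restrictions preserve the same two inequalities. Tracking these through the while-loops and checking that the handoff between phases respects them is routine bookkeeping rather than a genuine difficulty.

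The substantive obligation is that no agreeable coalition blocks $\matExch$. I would argue by contradiction: suppose an agreeable $\Coal$ (so $\rema(\Coal) = \Coal$) blocks $\matExch$ through $\mat'$, with $\mat'(\Coal) = \Coal$, $\mat' \wpref_\Coal \matExch$, and $\mat'(\age) \spref_\age \matExch(\age)$ for some $\age \in \Coal$. The first point to stress is that, unlike in the classical core, one cannot reduce a blocking coalition to a blocking pair: a pair $\{\wor,\fir\}$ need not satisfy $\rema(\{\wor,\fir\}) = \{\wor,\fir\}$, so a blocking pair is in general not agreeable. Agreeability forces $\Coal$ to be closed under $\rema$, so the block is carried by alternating chains and cycles of $\rema$-edges and $\mat'$-edges rather than by a single pair---this is exactly the structure the \PE algorithm is engineered to defeat.

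My plan for the contradiction is to exploit the phase decomposition, partitioning $\Coal$ into the agents finalized in the \Propose phase and those finalized in the \Exchange phase. Starting from a member strictly improved by $\mat'$, I would follow the alternating $\rema$/$\mat'$ chain through $\Coal$ and locate where it is first pinned down by the algorithm. If the critical agent was finalized in the \Exchange phase, the contradiction comes from the strong-core property of the (modified) TTC: no set of workers can reshuffle their $\rema$-firms among the standing agents for a weak Pareto improvement, which is precisely what the chain supplies. If the critical agent was finalized in the \Propose phase, the contradiction comes from DA-style stability: a worker with $\mat'(\wor)=\fir \spref_\wor \matExch(\wor)$ must have been rejected by $\fir$, so $\fir$ holds a worker she strictly prefers; since $\Coal$ is agreeable this drags $\rema(\fir)$ and her $\mat'$-partner into $\Coal$, propagating the chain.

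The hard part will be the cross-phase interaction: ruling out a block that mixes agents finalized in the \Propose phase with agents finalized in the \Exchange phase, linked through a shared $\rema$-pair straddling the boundary (one partner released during the \Propose phase while the other trades during the \Exchange phase). The structural fact I would need is that the \Propose phase runs to completion before the \Exchange phase begins, and that the agents handed to the \Exchange phase are exactly those whose $\rema$-partners cannot do strictly better; this should ``lock'' the match of any firm finalized in the \Propose phase against every worker the coalition could supply from the \Exchange phase. Formalizing this lock, and checking that the chain must terminate at one of the two single-phase contradictions rather than cycling indefinitely across the boundary, is where the real work lies; the remaining cases reduce to the two single-phase arguments together with \Cref{rem:efficiency}.
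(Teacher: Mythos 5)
Your overall architecture is sound up to a point: you correctly see that agreeable blocks cannot be reduced to blocking pairs and are instead carried by alternating $\rema$/$\mat'$ chains and cycles, which is exactly the paper's graph-theoretic notion of a blocking path (\cref{pro:ac-iff-nbps}). Your individual-rationality bookkeeping also matches the paper's. But your plan for the no-blocking half has a genuine gap, and it is precisely the one you flag yourself: the cross-phase case. Your decomposition is by \emph{agent} (who was finalized in the \Propose phase versus the \Exchange phase), and under that decomposition the mixed blocking coalition does not reduce to the two single-phase arguments --- you offer only the hope that a ``lock'' can be formalized, with no mechanism for doing so. Nothing in your proposal rules out a chain that alternates back and forth across the boundary, and your appeal to \cref{rem:efficiency} cannot close this, since Pareto efficiency is far weaker than immunity to agreeable blocks.

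The paper's resolution uses a different decomposition, and this is the missing idea: it partitions blocking \emph{paths by their shape} (linear versus cyclic), not agents by phase. \Cref{lem:propose} shows $\matProp$ admits no \emph{linear} blocking paths (regardless of which agents are involved), and \cref{lem:exchange} shows $\matExch$ admits no \emph{cyclic} blocking paths (its proof internally establishes that every agent on a cyclic blocking path must have stood up in the \Exchange phase, which is where your ``lock'' intuition actually gets formalized). The two lemmas are then glued by a monotonicity observation that replaces your cross-phase analysis entirely: since the \Exchange phase only Pareto improves $\matProp$, one has $\matExch \cup \Imp(\matExch) \subseteq \matProp \cup \Imp(\matProp)$ and $\Imp(\matExch) \subseteq \Imp(\matProp)$, so any blocking path of $\matExch$ is already a blocking path of $\matProp$ of the same shape; it cannot be cyclic by \cref{lem:exchange} and cannot be linear by \cref{lem:propose}. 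Without the linear/cyclic dichotomy and this transfer-by-Pareto-improvement step, your argument leaves the hardest case open, so the proposal as written does not constitute a proof.
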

	The proof (and definition of $\matExch$) occupies the remainder of this section.
	I first introduce a particular directed graph representation of the matching problem in \Cref{sec:graphtheory}, then introduce the Propose stage in \Cref{sec:proposestage}, and finally the Exchange stage in \Cref{sec:exchangestage}.
	I conclude this section by noting how the introduction of initial matches creates additional complexity in analyzing the structure of the agreeable core.
	All omitted proofs are contained in \Cref{appendix}.

	\subsection{A Graph-Theoretic Depiction}\label{sec:graphtheory}
	Despite my parsimonious definition of the agreeable core, so far testing whether $\mat$ is in the agreeable core requires checking whether any coalition can block $\mat$ through any $\mat '$, which is only feasible in small examples.
	My main result from this subsection is a characterization of blocking coalitions in terms of paths in a directed graph, which is computationally efficient.
	I use the language of graph theory to formalize my ideas.
	
	A \textit{digraph} $\Graph$ is a pair $ (\Vertices, \Edges)$ where $\Vertices$ is a set of \textit{vertices} and $\Edges$ is a set of \textit{ordered} pairs of vertices called (directed) \textit{edges}, possibly including an edge from a vertex to itself, called a \textit{loop}.
	The one nuance to my construction is that I allow for loops to be repeated once in $\Edges$; formally, $\Edges$ is a \textit{multiset}, but this will not cause any confusion.
	
	I consider digraphs where the vertices are agents, and the edges represent matches.
	Edges going from $\Fir$ to $\Wor$ (and loops) are drawn from $\rema$, while the edges going from $\Wor$ to $\Fir$ (and possibly repeated loops) are drawn from $\mat$ and any blocking pairs of $\mat$.
	I abuse notation and write $\rema$ for both the function and for the set of ordered pairs:
	\begin{align*}
		\rema &= \{(\fir, \wor) \spbr \rema(\fir) = \wor\} \cup \{(\age, \age) \spbr \rema(\age) = \age\}\\
		\mat &= \{(\wor, \fir) \spbr \mat(\wor) = \fir\} \cup \{(\age, \age) \spbr \mat(\age) = \age\}.
	\end{align*}
	It is critical to understand that $\rema$ and $\mat$ go in \textit{opposite} directions (except for any loops).
	I always follow the convention that edges from the initial match travel from $\Fir$ to $\Wor$, so although the matches $\rema$ or $\mat$ may change, from context the direction of the edges is always clear.
	To include the blocking pairs, I define:
	\begin{align*}
		\Imp(\mat) &= \{(\wor, \fir) \spbr \fir \spref_\wor \mat(\wor)  \t{ and } \spref_\fir \mat(\fir) \} \cup \{(\age, \age) \spbr \age \spref_\wor \mat(\age)\}
	\end{align*}
	My main digraph of interest is $(\Age, \rema \cup \mat \cup \Imp(\mat))$.
	That is, the vertices are agents, the first set of edges connects initial partners, and the second set of edges connects all pairs that weakly prefer each other over their $\mat$-partners.
	
	\Cref{fig:blockingGraph} depicts the three types of edges using the set-up of \Cref{exa1} and a match $\mat$ that modifies the initial match $\rema$ by leaving $\fir_D$ unmatched and matching $\wor_4$ to $\fir_C$.
	Subfigure (a) includes the edges from $\rema$, which are either loops (in the case of $\wor_3$ and $\fir_C$) or point from $\Fir$ to $\Wor$.
	Subfigure (b) includes the edges from $\mat$, which point from $\Wor$ to $\Fir$.
	Subfigure (c) includes the blocking pairs of $\mat$, which point from $\Wor$ to $\Fir$.
	\begin{figure}
		\centering
		\begin{subfigure}{0.3\textwidth}
				\centering
				\begin{tikzpicture}
						\def\heightdist{2.30940107676cm}
						\def\mlw{0.5mm}
						\def\rotationangle{30}
						\def\minsizenode{1cm}
						\def\vertdist{2cm}
						
						\path (0,0) 
						node[circle, minimum size = \minsizenode](fa) {\large$\fir_A$}
						to ++(180:\vertdist)
						node[circle, minimum size = \minsizenode](w1){\large$\wor_1$}
						to ++(0-\rotationangle:\heightdist)
						node[circle, minimum size = \minsizenode](fb) {\large$\fir_B$}
						to ++(180:\vertdist)
						node[circle, minimum size = \minsizenode](w2) {\large$\wor_2$}
						to ++(0-\rotationangle:\heightdist)
						node[circle, minimum size = \minsizenode] (fc) {\large$\fir_C$}
						to ++(180:\vertdist)
						node[circle, minimum size = \minsizenode] (w3) {\large$\wor_3$}
						to ++(0-\rotationangle:\heightdist)
						node[circle, minimum size = \minsizenode] (fd) {\large$\fir_D$}
						to ++(180:\vertdist)
						node[circle, minimum size = \minsizenode] (w4) {\large$\wor_4$};
						
						\draw[->, line width = \mlw, dashed, dash pattern=on 8pt off 3pt] (fa) edge [in = -30, out = 30, out looseness = 4, in looseness = 4, draw = none] (fa);
						\draw[->, line width = \mlw] (fc) edge [in = -30, out = 30, out looseness = 4, in looseness = 4] (fc);
						\draw[<-, line width = \mlw] (w3) edge [in = 150, out = 210, out looseness = 4, in looseness = 4] (w3);
						\draw[-, line width = \mlw, dashed, dash pattern=on 8pt off 3pt] (w3) edge [in = 150, out = 210, out looseness = 4, in looseness = 4, draw = none] (w3);
						\draw[<-, line width = \mlw, dashed, dash pattern=on 8pt off 3pt] (fd) edge [in = -30, out = 30, out looseness = 4, in looseness = 4, draw = none] (fd);
						
						\draw[<-, line width = \mlw] (w1) to (fa);
						\draw[<-, line width = \mlw] (w2) to (fb);
						\draw[<-, line width = \mlw] (w4) to (fd);
						
					\end{tikzpicture}
				\caption{Initial Match $\rema$}
				\label{fig:subfig1a}
			\end{subfigure}
		\begin{subfigure}{0.3\textwidth}
				\centering
				\begin{tikzpicture}
						\def\heightdist{2.30940107676cm}
						\def\mlw{0.5mm}
						\def\rotationangle{30}
						\def\minsizenode{1cm}
						\def\vertdist{2cm}

						\path (0,0) 
						node[circle, minimum size = \minsizenode](fa) {\large$\fir_A$}
						to ++(180:\vertdist)
						node[circle, minimum size = \minsizenode](w1){\large$\wor_1$}
						to ++(0-\rotationangle:\heightdist)
						node[circle, minimum size = \minsizenode](fb) {\large$\fir_B$}
						to ++(180:\vertdist)
						node[circle, minimum size = \minsizenode](w2) {\large$\wor_2$}
						to ++(0-\rotationangle:\heightdist)
						node[circle, minimum size = \minsizenode] (fc) {\large$\fir_C$}
						to ++(180:\vertdist)
						node[circle, minimum size = \minsizenode] (w3) {\large$\wor_3$}
						to ++(0-\rotationangle:\heightdist)
						node[circle, minimum size = \minsizenode] (fd) {\large$\fir_D$}
						to ++(180:\vertdist)
						node[circle, minimum size = \minsizenode] (w4) {\large$\wor_4$};
						
						\draw[<-, line width = \mlw, dashed, dash pattern=on 8pt off 3pt] (fa) edge [in = -30, out = 30, out looseness = 4, in looseness = 4, draw = none] (fa);
						\draw[-, line width = \mlw] (fc) edge [in = -30, out = 30, out looseness = 4, in looseness = 4, draw = none] (fc);
						\draw[-, line width = \mlw] (w3) edge [in = 150, out = 210, out looseness = 4, in looseness = 4, draw = none] (w3);
						\draw[->, line width = \mlw, dashed, dash pattern=on 8pt off 3pt] (w3) edge [in = 150, out = 210, out looseness = 4, in looseness = 4] (w3);
						\draw[<-, line width = \mlw, dashed, dash pattern=on 8pt off 3pt] (fd) edge [in = -30, out = 30, out looseness = 4, in looseness = 4] (fd);

						\draw[->, line width = \mlw, dashed, dash pattern=on 8pt off 3pt] (w1) to (fa);
						\draw[->, line width = \mlw, dashed, dash pattern=on 8pt off 3pt] (w2) to (fb);
						\draw[->, line width = \mlw, dashed, dash pattern=on 8pt off 3pt] (w4) to (fc);

					\end{tikzpicture}
				\caption{Proposed Match $\mat$}
				\label{fig:subfig1b}
			\end{subfigure}
		\begin{subfigure}{0.3\textwidth}
				\centering
				\begin{tikzpicture}
						\def\heightdist{2.30940107676cm}
						\def\mlw{0.5mm}
						\def\rotationangle{30}
						\def\minsizenode{1cm}
						\def\vertdist{2cm}
						
						\path (0,0) 
						node[circle, minimum size = \minsizenode](fa) {\large$\fir_A$}
						to ++(180:\vertdist)
						node[circle, minimum size = \minsizenode](w1){\large$\wor_1$}
						to ++(0-\rotationangle:\heightdist)
						node[circle, minimum size = \minsizenode](fb) {\large$\fir_B$}
						to ++(180:\vertdist)
						node[circle, minimum size = \minsizenode](w2) {\large$\wor_2$}
						to ++(0-\rotationangle:\heightdist)
						node[circle, minimum size = \minsizenode] (fc) {\large$\fir_C$}
						to ++(180:\vertdist)
						node[circle, minimum size = \minsizenode] (w3) {\large$\wor_3$}
						to ++(0-\rotationangle:\heightdist)
						node[circle, minimum size = \minsizenode] (fd) {\large$\fir_D$}
						to ++(180:\vertdist)
						node[circle, minimum size = \minsizenode] (w4) {\large$\wor_4$};
						
						\draw[-, line width = \mlw, dashed, dash pattern=on 8pt off 3pt] (fa) edge [in = -30, out = 30, out looseness = 4, in looseness = 4, draw = none] (fa);
						\draw[-, line width = \mlw] (fc) edge [in = -30, out = 30, out looseness = 4, in looseness = 4, draw = none] (fc);
						\draw[-, line width = \mlw] (w3) edge [in = 150, out = 210, out looseness = 4, in looseness = 4, draw = none] (w3);
						\draw[-, line width = \mlw, dashed, dash pattern=on 8pt off 3pt] (w3) edge [in = 150, out = 210, out looseness = 4, in looseness = 4, draw = none] (w3);
						\draw[<-, line width = \mlw, dashed, dash pattern=on 8pt off 3pt] (fd) edge [in = -30, out = 30, out looseness = 4, in looseness = 4, draw = none] (fd);
						
						\draw[->, line width = \mlw, dash pattern=on 0.5pt off 7.5pt, line cap=round] (w1) to (fb);
						\draw[->, line width = \mlw, dash pattern=on 0.5pt off 7.5pt, line cap=round] (w2) to (fa);
						\draw[->, line width = \mlw, dash pattern=on 0.5pt off 7.5pt, line cap=round] (w3) to (fa);
						\draw[->, line width = \mlw, dash pattern=on 0.5pt off 7.5pt, line cap=round] (w3) to (fb);
						\draw[->, line width = \mlw, dash pattern=on 0.5pt off 7.5pt, line cap=round] (w3) to (fc);
						\draw[->, line width = \mlw, dash pattern=on 0.5pt off 7.5pt, line cap=round] (w3) to (fd);

					\end{tikzpicture}
				\caption{Blocking Pairs of $\mat$}
				\label{fig:subfig1c}
			\end{subfigure}
	
			 \vspace{2cm}
			 
			 \begin{subfigure}{0.45\textwidth}
			 	\centering
			 	\begin{tikzpicture}
				 		\def\heightdist{2.30940107676cm}
				 		\def\mlw{0.5mm}
				 		\def\rotationangle{30}
				 		\def\minsizenode{1cm}
				 		\def\vertdist{2cm}
				 		
				 		\path (0,0) 
				 		node[circle, minimum size = \minsizenode](fa) {\large$\fir_A$}
				 		to ++(180:\vertdist)
				 		node[circle, minimum size = \minsizenode](w1){\large$\wor_1$}
				 		to ++(0-\rotationangle:\heightdist)
				 		node[circle, minimum size = \minsizenode](fb) {\large$\fir_B$}
				 		to ++(180:\vertdist)
				 		node[circle, minimum size = \minsizenode](w2) {\large$\wor_2$}
				 		to ++(0-\rotationangle:\heightdist)
				 		node[circle, minimum size = \minsizenode] (fc) {\large$\fir_C$}
				 		to ++(180:\vertdist)
				 		node[circle, minimum size = \minsizenode] (w3) {\large$\wor_3$}
				 		to ++(0-\rotationangle:\heightdist)
				 		node[circle, minimum size = \minsizenode] (fd) {\large$\fir_D$}
				 		to ++(180:\vertdist)
				 		node[circle, minimum size = \minsizenode] (w4) {\large$\wor_4$};
				 		
				 		\draw[-, line width = \mlw, dashed, dash pattern=on 8pt off 3pt] (fa) edge [in = -30, out = 30, out looseness = 4, in looseness = 4, draw = none] (fa);
				 		\draw[->, line width = \mlw] (fc) edge [in = -30, out = 30, out looseness = 4, in looseness = 4] (fc);
				 		\draw[<-, line width = \mlw] (w3) edge [in = 150, out = 210, out looseness = 4, in looseness = 4] (w3);
				 		\draw[-, line width = \mlw, dashed, dash pattern=on 8pt off 3pt] (w3) edge [in = 150, out = 210, out looseness = 4, in looseness = 4, draw = none] (w3);
				 		\draw[<-, line width = \mlw, dashed, dash pattern=on 8pt off 3pt] (fd) edge [in = -30, out = 30, out looseness = 4, in looseness = 4, draw = none] (fd);

				 		\draw[->, line width = \mlw, dash pattern=on 0.5pt off 7.5pt, line cap=round] (w3) to (fd);
				 		\draw[<-, line width = \mlw] (w4) to (fd);
				 		\draw[->, line width = \mlw, dashed, dash pattern=on 8pt off 3pt] (w4) to (fc);
				 		
				 	\end{tikzpicture}
			 	\caption{Acyclic Blocking Path}
			 	\label{fig:subfig1d}
			\end{subfigure}
	 	\begin{subfigure}{0.45\textwidth}
		 		\centering
		 		\begin{tikzpicture}
			 			\def\heightdist{2.30940107676cm}
			 			\def\mlw{0.5mm}
			 			\def\rotationangle{30}
			 			\def\minsizenode{1cm}
			 			\def\vertdist{2cm}
			 			
			 			\path (0,0) 
			 			node[circle, minimum size = \minsizenode](fa) {\large$\fir_A$}
			 			to ++(180:\vertdist)
			 			node[circle, minimum size = \minsizenode](w1){\large$\wor_1$}
			 			to ++(0-\rotationangle:\heightdist)
			 			node[circle, minimum size = \minsizenode](fb) {\large$\fir_B$}
			 			to ++(180:\vertdist)
			 			node[circle, minimum size = \minsizenode](w2) {\large$\wor_2$}
			 			to ++(0-\rotationangle:\heightdist)
			 			node[circle, minimum size = \minsizenode] (fc) {\large$\fir_C$}
			 			to ++(180:\vertdist)
			 			node[circle, minimum size = \minsizenode] (w3) {\large$\wor_3$}
			 			to ++(0-\rotationangle:\heightdist)
			 			node[circle, minimum size = \minsizenode] (fd) {\large$\fir_D$}
			 			to ++(180:\vertdist)
			 			node[circle, minimum size = \minsizenode] (w4) {\large$\wor_4$};
			 			
			 			\draw[-, line width = \mlw, dashed, dash pattern=on 8pt off 3pt] (fa) edge [in = -30, out = 30, out looseness = 4, in looseness = 4, draw = none] (fa);
			 			\draw[<-, line width = \mlw] (fc) edge [in = -30, out = 30, out looseness = 4, in looseness = 4, draw = none] (fc);
			 			\draw[<-, line width = \mlw] (w3) edge [in = 150, out = 210, out looseness = 4, in looseness = 4, draw = none] (w3);
			 			\draw[-, line width = \mlw, dashed, dash pattern=on 8pt off 3pt] (w3) edge [in = 150, out = 210, out looseness = 4, in looseness = 4, draw = none] (w3);
			 			\draw[<-, line width = \mlw, dashed, dash pattern=on 8pt off 3pt] (fd) edge [in = -30, out = 30, out looseness = 4, in looseness = 4, draw = none] (fd);

			 			\draw[->, line width = \mlw, dash pattern=on 0.5pt off 7.5pt, line cap=round] (w2) to (fa);
			 			\draw[<-, line width = \mlw] (w1) to (fa);
			 			\draw[->, line width = \mlw, dashed, dash pattern=on 0.5pt off 7.5pt, line cap=round] (w1) to (fb);
			 			\draw[<-, line width = \mlw] (w2) to (fb);
			 			
			 		\end{tikzpicture}
		 		\caption{Cyclic Blocking Path}
		 		\label{fig:subfig1e}
		 	\end{subfigure}

		\caption{The blocking digraph $(\Age, \rema \cup \mat \cup \Imp(\mat))$ is the union of the digraphs in subfigures (a), (b), and (c).
		Subfigures (d) and (e) depict the two kinds of blocking paths.}
		\label{fig:blockingGraph}
	\end{figure}
	
	A (simple) \textit{path} in $(\Vertices, \Edges)$ is a vector of edges $\Path = (\edg_1, \dots, \edg_n)$ such that the second coordinate of $\edg_k$ equals the first coordinate of $\edg_{k+1}$ for $1 \leq k < n$ and no vertex appears in more than two edges.
	Recall that a loop may appear twice (in both $\rema$ and $\mat \cup \Imp(\mat)$) so it is possible for path to consist of exactly two loops.
	I say a \textit{vertex is in a path} if the path contains an edge that contains the vertex.
	I sometimes abuse notation and write $\Path$ for the vertices in $\Path$.
	
	A path $\Path$ is \textit{complete} if every vertex contained in the path is contained in exactly two edges of the path.
	A path is \textit{alternating} if it no two consecutive edges (including loops) alternate between $\rema$ and $\mat \cup \Imp(\mat)$.\footnote{Although the directed nature of the digraph makes most paths alternating, by formally requiring that a path is alternating I rule out the case that $(\wor, \fir )$ and $(\fir, \fir)$ may both be from $\mat \cup \Imp(\mat)$.}
	Two complete and alternating paths are depicted in subfigures (d) and (e) of \Cref{fig:blockingGraph}.
	For an arbitrary complete and alternating path $\Path$ in $(\Age, \rema \cup \mat \cup \Imp(\mat)) $, I define $\mat^\Path(\age)$ as follows:
	\begin{itemize}
		\item if $(\wor, \fir)$ is in $\Path$, then $\mat^\Path(\wor) = \fir$.
		\item if $\age$ is \textit{not} in $\Path$ but $\mat(\age) $ is in $\Path$, then $\mat^\Path(\age) = \age$;
		\item if $\age$ is \textit{not} in $\Path$ and $\mat(\age)$ is not in $\Path$, then $\mat^\Path(\age) = \mat(\age)$.
	\end{itemize}
	That is, $\mat^\Path$ matches $\age \in \Path$ to the agent whom $\age$ shares an edge from $\mat \cup \Imp (\mat)$ in $\Path$ with; other matches are left unchanged where possible.
	By \Cref{lem:altClean} in the appendix, every agent in $\Path$ is contained in one edge from $\rema$ and one edge is from $\mat \cup \Imp(\mat)$, so $\mat^\Path$ is well defined and $\mat^\Path(\Path) = \Path$.
	
	My main result of this subsection is that a path that is complete, alternating, and contains an edge from $\Imp(\mat)$ corresponds to an agreeable blocking coalition in $(\Age, \rema \cup \mat \cup \Imp(\mat))$.
	I formalize this as follows:
	\begin{definition}
		Path $\Path$ is a \textit{blocking path of $\mat$} if $\Path$ is a complete and alternating path in $(\Age, \rema \cup \mat \cup \Imp(\mat))$ that contains at least one edge from $\Imp(\mat)$.
	\end{definition}
	A blocking path of $\mat$ is aptly named as it corresponds to a blocking coalition of $\mat$.
	\begin{proposition}\label{pro:ac-iff-nbps}
		An individually rational match $\mat$ is in the agreeable core if and only if $\mat$ admits no blocking paths.
		Moreover, if $\Path$ is a blocking path of $\mat$ then $\Path$ blocks $\mat$ through $\mat^\Path$.
	\end{proposition}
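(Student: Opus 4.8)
The plan is to prove the equivalence through the correspondence, asserted in the statement, between blocking paths of $\mat$ and agreeable coalitions blocking $\mat$, handling the two implications separately and proving the \emph{moreover} clause first since it delivers one direction at once. Throughout I would lean on \cref{lem:altClean}, which says that on a complete alternating path $\Path$ every vertex lies in exactly one $\rema$-edge and exactly one $\mat \cup \Imp(\mat)$-edge; this makes $\mat^\Path$ a well-defined involution on $\Path$ with $\mat^\Path(\Path) = \Path$.

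First I would verify that a blocking path $\Path$ produces a block. The coalition $\Path$ is agreeable because for each $\age \in \Path$ the unique $\rema$-edge meeting $\age$ lies in $\Path$, so $\rema(\age) \in \Path$ and thus $\rema(\Path) = \Path$. For the improvement, $\mat^\Path$ matches each $\age \in \Path$ along its incident $\mat \cup \Imp(\mat)$-edge: when that edge is in $\mat$ we get $\mat^\Path(\age) = \mat(\age)$, and when it is in $\Imp(\mat)$ we get $\mat^\Path(\age) \spref_\age \mat(\age)$ by definition of $\Imp$. Hence $\mat^\Path \wpref_\Path \mat$, with strict preference for at least one agent because $\Path$ contains an $\Imp(\mat)$-edge; combined with $\mat^\Path(\Path) = \Path$ this is exactly a block of $\mat$ through $\mat^\Path$. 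As $\Path$ is agreeable, $\mat$ is then excluded from the agreeable core, which is the contrapositive of the forward direction.

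For the converse I would take an agreeable coalition $\Coal$ blocking $\mat$ through some $\mat'$ and manufacture a blocking path. Since $\rema(\Coal) = \Coal$ and $\mat'(\Coal) = \Coal$, both $\rema$ and $\mat'$ restrict to perfect matchings on $\Coal$, so their multigraph union is a disjoint union of alternating cycles, each a complete alternating path. The crux is to check that every $\mat'$-edge sits in the second color class $\mat \cup \Imp(\mat)$: for $\age \in \Coal$ we have $\mat'(\age) \wpref_\age \mat(\age)$, and if this is an equality the edge is a $\mat$-edge, while if it is strict then evaluating $\mat' \wpref_\Coal \mat$ at the partner $\mat'(\age) \in \Coal$ yields $\age \wpref_{\mat'(\age)} \mat(\mat'(\age))$, which cannot be an indifference (that would force $\mat'(\age) = \mat(\age)$), so strictness of preferences promotes it to $\age \spref_{\mat'(\age)} \mat(\mat'(\age))$ and the edge lies in $\Imp(\mat)$. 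Taking $\age^*$ to be an agent the block strictly improves, the edge $\{\age^*, \mat'(\age^*)\}$ is in $\Imp(\mat)$, and the alternating cycle through it is a complete alternating path carrying an $\Imp(\mat)$-edge, i.e. a blocking path. This edge-coloring step is where I expect the real work to lie, since it is the only place the no-indifference assumption and the two-sided structure enter.

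The one genuinely delicate point is the individual-rationality content of the equivalence. The two arguments above establish precisely that $\mat$ is unblocked by every agreeable coalition if and only if it admits no blocking path; the agreeable core additionally demands $\mat \wpref_\Age \rema$. An agent who would prefer its $\rema$-partner to $\mat$ cannot force a block while that partner is content, so I would not expect every individually-irrational $\mat$ to display a blocking path — the clean converse therefore wants $\mat$ individually rational as a standing hypothesis (equivalently, reading the left side as \say{unblocked by every agreeable coalition}). The only irrationality that does surface as a path is a vertex $\age$ with $\rema(\age) = \age$ and $\age \spref_\age \mat(\age)$, giving the degenerate path formed by the two loops at $\age$. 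The residual bookkeeping — extending $\mat^\Path$ to a genuine two-sided match off $\Path$, and confirming that the extracted cycles are simple paths in the sense defined — is routine via \cref{lem:altClean}.
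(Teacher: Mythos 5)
Your two positive arguments are correct and are, in substance, the paper's own proof. The \emph{moreover} clause plus the forward direction is handled exactly as you propose: \cref{lem:altClean} makes $\mat^\Path$ a well-defined match on $\Path$ with $\rema(\Path) = \mat^\Path(\Path) = \Path$, and the $\Imp(\mat)$-edge supplies the strict improvement, so the vertex set of $\Path$ is itself an agreeable blocking coalition. For the converse, the paper constructs the path by starting at a strictly improving agent and alternately appending and prepending edges of $\rema$ and $\mat'$ until no further edge can be added; your decomposition of the union of the matchings $\rema|_\Coal$ and $\mat'|_\Coal$ into complete alternating components is the same construction in different packaging. If anything, you are more careful than the paper at the one step that needs care: the paper asserts that the constructed path lies in $(\Age, \rema, \mat \cup \Imp(\mat))$ \say{because $\mat' \wpref_\Coal \mat$,} while you spell out why, under strict preferences, every $\mat'$-edge that is not a $\mat$-edge must lie in $\Imp(\mat)$.

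The place where you depart from the paper is the individual-rationality direction, and there you are right and the paper is wrong. The paper disposes of non-individually-rational matches via \cref{lem:2path}, whose proof of the $(\Leftarrow)$ direction asserts that $\rema(\age) \spref_\age \mat(\age)$ implies $\{\age, \rema(\age)\} \in \Imp(\mat)$; this forgets that membership in $\Imp(\mat)$ also requires the reciprocal strict preference $\age \spref_{\rema(\age)} \mat(\rema(\age))$. Your skepticism is vindicated by a three-agent example: let $\Wor = \{\wor, \wor'\}$, $\Fir = \{\fir\}$, $\rema(\wor) = \fir$, $\rema(\wor') = \wor'$, $\mat(\wor) = \wor$, $\mat(\fir) = \wor'$, with preferences $\fir \spref_\wor \wor$, $\fir \spref_{\wor'} \wor'$, and $\wor' \spref_\fir \wor \spref_\fir \fir$. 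Here $\mat$ is not individually rational, yet $\Imp(\mat) = \emptyset$, so $\mat$ admits no blocking path, and a short check shows that no agreeable coalition blocks $\mat$ either; hence both \cref{lem:2path} and the unqualified proposition fail. Your proposed repair -- state the equivalence for individually rational $\mat$, or read the left-hand side as \say{not blocked by any agreeable coalition} -- is the correct one, and it costs nothing downstream, since \cref{thm:exi} applies the proposition to $\matExch$, which is individually rational by construction. One small correction to your discussion: the loop at an $\rema$-unmatched agent is not the \emph{only} irrationality that surfaces as a blocking path; when $\age$ and $\rema(\age)$ each strictly prefer the other to their $\mat$-partners, the doubled edge $\{\age, \rema(\age)\}$, read once from $\rema$ and once from $\Imp(\mat)$, is a blocking path of length 2. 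What genuinely fails to surface is exactly the one-sided violation, as in the example above.
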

	\Cref{pro:ac-iff-nbps} provides a test that is linear in the number of edges to see if $\mat$ is in the agreeable core.\footnote{
		A depth first search initiated from every edge in $\Imp(\mat)$ is sufficient.
	}
	
	The Propose-Exchange algorithm is built on a partition of paths between those that form cycles and those that do not:
	\begin{definition}
		Let $\Path = (\edg_1, \ldots, \edg_n)$.
		If the first coordinate of $\edg_1$ is the second coordinate of $ \edg_n$, then $\Path$ is \textit{cyclic}; otherwise, $\Path$ is \textit{acyclic}.
	\end{definition}
	As the name suggests, cyclic paths start with an agent and then return to that agent.
	In $(\Age, \rema \cup \mat \cup \Imp(\mat))$, a cyclic, complete, and alternating path corresponds to agents (who are $\rema$-matched) trading their $\rema$-partners among themselves.
	Acyclic paths that are also complete and alternating start with a loop and end with a loop, forming a line in the digraph.
	See subfigures (d) and (e) of \Cref{fig:blockingGraph} for example cyclic and acyclic blocking paths.
	In $(\Age, \rema \cup \mat \cup \Imp(\mat))$, an acyclic, complete, and alternating path corresponds to agents trading their $\rema$-firms among themselves, except that two agents are unmatched by one or both sets of edges.
	The Propose-Exchange algorithm works by first producing a match $\matProp$ that admits no acyclic blocking paths, then finding a series of Pareto improvements of $\matProp$ to produce a match $\matExch$ that has no cyclic blocking paths.

	\subsection{The Propose Stage}\label{sec:proposestage}
	The first stage of my algorithm outputs a match $\matProp$ by systematically removing all {acyclic} blocking paths from $(\Age, \rema \cup  \mat \cup \Imp(\mat))$.
	An acyclic blocking path $\Path$ in $(\Age, \rema  \cup \mat \cup \Imp(\mat))$ corresponds to a series of trades, but the agents at either end of the path are either $\rema$-unmatched or $\mat^\Path$-unmatched.
	These may be thought of as a cycle that includes the \say{unmatched} agent.
	
	The \Propose algorithm is a variation of the Deferred Acceptance algorithm.
	The DA is designed for markets where all agents are unmatched under $\rema$ and is defined in \Cref{alg:da}.
	I noted in \Cref{sec:connectiontostability} that the DA may fail individual rationality for both workers and firms.
%
	I provide guarantees to the agents in the \Propose algorithm by only allowing a worker to make a proposal once his $\rema$-firm has received a more preferred proposal and by requiring that a firm accept any proposal from her $\rema$-worker.
	These adjustments, shown in italics, are essential to the success of the Propose stage.
	The Propose stage algorithm is defined in \Cref{alg:prop}.
	\begin{algorithm}
		\caption{\Propose Stage algorithm}\label{alg:prop}
		\begin{algorithmic}
			\State {Notation: when I write $\matProp(\age) \gets \wor$, I mean that $\age$ is matched to $\wor$ and $\wor$ is deactivated. If another worker $\wor ' $ was matched to $\age$, then $\age$ rejects $\wor'$, $\wor ' $ is matched to himself, and $\wor ' $ is activated.}
			\State
			\State \textit{{set} $\matProp \gets \rema$}
			\State {activate} every worker.
			\State \textit{\textbf{if} $\wor$'s $\rema$-firm prefers $\wor$ to being unmatched, \textbf{then} \textbf{deactivate} $\wor$.}
			\While{some worker $\wor$ is active}
			\State $\wor$ proposes to his most-preferred firm $\fir$ that he has not yet proposed to; if he would rather be unmatched, instead he proposes to himself and is deactivated, and we set $\matProp(\wor) \gets \wor$.
			\State \textit{\textbf{if} $\fir$ is $\wor$'s $\rema$-partner, \textbf{then} set $\matProp(\fir) \gets \wor$ and have $\fir$ reject all future proposals.}
			\State \textbf{else if} $\fir$ prefers $\wor$ to $\matProp(\fir)$\textit{ and to being unmatched}, then set $\matProp(\fir) \gets \wor$.
			\State \textbf{else} $\fir$ rejects $\wor$.
			\EndWhile
			\State \textbf{return} $\matProp$
		\end{algorithmic}
	\end{algorithm}
	By construction, $\matProp$ is individually rational.
	If $\wor$ strictly prefers $\rema$ to $\matProp$, then $\wor$ would have proposed to $\rema$ (and not been rejected).
	Again, if $\rema(\fir)$ is matched by $\matProp$ to a firm other than $\fir$, then $\fir$ received a proposal she prefers to $\rema(\fir)$ and hence she prefers $\matProp$ to $\rema$.
	
	I then show that at the end of the \Propose algorithm, no blocking path of $\matProp$ is acyclic.
	\begin{lemma}\label{lem:propose}
		$\matProp$ admits no acyclic blocking paths.
	\end{lemma}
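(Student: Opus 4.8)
The plan is to argue by contradiction: assume $\matProp$ admits a linear blocking path $\Path$ and derive a contradiction from the way the \Propose algorithm assigns matches. Throughout I use the fact recorded just before the statement, that $\matProp$ is individually rational ($\matProp \wpref_\Age \rema$), together with two dynamic invariants of the algorithm that I would record first. \emph{(Worker side.)} A worker only ever holds a firm he has proposed to, and he proposes down his list; the sole exception is the initial \say{lock}, in which a worker whose $\rema$-firm is willing to keep him is assigned to her and never proposes. Hence if $\fir \spref_\wor \matProp(\wor)$ then either $\wor$ proposed to $\fir$ and was rejected, or $\wor$ is locked with $\matProp(\wor)=\rema(\wor)$ and never proposed. \emph{(Firm side.)} A firm only holds workers she weakly prefers to her $\rema$-worker, and the worker she holds improves monotonically over the run, \emph{except} when her $\rema$-worker proposes, which forces her to accept him; only in that case can she be downgraded, and then $\matProp(\fir)=\rema(\fir)$.

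From these I would extract the key structural fact: \textbf{(Veto Lemma)} for every edge $\{\wor,\fir\}\in\Imp(\matProp)$, either $\matProp(\wor)=\rema(\wor)$ or $\matProp(\fir)=\rema(\fir)$. Indeed, since $\fir\spref_\wor\matProp(\wor)$ the worker-side invariant gives two cases. If $\wor$ proposed to $\fir$ then, being rejected, $\fir$ was at that moment holding a worker she prefers to him; as $\wor\spref_\fir\matProp(\fir)$ she must later have been downgraded, which happens only by forced acceptance of her $\rema$-worker, so $\matProp(\fir)=\rema(\fir)$. Otherwise $\wor$ is locked and $\matProp(\wor)=\rema(\wor)$. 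In words, each $\Imp$-edge is \say{vetoed} by an agent matched to its $\rema$-partner, and in the graph $(\Age,\rema,\matProp\cup\Imp(\matProp))$ that $\rema$-partner is exactly the $\rema$-neighbor of the agent along $\Path$.

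The heart of the argument is then a proposal/forced-acceptance chain run along $\Path$. Completeness and alternation (with \cref{lem:altClean}) make every vertex of $\Path$ lie on exactly one $\rema$-edge and one $\matProp\cup\Imp(\matProp)$-edge, the two ends of a \emph{linear} path are loops, and the worker/firm sides alternate along the path. A loop means its vertex is unmatched — $\rema$-unmatched for an $\rema$-loop, $\matProp$-unmatched for a $\matProp$-loop — and such an endpoint can therefore never be the veto partner demanded by the Veto Lemma: an $\rema$-unmatched agent has no $\rema$-partner, and a $\matProp$-unmatched agent is nobody's $\matProp$-partner. The plan is to anchor the chain at a worker who proposes freely — the $\rema$-unmatched worker endpoint when one exists, and otherwise the worker adjacent to an unmatched firm endpoint, who cannot be locked to it precisely because it is unmatched — and to walk toward the opposite end in the direction along which $\rema$-edges run firm-to-worker. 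Each time the chain crosses an $\Imp$-edge, the Veto Lemma together with firm monotonicity forces the next firm to hold her $\rema$-worker (the next worker on $\Path$) and forces that worker to have proposed across the following $\Imp$-edge; moreover no $\matProp$-edge can follow an $\Imp$-edge, since the forced acceptance already pins the intervening worker's match to the wrong firm. Pushing this to the terminal endpoint, the last $\Imp$-edge would require that (unmatched) endpoint to be matched to its $\rema$-neighbor, which is impossible. As $\Path$ contains at least one $\Imp$-edge, this is a contradiction, so $\matProp$ admits no linear blocking path.

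The main obstacle I expect is bookkeeping the two exceptional behaviours — worker locking and firm forced-acceptance — simultaneously, and in particular ruling out the configuration in which a firm is held in place with her $\rema$-worker as a veto while she herself strictly prefers, and has been proposed to by, the next worker along $\Path$ (which would let her trade up and break the veto). Managing this cleanly seems to require a short case analysis on the colours of the two endpoint loops, and hence on which side proposes and on the orientation of the $\rema$-edges relative to the traversal, so that the chain always runs in the direction in which workers propose and firms weakly improve; the unmatched endpoints are exactly what guarantee the chain cannot close on itself and must instead terminate in the impossible \say{unmatched veto partner.}
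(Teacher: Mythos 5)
Your overall architecture is essentially the paper's: anchor at one end of the linear path (the paper's \say{proposal source}), propagate the fact \say{this worker made a proposal} along the path using individual rationality and the forced-acceptance rule, and derive a contradiction at the other end (the \say{proposal sink}); your Veto Lemma is a compact repackaging of the forced-acceptance step the paper's proof uses. The genuine gap is in your treatment of the endpoint loops, and it is load-bearing. You assert that a loop from the second edge set means its vertex is $\matProp$-unmatched. That is false: the second edge set is $\matProp \cup \Imp(\matProp)$, and a loop of $\Imp(\matProp)$ at $\age$ means $\age \spref_\age \matProp(\age)$, i.e.\ $\age$ is \emph{matched} under $\matProp$ but prefers being unmatched. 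Such configurations are exactly what \cref{alg:prop} can produce: the initialization locks $\wor$ to $\rema(\wor)$ whenever the \emph{firm} finds $\wor$ acceptable, with no check of $\wor$'s preferences, so a worker can finish the run held by an $\rema$-firm he ranks below being unmatched; symmetrically, forced acceptance can leave a firm holding an $\rema$-worker she ranks below being unmatched. In the sub-case where the sink is a worker whose loop lies in $\Imp(\matProp)$, your terminal step (\say{the last $\Imp$-edge would require that (unmatched) endpoint to be matched to its $\rema$-neighbor, which is impossible}) proves nothing, because that worker \emph{is} matched to his $\rema$-neighbor under $\matProp$. The contradiction there must come from a fact you never record: a worker only proposes to firms he prefers to being unmatched (otherwise he proposes to himself), so a $\matProp$-match that its worker finds unacceptable can only be an unbroken initial lock; your own chain (the preceding worker $\wor$ proposed to that firm, whom she strictly prefers to her $\rema$-worker and, by transitivity, finds acceptable) shows the lock was broken, and it could only be restored by a forced acceptance, i.e.\ by the sink worker proposing to a firm he finds unacceptable, which the algorithm never allows. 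The same imprecision affects your anchor when the source is a firm loop — \say{cannot be locked to it precisely because it is unmatched} is unavailable when her loop is in $\Imp(\matProp)$ — though there the needed conclusion (that $\rema(\fir)$ proposed) survives, since forced acceptance presupposes a proposal.

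Two smaller omissions of the same bookkeeping kind. First, your propagation is specified only across $\Imp$-edges, but the invariant \say{the current worker proposed} must also be carried across $\matProp$-edges occurring before the first $\Imp$-edge of the path (nothing forces that first $\Imp$-edge to be adjacent to the anchor). The step is easy — if $\{\wor,\fir\}\in\matProp$ with $\wor$ the preceding worker, then $\matProp(\fir)=\wor\neq\rema(\fir)$, so $\rema(\fir)$ is not held by $\fir$ at termination, hence was at $\unmat$ at some point and proposed — but it has to be stated. Second, inside your Veto Lemma, \say{being rejected, $\fir$ was at that moment holding a worker she prefers to him} overlooks rejections because $\fir$ prefers being unmatched (or her $\rema$-worker) to the proposer; the conclusion still holds (the former forces her final match below being unmatched, which only forced acceptance can produce, and the latter contradicts individual rationality of $\matProp$), but those cases are part of the proof. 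These are precisely the lock/forced-acceptance interactions you flag at the end as the \say{main obstacle}; as written, the sketch does not yet clear them, whereas the paper's proof disposes of them via its explicit proposal-source/proposal-sink case analysis and its observation that a worker \say{never rejects a proposal from himself.}
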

	My proof leverages that an acyclic blocking path $\Path$ in $(\Age,\rema  \cup  \mat \cup \Imp(\mat))$ always begins with either a worker who is $\rema$-unmatched and hence proposes or a firm who is $\mat^\Path$-unmatched (and hence her $\rema$-worker starts out active).
	Because the start and finish of the path are connected by workers who (weakly) prefer the firm they receive in the block, I can show that every worker in the path must have had the opportunity to propose.
	I then show that the path must terminate with either a worker who is $\rema$-matched or a firm who is $\rema$-unmatched, neither of which would reject the proposal made through the path.
	I conclude by showing that every firm accepts the proposal from her $\mat^\Path$-partner, which contradicts that $\mat \neq \mat^\Path$.
	
	The \textbf{while} step admits ambiguity because which worker is selected to propose is not specified.
	I show in \Cref{pro:ord} that the order in which workers are selected is irrelevant. 
	\begin{proposition}\label{pro:ord}
		The output of the Propose stage is independent of the order the workers are called to propose in.
	\end{proposition}
	
	\subsection{The Exchange Stage}\label{sec:exchangestage}
	In the second stage of the algorithm, I eliminate all cyclic blocking paths.
	I do this by allowing agents to trade their initial agreements.
	Cyclic blocking paths in $(\Age, \rema  \cup \mat \cup \Imp(\mat))$ correspond to workers and their $\rema$-firms rearranging their initial matches among themselves.
	No agent in a cyclic path is unmatched by either $\mat$ or $\rema$.
	A cyclic blocking path represents an inefficient allocation for $\Coal$: the coalition could have rearranged their initial matches among themselves and obtained a better match.

	The \Exchange algorithm is an adaptation of the Top Trading Cycles algorithm to find these cycles and remove them.
	The difficulty with using solely the TTC in my setting is that the TTC does not give firms the ability to select \textit{between} workers.
	Although firm's preferences limit the set of acceptable workers, which worker is matched to the firm ultimately depends on the worker the firm is required to point at.
	If only some workers or firms are matched by $\rema$, then the firm's lack of choice can lead to violations of the agreeable core.
	I resolve this by only applying the TTC to workers and firms who did not both find better partners through the \Propose algorithm, with my addition indicated in italics.
	This modification guarantees that the Exchange stage is a Pareto improvement of $\matProp$; by selecting a Pareto improvement, I do not create any new acyclic blocking paths in the blocking digraph.
	The \Exchange algorithm is defined in \Cref{alg:exch}.
	
	
	\begin{algorithm}
		\caption{\Exchange Stage algorithm}\label{alg:exch}
		\begin{algorithmic}
			\State \textit{set $\matExch(\age) = \matProp(\age)$ for all $\age$.}
			\State \textit{every $\wor$ such that $\matProp(\wor) = \rema(\wor)$ is activated with $\rema(\wor)$.}
			\While{at least one worker is active}
				\State every active worker points to his most-preferred of the active firms \textit{who prefer him to her $\rema$-worker}.
				\State every active firm points\textit{ to her $\rema$-worker.}
				\State choose an arbitrary cycle $(\wor_1, \fir_2, \ldots \wor_{2k-1} \equiv \wor_1, \fir_{2k} \equiv \fir_{2})$ such that every agent points to the next agent in the cycle.
				 \State all agents in the cycle sit down.
				\State match every $\wor_k$ to $\fir_{k+1}$.
			\EndWhile
			\State \textbf{return} $\matExch$
		\end{algorithmic}
	\end{algorithm}
	
	My first observation is that the \Exchange algorithm makes no agents worse off than under $\matProp$.
	Workers only point to firms they prefer to $\rema$, and by my simplification of workers' preferences, firms can only be pointed at by workers they prefer to $\rema$.
	The result is that at the end of the \Exchange algorithm, $\matExch$ admits no cyclic blocking paths.
	\begin{lemma}\label{lem:exchange}
		$\matExch$ admits no cyclic blocking paths.
	\end{lemma}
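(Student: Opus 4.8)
The plan is to prove the (contrapositive) statement directly: assume a cyclic blocking path $\Path$ of $\matExch$ exists and derive a contradiction. Two facts are used throughout. First, $\matExch \wpref_\Age \matProp \wpref_\Age \rema$, as observed in the text preceding \cref{lem:propose} and this lemma, so $\matExch$ is individually rational. Second, by \cref{pro:ac-iff-nbps} the path $\Path$ blocks $\matExch$ through $\mat^\Path$. Unwinding the definitions, a cyclic, complete, alternating path visits a cyclic sequence of status quo pairs $(\wor_i,\fir_i)$, with $\fir_i = \rema(\wor_i)$, joined by edges $\{\fir_i,\wor_{i+1}\} \in \matExch \cup \Imp(\matExch)$, and $\mat^\Path$ hands worker $\wor_{i+1}$ the firm $\fir_i$. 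Blocking means $\fir_i \wpref_{\wor_{i+1}} \matExch(\wor_{i+1})$ and $\wor_{i+1} \wpref_{\fir_i} \matExch(\fir_i)$ for every $i$, with at least one edge in $\Imp(\matExch)$, i.e.\ strict on both sides; the goal is to contradict the presence of that strict edge.

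The conceptual heart is a top-trading-cycles exchange argument applied to the standing pairs. Restricted to the agents who stand up, $\matExch$ is exactly the output of a TTC run in which every firm points to her $\rema$-worker, so it is the strong-core allocation of the housing market in which each standing worker is endowed with $\rema(\wor)$ and may point only at firms that find him acceptable (prefer him to their $\rema$-worker). Assuming for the moment that the whole cycle is standing, I would order its pairs by the round in which the Exchange phase removes them and take a pair $(\wor_i,\fir_i)$ removed first. The firm $\fir_{i-1}$ that $\wor_i$ receives under $\mat^\Path$ is still standing when $\wor_i$ is matched, because $\fir_{i-1}$ leaves with her $\rema$-worker $\wor_{i-1}$, who is removed no earlier than $\wor_i$; and $\fir_{i-1}$ is acceptable to $\wor_i$, since the block together with individual rationality gives $\wor_i \wpref_{\fir_{i-1}} \matExch(\fir_{i-1}) \wpref_{\fir_{i-1}} \rema(\fir_{i-1}) = \wor_{i-1}$. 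As $\wor_i$ pointed to his favourite standing acceptable firm, $\matExch(\wor_i) \wpref_{\wor_i} \fir_{i-1}$; with the block inequality and strict preferences this forces $\matExch(\wor_i) = \fir_{i-1}$, so $\wor_{i-1}$ sits in the same trading cycle and is also removed first. Propagating this equality around the cycle shows that every worker already receives his $\mat^\Path$-firm under $\matExch$, so no edge is strict, contradicting that $\Path$ is a blocking path.

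It remains to justify the reduction to standing pairs, which I expect to be the main obstacle. The key claim is that every edge of $\Imp(\matExch)$ joins two standing agents; granting it, a sat-down pair on the cycle would meet its neighbours only through $\matExch$-edges, which, being actual matches disjoint from $\Imp(\matExch)$, force those neighbours to be sat down as well, so by connectedness the whole cycle would be sat down and could contain no strict edge at all—impossible for a blocking path. To prove the claim, consider a strict edge $\{\fir,\wor\}$ with, say, $\wor$ sat down: then $\matExch(\wor)=\matProp(\wor)$ and $\fir \spref_\wor \matProp(\wor)$, so $\wor$ was released and proposed to $\fir$ during the Propose phase and was rejected. If that was an ordinary rejection, $\fir$ retained a worker she strictly prefers to $\wor$, whence $\matExch(\fir) \wpref_\fir \matProp(\fir) \spref_\fir \wor$, contradicting the firm side $\wor \spref_\fir \matExch(\fir)$ of the strict edge. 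The delicate residual cases are a firm frozen to her status quo worker by the acceptance rule, or, on the firm-endpoint side, a worker who is never released and so never proposes to a firm he would have displaced. I expect to close these exactly as \cref{lem:propose} is proved: such configurations would assemble into a linear blocking path of $\matProp$, which \cref{lem:propose} forbids, the point being that the locking proposals fired during a run of the Propose phase are linearly ordered in time and the first one cannot be sustained. Making this last step precise is where the real work lies.
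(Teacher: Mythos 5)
Your handling of the all-standing case is sound, and it is essentially the paper's argument in different packaging: where you induct from the earliest-removed pair of the cycle, the paper sums sit-down times (each worker sits weakly after any firm on the cycle he covets, strictly so across the $\Imp(\matExch)$ edge, while each firm sits in exactly the same round as her $\rema$-worker); both rest on the same three facts about the Exchange phase, so that part is fine.

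The genuine gap is the reduction to that case. Your key claim --- every edge of $\Imp(\matExch)$ joins two standing agents --- is false, even when both endpoints are $\rema$-matched. Take workers $\wor,\wor',v$ and firms $\fir,g,h$, with $\rema$ matching $v$ to $\fir$ and $\wor$ to $h$; let $\wor'\spref_\fir \wor\spref_\fir v\spref_\fir \fir$, $\wor\spref_g\wor'\spref_g g$, let $h$ find every worker unacceptable, and let $\fir\spref_\wor g\spref_\wor \wor$, $g\spref_{\wor'}\fir\spref_{\wor'}\wor'$, $\fir\spref_v v$. In the Propose phase $\fir$ first accepts $\wor$, releasing $v$; then $v$ proposes back to $\fir$ and the acceptance rule freezes her to $v$; $\wor$ moves on to $g$, displacing $\wor'$, who ends unmatched. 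So $\matProp$ matches $(v,\fir)$ and $(\wor,g)$, only the pair $(v,\fir)$ stands, the Exchange phase consists of that single trivial cycle, and $\matExch=\matProp$. Now $\{\wor,\fir\}\in\Imp(\matExch)$ (since $\fir\spref_\wor g$ and $\wor\spref_\fir v$) with $\wor$ sat down and $\fir$ standing --- exactly the configuration your claim forbids. Worse, your fallback for the ``delicate residual cases'' also fails here: this configuration does not assemble into a linear blocking path of $\matProp$, because every attempted path through $\{\wor,\fir\}$ dead-ends at $v$ (completeness would force a third edge at $\fir$) or at $h$; indeed $\matProp=\matExch$ lies in the agreeable core, so \cref{lem:propose} gives no contradiction. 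The moral is that the bad configuration cannot be excluded edge-by-edge or by a local appeal to \cref{lem:propose}; it can only be excluded by using the hypothesis that it sits on a \emph{full cyclic} blocking path. That is precisely what the paper's proof does: it supposes one worker on the cycle sat down, then chases the induced chain of proposals and displacements around the entire cycle during the Propose phase until the cycle closes, forcing every such worker to be $\matProp$-matched to his $\rema$-firm after all --- a contradiction. The step you defer as ``where the real work lies'' is therefore not a technicality to be closed along the lines you sketch; the intermediate claim your sketch relies on is false.
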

	My proof leverages that if $\wor$ strictly prefers $\fir$ to $\matExch(\wor)$, then $\fir$ must sit down at least one step \textit{before} $\wor$.
	A cyclic blocking path then implies that the firms in the path sit down on average strictly before the workers in the path sit down.
	However, because every worker's $\rema$-firm is in the path and they sit down in the same step, it must be that the firms in the path sit down on average in the same step as the workers in the path sit down.
	This contradiction rules out cyclic blocking paths.
	
	\subsection{Existence}
	I am now ready to prove that $\matExch$ is in the agreeable core.
	\begin{proof}[\unskip\nopunct]
		\textbf{\textit{Proof of \Cref{thm:exi}:}}
		Suppose (toward a contradiction) that $\matExch$ is not in the agreeable core.
		Then by \Cref{pro:ac-iff-nbps} the digraph $(\Age, \rema  \cup  \matExch \cup \Imp(\matExch))$ contains a blocking path $\Path$.
		By \Cref{lem:exchange}, $\Path$ is acyclic.
		But $\Path$ is also blocking path in $(\Age, \rema  \cup  \matProp \cup \Imp(\matProp))$ because $\matExch \cup \Imp(\matExch) \subseteq \matProp \cup \Imp(\matProp)$ and $\Imp(\matExch) \subseteq \Imp(\matProp)$.
		By \Cref{lem:propose}, $\Path$ is not acyclic.
		This is a contradiction, which proves the claim.
	\end{proof}
	
	The importance of the Propose-Exchange algorithm in my proof cannot be understated.
	However, the algorithm has practical implications because it is also computationally efficient.
	The \Propose stage runs in polynomial time because each worker can make at most $|\Fir| + 1$ proposals.
	Similarly, one cycle is removed in every iteration of the \Exchange stage, and at most $|\Fir|$ cycles can be removed.
	An efficient algorithm is necessary for implementing the agreeable core in applications.

	\subsection{Structure}\label{sec:structure}
	In this subsection, I highlight the difficulty in characterizing the underlying structure of the agreeable core and how it relates to other classes of algorithms commonly used to compute core outcomes.
	Although the set of stable matches has a well-understood structure which I summarize in the following paragraph, the agreeable core is not as tame.
	The hurdle in the analysis comes from the \Exchange stage.
	To the best of my knowledge, there are no results from the literature that apply to the agreeable core when every agent is $\rema$-matched.
	
	I briefly summarize the main structural results on the set of stable matches.
	First, a \textit{lattice} is a partially ordered set $(\Lat, \lgeq)$ such that any two elements of $\Lat$ have a unique \textit{least upper bound}, called the \textit{join} of $x$ and $y$, and a unique \textit{greatest lower bound}, called the \textit{meet} of $x$ and $y$.
	That is, there is a unique $x \join y$ such that if $z \lgeq x$ and $z \geq y$ then $z \lgeq x\join y$, and there is a unique $x \meet y$ such that if $x \lgeq z$ and $y \geq x$ then $  x\meet y \lgeq z$.
	A key result in two-sided matching is that the set of stable matches forms a lattice with the partial order $\wpref_\Wor$.\footnote{
		Donald Knuth attributes this to John H. Conway.
	}
	The join of two matches $\mat$ and $\nat$ is the match that gives every worker $\wor$ his more preferred partner from $\{\mat(\wor), \nat(\wor)\}$ and every $\fir$ her less preferred partner from $\{\mat(\fir), \nat(\fir)\}$; the meet is given symmetrically.
	This implies that there is a conflict of interest between the workers and the firms: if every worker weakly prefers a stable $\mat$ to a stable $\nat$, then every firm weakly prefers $\nat$ to $\mat$.
	Moreover, there is a \textit{worker optimal} stable match and a \textit{firm optimal} stable match.
	
	To show that the agreeable core fails to be a lattice, consider the following example.
	Let $\rema(\wor_1) = \fir_A$, $\rema(\wor_2) = \fir_B$, and $\rema(\wor_3) = \fir_C$, and preferences are given as in \Cref{fig:nobest}.
	Both the pair $\wor_2$ and $\fir_B$ and the pair $\wor_3$ and $\fir_C$ prefer to participate in a cycle with the pair $\wor_1$ and $\fir_A$, but $\wor_1$ and $\fir_A$ have opposing preferences over the two possible cycles.
	Worker $\wor_1$ prefers firm $\fir_C$ and firm $\fir_A$ prefers worker $\wor_2$, and so either cycle may be in the agreeable core.
	The agreeable core consists uniquely of the $\bar\mat$ match and the $\dot\mat$ match, a pair which is not ordered by $\wpref_\Wor$.
	In this example there is no worker optimal match.
	\begin{figure}\label{fig:nobest}
		\centering
		\begin{subfigure}[b]{0.45\textwidth}
			\centering
			\begin{tikzpicture}
				\def\lshift{2.0};
				\def\rshift{1.88};
				\def\colwid{0.93};
				
				\def\firstrow{0.99};
				\def\secondrow{-0.16};
				\def\thirdrow{-1.31};
				\def\mlw{0.5mm};
				
				
				\draw[line width = \mlw] (-\colwid-\lshift,\thirdrow) ellipse (0.3 and 0.3); 
				\draw[line width = \mlw] (-\lshift,\secondrow) ellipse (0.3 and 0.3); 
				\draw[line width = \mlw] (+\colwid-\lshift,\secondrow) ellipse (0.3 and 0.3); 
				
				\draw[line width = \mlw] (-\colwid+\rshift,\thirdrow) ellipse(0.3 and 0.3); 
				\draw[line width = \mlw] (0+\rshift,\secondrow) ellipse (0.3 and 0.3); 
				\draw[line width = \mlw] (+\colwid+\rshift,\secondrow) ellipse (0.3 and 0.3); 
				
				\node (aligner) {};
				
				\node[pref] (players) [left = 2mm of aligner] {
					\begin{tabular}{C | C | C}
						\wor_1 & \wor_2 & \wor_3  \\
						\hline  \hline 
						\fir_C &\fir_A & \fir_A \\
						\fir_B & \fir_B& \fir_C \\
						\fir_A&  \unmat &   \unmat\\
						\unmat &  \fir_C & \fir_C \\
					\end{tabular}
				};    
				
				\node[pref] (coaches) [right = 2mm of aligner] {
					\begin{tabular}{C | C | C}
						\fir_A & \fir_B & \fir_C\\
						\hline  \hline 
						\wor_2 &   \wor_1 &\wor_1 \\
						\wor_3 &  \wor_2 &  \wor_3\\
						\wor_1  &  \unmat & \unmat  \\
						\unmat &  \wor_3 & \wor_2 \\
					\end{tabular}
				};    
			\end{tikzpicture}
			\caption{Initial match---$\rema$}
		\end{subfigure}
		\hfill
		\begin{subfigure}[b]{0.45\textwidth}
			\centering
			\begin{tikzpicture}
				\def\lshift{2.0};
				\def\rshift{1.88};
				\def\colwid{0.93};
				
				\def\firstrow{0.99};
				\def\secondrow{-0.16};
				\def\thirdrow{-1.31};
				\def\mlw{0.5mm};
				
				\draw[line width = \mlw, dashed, dash pattern=on 8pt off 3pt] (-\colwid-\lshift,\secondrow) ellipse (0.3 and 0.3); 
				\draw[line width = \mlw, color = lightgray] (-\colwid-\lshift,\thirdrow) ellipse (0.3 and 0.3); 
				\draw[ line width = \mlw, dashed, dash pattern=on 8pt off 3pt] (0-\lshift,\firstrow) ellipse (0.3 and 0.3); 
				\draw[line width = \mlw, color = lightgray] (0-\lshift,\secondrow) ellipse (0.3 and 0.3); 
				\draw[ line width = \mlw, dashed, dash pattern=on 8pt off 3pt] (+\colwid-\lshift,\secondrow) ellipse (0.3 and 0.3); 

				\draw[ line width = \mlw, dashed, dash pattern=on 8pt off 3pt] (-\colwid+\rshift,\firstrow) ellipse(0.3 and 0.3); 
				\draw[line width = \mlw, color = lightgray] (-\colwid+\rshift,\thirdrow) ellipse(0.3 and 0.3); 
				\draw[ line width = \mlw, dashed, dash pattern=on 8pt off 3pt] (0+\rshift,\firstrow) ellipse (0.3 and 0.3); 
				\draw[line width = \mlw, color = lightgray] (0+\rshift,\secondrow) ellipse (0.3 and 0.3); 
				\draw[ line width = \mlw, dashed, dash pattern=on 8pt off 3pt] (+\colwid+\rshift,\secondrow) ellipse (0.3 and 0.3); 
				
				\draw[->, thick, lightgray] 
				(-\colwid-\lshift-0.2,\thirdrow+0.186) 
				to[out=135,in=215] 
				(-\colwid-\lshift,\firstrow-0.55)
				to
				(0-\lshift-0.27,\firstrow-0.17);
				
				\draw[->, thick, lightgray] 
				(-\lshift-0.1,\secondrow-0.28)
				to[out = 240, in = -60]
				(-\colwid-\lshift+0.11,\secondrow-0.3);
				
				\draw[->, thick, lightgray] 
				(-\colwid+\rshift-0.2,\thirdrow+0.186) 
				to[out=135,in=215] 
				(-\colwid+\rshift,\firstrow-0.55)
				to
				(0+\rshift-0.27,\firstrow-0.17); 
				
				\draw[->, thick, lightgray] 
				(+\rshift -0.321,\secondrow)
				to[out = 160, in = 270]
				(-\colwid+\rshift-0.11,\firstrow-0.3); 
				
				\node (aligner) {};
				
				\node[pref] (players) [left = 2mm of aligner] {
					\begin{tabular}{C | C | C}
						\wor_1 & \wor_2 & \wor_3  \\
						\hline  \hline 
						\fir_C & \fir_A & \fir_A \\
						\fir_B & \fir_B & \fir_C\\
						\fir_A&  \unmat &   \unmat\\
						\unmat &  \fir_C &\fir_ C \\
					\end{tabular}
				};    
				
				\node[pref] (coaches) [right = 2mm of aligner] {
					\begin{tabular}{C | C | C}
						\fir_A & \fir_B & \fir_C\\
						\hline  \hline 
						\wor_2 &   \wor_1 & \wor_1 \\
						\wor_3 &  \wor_2 &  \wor_3 \\
						1  &  \unmat & \unmat  \\
						\unmat &  \wor_3 & \wor_2 \\
					\end{tabular}
				};    
			\end{tikzpicture}

			\caption{$\wor_2$, $\fir_A$, and $\fir_B$'s preferred match---$\bar\mat$}
		\end{subfigure}
		\vskip\baselineskip 
		\begin{subfigure}[b]{0.45\textwidth}
			\centering
			\begin{tikzpicture}
				\def\lshift{2.0};
				\def\rshift{1.88};
				\def\colwid{0.93};
				
				\def\firstrow{0.99};
				\def\secondrow{-0.16};
				\def\thirdrow{-1.31};
				\def\mlw{0.5mm};
				
				\draw[line width = \mlw, color = lightgray] (-\colwid-\lshift,\thirdrow) ellipse (0.3 and 0.3); 
				\draw[ line width = \mlw, dashed, dash pattern=on 8pt off 3pt] (-\colwid-\lshift,\firstrow) ellipse (0.3 and 0.3); 
				\draw[ line width = \mlw, dashed, dash pattern=on 8pt off 3pt] (0-\lshift,\secondrow) ellipse (0.3 and 0.3); 
				\draw[ line width = \mlw, dashed, dash pattern=on 8pt off 3pt] (+\colwid-\lshift,\firstrow) ellipse (0.3 and 0.3); 
				\draw[line width = \mlw, color = lightgray] (+\colwid-\lshift,\secondrow) ellipse (0.3 and 0.3); 
				
				\draw[ line width = \mlw, dashed, dash pattern=on 8pt off 3pt] (-\colwid+\rshift,\secondrow) ellipse(0.3 and 0.3); 
				\draw[line width = \mlw, color = lightgray] (-\colwid+\rshift,\thirdrow) ellipse(0.3 and 0.3); 
				\draw[ line width = \mlw, dashed, dash pattern=on 8pt off 3pt] (0+\rshift,\secondrow) ellipse (0.3 and 0.3); 
				\draw[ line width = \mlw, dashed, dash pattern=on 8pt off 3pt] (+\colwid+\rshift,\firstrow) ellipse (0.3 and 0.3); 
				\draw[line width = \mlw, color = lightgray] (+\colwid+\rshift,\secondrow) ellipse (0.3 and 0.3); 
				
				\draw[->, thick, lightgray] 
				(-\colwid-\lshift-0.2,\thirdrow+0.186) 
				to[out=135,in=203] 
				(-\colwid-\lshift,\firstrow-0.55)
				to
				(+\colwid-\lshift-0.27,\firstrow-0.17);
				
				\draw[->, thick, lightgray] 
				(+\colwid-\lshift-0.1,\secondrow+0.28)
				to[out = 120, in = -60]
				(-\colwid-\lshift+0.11,\firstrow-0.3);
				
				\draw[->, thick, lightgray] 
				(-\colwid+\rshift-0.2,\thirdrow+0.186) 
				to[out=135,in=203] 
				(-\colwid+\rshift,\firstrow-0.55)
				to
				(+\colwid+\rshift-0.27,\firstrow-0.17); 
				
				\draw[->, thick, lightgray] 
				(+\colwid+\rshift -0.1,\secondrow-0.28)
				to[out = 240, in = -60]
				(-\colwid+\rshift+0.11,\secondrow-0.3);
				
				\node (aligner) {};
				
				\node[pref] (players) [left = 2mm of aligner] {
					\begin{tabular}{C | C | C}
						\wor_1 & \wor_2 & \wor_3  \\
						\hline  \hline 
						\fir_C & \fir_A & \fir_A \\
						\fir_B & \fir_B & \fir_C\\
						\fir_A&  \unmat &   \unmat\\
						\unmat &  \fir_C & \fir_C \\
					\end{tabular}
				};    
				
				\node[pref] (coaches) [right = 2mm of aligner] {
					\begin{tabular}{C | C | C}
						\fir_A & \fir_B & \fir_C\\
						\hline  \hline 
						\wor_2 &   \wor_1 & \wor_1 \\
						\wor_3 & \wor_ 2 &  \wor_3 \\
						\wor_1  &  \unmat & \unmat  \\
						\unmat &  \wor_3 & \wor_2 \\
					\end{tabular}
				};    
			\end{tikzpicture}
			\caption{$\wor_1$, $\wor_3$, and $\fir_C$'s preferred match---$\dot\mat$}
		\end{subfigure}
		\caption{
			An example showing that the outcomes in the agreeable core cannot be ordered by $\wpref_\Wor$.
		}
	\end{figure}
	
	Despite the impossibility of recovering a complete lattice over the agreeable core as in the classic model of stability, I show that a narrower result continues to hold.
	Given that the lattice structure failed in the example because two competing cycles exist in the agreeable core, an astute reader may conjecture that the lattice structure continues to hold for workers and firms who do not lie in such cycles.
	Suggestively, say that $\age$ is a \textit{free agent} in $\mat$ if $\age$ lies on an acyclic, complete, and alternating path of  $(\Age, \rema, \mat)$.
	My first proposition justifies my terminology:
	\begin{proposition}\label{pro:nofreeblockingpairs}
		If $\mat$ is in the agreeable core, then there are no blocking pairs among free agents in $\mat$.
		Moreover, every free agent $\age$ in $\mat$ weakly prefers $\mat(\age)$ to being unmatched.
	\end{proposition}
	The proof of \Cref{pro:nofreeblockingpairs} shows that these agents are \say{free} to form blocking pairs because each can satisfy a sequence formed by alternating edges from $\rema$ and $\mat $.
	Free agents resemble the agents in the classic model: their $\rema$-partner (if any) is not concerned with the partner she finds.
	
	However, an obstacle arises because the free agents depend on $\mat$; that is, $\age$ may be a free agent in $\mat$ but not in $\nat$.
	What I can show is that, if $\mat$ and $\nat$ share the same set of free agents and they agree on the agents who are not free, then $\mat \join \nat$ is in the agreeable core.
	Toward that end, I say that $\mat$ and $\nat$ are \textit{structurally similar} if they have the same set of free agents and $\mat(\age) = \nat(\age)$ for every agent which is not free.
	The following lemma shows that structurally similar matches in the agreeable core {play nicely} with the join and meet operators defined previously:
	\begin{lemma}\label{lem:joinmeetmatch}
		Let $\mat $ and $\nat$ be structurally similar matches in the agreeable core.
		Then $\mat \join \nat $ is a match.
		The same holds for $\mat \meet \nat$.
	\end{lemma}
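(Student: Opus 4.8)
The plan is to verify that $\lat := \mat \join \nat$ meets the three defining conditions of a match. Two-sidedness is automatic, since $\lat(\wor) \in \{\mat(\wor), \nat(\wor)\} \subseteq \Fir \cup \{\wor\}$ and symmetrically $\lat(\fir) \in \Wor \cup \{\fir\}$, so the entire content is the involution property $\lat(\lat(\age)) = \age$. My first step would be to reduce this to the free agents. Since $\mat$ and $\nat$ are structurally similar they agree on every agent that is not free, so $\lat$ coincides with $\mat$ (equivalently $\nat$) on the non-free agents. I would also observe that freeness is preserved under taking $\mat$- or $\nat$-partners: the $\mat$-edge incident to a free agent lies on the linear, complete, and alternating path witnessing its freeness, so its $\mat$-partner lies on the same path and is again free, and likewise for $\nat$. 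Consequently $\lat$ sends non-free agents to non-free agents (where it agrees with the genuine match $\mat$, so the involution property is inherited) and free agents to free agents, and it remains only to check $\lat(\lat(\age)) = \age$ for free $\age$.

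The heart of the proof is a blocking-pair argument built on \cref{lem:nofreeblockingpairs}. I would establish the following claim together with its worker/firm-symmetric counterpart: if a free worker $\wor$ strictly prefers $\mat(\wor) =: \fir$ to $\nat(\wor)$, so that $\lat(\wor) = \fir$, then the free firm $\fir$ strictly prefers $\nat(\fir)$ to $\wor = \mat(\fir)$, so that $\lat(\fir) = \wor$. To see this, suppose instead that $\fir$ weakly prefers $\wor$ to $\nat(\fir)$; because $\nat(\wor) \neq \fir$ forces $\nat(\fir) \neq \wor$ and preferences are strict, $\fir$ strictly prefers $\wor$ to $\nat(\fir)$. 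Then $\wor$ and $\fir$ prefer each other to their $\nat$-partners, i.e.\ $(\wor,\fir)$ is a blocking pair of $\nat$. Both agents are free in $\nat$ (structural similarity gives the common free set), and $\nat$ is in the agreeable core, so this contradicts \cref{lem:nofreeblockingpairs}. The convention $\nat(\fir) = \fir$ for a $\nat$-unmatched firm is handled uniformly: the contradiction is exactly that $(\wor,\fir)$ blocks $\nat$, and a block is well defined whether or not $\fir$ is matched.

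To conclude, I would run a short case analysis on each free worker $\wor$, writing $\fir = \lat(\wor)$ for his weakly preferred partner in $\{\mat(\wor), \nat(\wor)\}$. If $\mat(\wor) = \nat(\wor)$ then $\mat(\fir) = \nat(\fir) = \wor$ and so $\lat(\fir) = \wor$; if $\wor$ strictly prefers $\mat(\wor)$ the claim gives $\lat(\fir) = \wor$; and if $\wor$ strictly prefers $\nat(\wor)$ the symmetric claim (swap $\mat$ and $\nat$, invoking \cref{lem:nofreeblockingpairs} for $\mat$ instead) gives $\lat(\fir) = \wor$. In every case $\lat(\lat(\wor)) = \wor$, which with the reduction above shows $\lat$ is a match. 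Finally $\mat \meet \nat$ follows from the identical argument with the roles of workers and firms interchanged, since both the join/meet definitions and \cref{lem:nofreeblockingpairs} are symmetric across the two sides.

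The step I expect to be the main obstacle is the blocking-pair claim: it is where the agreeable-core hypothesis actually enters, it must be invoked for both $\mat$ and $\nat$, and it must be made robust to agents who are unmatched under one of the two matches. The reduction to free agents is comparatively routine, but the freeness-preservation observation is essential, as it is what guarantees that $\lat$ never matches a free agent to a non-free one and thereby keeps the two regimes from interfering.
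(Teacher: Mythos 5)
Your reduction to the free agents is sound (the freeness-preservation observation is correct and the paper uses it implicitly), and your blocking-pair claim is exactly the paper's argument for one half of the involution property: if $(\mat\join\nat)(\wor)=\fir$ then $(\mat\join\nat)(\fir)=\wor$. But there is a genuine gap: this establishes $\lat(\lat(\age))=\age$ only at free \emph{workers}, and you never verify it at free \emph{firms}. What is needed there is the converse implication: if $(\mat\join\nat)(\fir)=\wor$ then $(\mat\join\nat)(\wor)=\fir$. This does not follow from your claim, and it does not follow ``by symmetry,'' because the join is not worker/firm symmetric: workers receive their more preferred partner from $\{\mat(\wor),\nat(\wor)\}$ while firms receive their less preferred from $\{\mat(\fir),\nat(\fir)\}$. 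Concretely, suppose $\lat(\fir)=\mat(\fir)=\wor$ but $\lat(\wor)=\nat(\wor)\neq\fir$. Unwinding the definitions gives only that $\fir$ strictly prefers $\nat(\fir)$ to $\mat(\fir)$ and that $\wor$ strictly prefers $\nat(\wor)$ to $\mat(\wor)$ --- both agents prefer their $\nat$-partners to their $\mat$-partners --- which is not a blocking pair of either match, so \cref{lem:nofreeblockingpairs} yields no contradiction. Your worker-side involution does show that $\lat$ is injective on free workers and that involution holds at every firm in the image of $\lat$ restricted to workers, but it leaves open precisely the bad configuration above: a free firm assigned a worker who is himself assigned to a different firm, which would make $\lat$ fail to be a match.

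The paper closes this hole with a counting argument that your proposal lacks. It first proves a rural-hospitals-type lemma (\cref{lem:ruralhospitalsmeetjoin}): $(\mat\join\nat)(\wor)\in\Fir$ iff $\mat(\wor)\in\Fir$ or $\nat(\wor)\in\Fir$, while $(\mat\join\nat)(\fir)\in\Wor$ iff $\mat(\fir)\in\Wor$ and $\nat(\fir)\in\Wor$. Setting $\Wor'=\{\wor \spbr \lat(\wor)\in\Fir\}$ and $\Fir'=\{\fir \spbr \lat(\fir)\in\Wor\}$, it then shows $|\Fir'|\leq|\Wor'|$ (via $\mat$) and $|\Wor'|\leq|\Fir'|$ (via the worker-side direction you did prove), concluding that $\lat$ restricted to $\Wor'$ is a bijection onto $\Fir'$; only then does involution at firms follow. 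So the blocking-pair half of your proof coincides with the paper's, but the cardinality half is missing --- and it is exactly the step your final paragraph did not anticipate needing, since you identified the blocking-pair claim, rather than the firm-side converse, as the main obstacle.
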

	Notably, $\mat \join \nat$ may not be structurally similar to $\mat $ and $\nat$.\footnote{
		I have an example demonstrating this (available upon request), but it is too lengthy to include because it involves eight workers and eight firms.
	}
	The (possible) structural differences between $\mat\join\nat$ and $\mat$ force us to discard any hope of obtaining a lattice-like result.
	However, the join and meet operators still produce matches in the agreeable core:
	\begin{theorem}\label{thm:joinmeet}
		Let $\mat$ and $\nat$ be structurally similar matches in the agreeable core.
		Then $\mat \join \nat$ and $\mat \meet \nat$ are both in the agreeable core.
	\end{theorem}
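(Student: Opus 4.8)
The plan is to verify the two defining properties of the agreeable core for $\lat := \mat \join \nat$ directly, and then to obtain the statement for $\mat \meet \nat$ by the symmetry between the two sides of the market: relabeling workers as firms and vice versa exchanges $\join$ and $\meet$ while preserving the status quo, the notion of agreeable coalition, and the characterization in \cref{pro:ac-iff-nbps}. By \cref{lem:joinmeetmatch}, $\lat$ is a match, so it remains only to check individual rationality and the absence of blocking coalitions. Individual rationality is immediate from the definition of $\join$: for a worker $\wor$ the join assigns the $\wpref_\wor$-greater of $\mat(\wor)$ and $\nat(\wor)$, and for a firm $\fir$ the $\wpref_\fir$-smaller of $\mat(\fir)$ and $\nat(\fir)$; since both $\mat$ and $\nat$ are individually rational, each of these is weakly preferred by the agent to its $\rema$-match, so $\lat \wpref_\Age \rema$. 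By \cref{pro:ac-iff-nbps} it then suffices to show that $\lat$ admits no blocking path.

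Suppose toward a contradiction that $\Path$ is a blocking path of $\lat$ in $(\Age, \rema, \lat \cup \Imp(\lat))$. The first step is to locate each non-$\rema$ edge of $\Path$ in the analogous graph of $\mat$ or of $\nat$. For an edge $\{\wor,\fir\}$ drawn from $\lat$ itself this is immediate: $\lat(\wor) = \fir$ forces $\fir \in \{\mat(\wor), \nat(\wor)\}$, and since $\lat(\fir) = \wor$ as well, the edge coincides with the corresponding edge of whichever of $\mat,\nat$ realizes the worker's choice. For an edge $\{\wor,\fir\}$ drawn from $\Imp(\lat)$ I would reprise the classical lattice argument: from $\fir \spref_\wor \lat(\wor)$ and the worker-max definition we obtain $\fir \spref_\wor \mat(\wor)$ and $\fir \spref_\wor \nat(\wor)$, while $\wor \spref_\fir \lat(\fir)$ together with the firm-min definition yields $\wor \spref_\fir \mat(\fir)$ or $\wor \spref_\fir \nat(\fir)$ according to whether $\lat(\fir) = \mat(\fir)$ or $\lat(\fir) = \nat(\fir)$; hence $\{\wor,\fir\} \in \Imp(\mat)$ or $\{\wor,\fir\} \in \Imp(\nat)$. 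In particular the distinguished $\Imp(\lat)$-edge of $\Path$ is a blocking pair of $\mat$ or of $\nat$, and the ``world'' ($\mat$ or $\nat$) of each non-$\rema$ edge is governed by the choice of its firm endpoint between $\mat$ and $\nat$.

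The heart of the proof, and the step I expect to be the main obstacle, is to promote this edge-by-edge information into a genuine blocking path of a single match $\mat$ or $\nat$, which would contradict \cref{pro:ac-iff-nbps} for that match. The difficulty is that consecutive non-$\rema$ edges of $\Path$ need not lie in the same world: a firm may realize its join value at $\mat$ while the next worker, reached along the shared $\rema$-edge, realizes its value at $\nat$, so that naively substituting $\mat$- or $\nat$-edges would break the alternating structure of the path. To control this I would use structural similarity essentially: since $\mat$ and $\nat$ agree off their common free-agent set and coincide with $\lat$ there, every edge on which $\lat$ can differ from $\mat$ is incident to a free agent, and \cref{lem:nofreeblockingpairs} forbids blocking pairs of $\mat$ (and of $\nat$) among free agents. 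The plan is to combine these two facts to show that the world cannot switch along $\Path$: any switch would occur across an $\rema$-edge bridging two free-agent components, and tracing the preference inequalities that define the incident $\lat$- and $\Imp(\lat)$-edges should force a blocking pair among the free agents of $\mat$ or of $\nat$, contradicting \cref{lem:nofreeblockingpairs}. Once $\Path$ is shown to be monochromatic, replacing its $\lat$-edges by the identical $\mat$-edges (resp. $\nat$-edges) exhibits $\Path$ as a blocking path of $\mat$ (resp. $\nat$), the desired contradiction. Carrying out this monochromaticity step rigorously is delicate precisely because $\lat$ itself need not be structurally similar to $\mat$ and $\nat$, so the free-agent components of $\lat$ may differ from those of $\mat$; the argument must therefore track membership in the free-agent set of $\mat$ and $\nat$, and not that of $\lat$.
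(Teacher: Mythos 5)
Your setup follows the paper closely: \cref{lem:joinmeetmatch} gives the match property, individual rationality is immediate from the definition of $\join$, the reduction to blocking paths is via \cref{pro:ac-iff-nbps}, and your edge-by-edge classification of $\lat \cup \Imp(\lat)$ inside $(\mat\cup\Imp(\mat))\cup(\nat\cup\Imp(\nat))$ is exactly the paper's \cref{lem:structureofjoin}. The gap is in the step you yourself flag as the heart of the argument. Your mechanism for monochromaticity---that a ``world switch'' along $\Path$ must create a blocking pair among free agents, contradicting \cref{lem:nofreeblockingpairs}---does not hold. Consider free agents $\wor,\fir,\wor',\fir'$ with $\rema(\fir)=\wor'$, $\mat(\wor)=\fir\neq\nat(\wor)$, and $\nat(\wor')=\fir'\neq\mat(\wor')$. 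Then $\{\wor,\fir\}$ and $\{\wor',\fir'\}$ are $\lat$-edges joined by the $\rema$-edge $\{\fir,\wor'\}$, yet $\{\wor,\fir\}$ lies only in $\mat\cup\Imp(\mat)$ (the firm-min definition forces $\nat(\fir)\spref_\fir\wor$, so $\{\wor,\fir\}\notin\nat\cup\Imp(\nat)$) while $\{\wor',\fir'\}$ lies only in $\nat\cup\Imp(\nat)$: a switch. But every inequality extractable at this switch runs the wrong way for your purposes. The relations $\nat(\fir)\spref_\fir\mat(\fir)$ and $\fir'\spref_{\wor'}\mat(\wor')$ express envy toward agents who do \emph{not} reciprocate---for instance $(\wor',\fir')$ is not a blocking pair of $\mat$ precisely because the firm-min definition gives $\mat(\fir')\spref_{\fir'}\wor'$---and the rest is just individual rationality. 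No blocking pair among free agents is implied, so the induction along the path has no engine. Note also that, since the theorem says blocking paths of $\lat$ do not exist at all, monochromaticity of such paths is only \emph{vacuously} true; any proof of it must already contain the full contradiction, so it cannot serve as a preliminary lemma established by local reasoning.

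What the paper does instead is never to touch the original path's coloring. It first proves (\cref{lem:blockingpathsmeetjoin}) that any blocking path of $\mat\join\nat$ must contain an edge crossing between the free agents and the non-free agents: the distinguished $\Imp$-edge cannot lie among free agents (\cref{lem:structureofjoin} plus \cref{lem:nofreeblockingpairs}), and the path cannot lie wholly among non-free agents because there $\mat\join\nat$ coincides with $\mat$. It then isolates the segment of $\Path$ between two crossing edges $\{\wor,\fir\}$ and $\{\wor',\fir'\}$ (with $\wor,\fir'$ free and the segment, containing the $\Imp$-edge, lying entirely in the non-free region where $\mat$, $\nat$, and $\mat\join\nat$ all agree), and \emph{splices} onto its two ends the linear, complete, alternating paths in $(\Age,\rema,\mat)$ that witness the freeness of $\wor$ and $\fir'$. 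The spliced path is complete, alternating, lives in a single graph $(\Age,\rema,\mat\cup\Imp(\mat))$ (or the $\nat$-graph, depending on where \cref{lem:structureofjoin} places $\{\wor',\fir'\}$), and contains the $\Imp$-edge, hence is a blocking path of $\mat$ or of $\nat$---the contradiction. This splicing with the freeness-witness paths is the idea your proposal is missing, and without it (or an equivalent device) the transfer step cannot be completed.
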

	The conflict of interest continues to hold for structurally similar matches.
	That is, if $\mat$ and $\nat$ are in the agreeable core \textit{and are structurally similar}, then if every worker weakly prefers $\mat$ to $\nat$, then every firm weakly prefers $\nat$ to $\mat$.
	Conversely, in the classic matching framework, $\rema(\age) = \age$ for every agent and thus every agent is free.
	Every match is then structurally similar and hence my \Cref{thm:joinmeet} generalizes standard results.

	\section{Incentives in the Propose-Exchange algorithm}\label{sec:manipulability}
	This section addresses the incentive properties of the Propose-Exchange algorithm.
	The results provide insight into how robust the PE is to manipulation by participants.
	This is crucial for implementing the PE in practice because the output of the PE is only guaranteed to be in the agreeable core if the inputs are accurate.
	I find that while the PE is more susceptible to more kinds of manipulations than either the DA or the TTC, the new manipulations are difficult to execute.
	
	I consider two kinds of manipulations in these subsections.
	In the first, I allow a worker to arbitrarily misreport his preference.\footnote{It is well-known that a firm can manipulate the DA by misreporting her preference, so I only consider the problem from the worker's perspective.}
	In the second, I allow a worker and a firm to create an artificial initial match, a misreport of $\rema$.
	
	For clarity through this section, I write $\wpref_\wor ' $-Propose stage to indicate the operation of the Propose stage on the matching problem when $\wor$'s preference $\wpref_\wor $ is replaced by $\wpref_\wor ' $.
	A similar shorthand is used when $\rema $ is replaced by $\rema '$.
	
	\subsection{Preference Manipulation}
	In this subsection I discuss preference manipulations by workers.
	I allow a worker $\wor$ to misreport his preference $\wpref_\wor$ by reporting $\wpref_\wor ' $ instead.
	The intuition is that a worker may benefit from manipulating which agents (including himself) are active in the Exchange stage.
	I show that there may exist a worker who can profitably misreport his preference in the PE.
	However, this problem is not unique to the PE, but exists for every algorithm that produces a match in the agreeable core.
	These results are in contrast to their parallels in existing theory: no worker can profitably misreport his preferences in the DA or TTC \citep{dubins_machiavelli_1981, dur_two-sided_2019}.
	I connect these results by showing that only workers who participate in both stages of the PE can profitably misreport their preferences.
	
	Formally, \textit{mechanism} $\mech$ is a function of $(\Wor, \Fir, \wpref, \rema)$ that returns a match $\mech(\Wor, \Fir, \wpref, \rema)$.
	\begin{definition}
		A mechanism is \textit{preference manipulable} if there is at least one matching problem $(\Wor, \Fir, \wpref, \rema)$, worker $\wor$, and preference $\wpref_\wor ' $ such that
		\begin{align*}
			 \mech(\Wor, \Fir, \wpref_{-\wor}, \wpref_{\wor} ' , \rema) \wpref_\wor  \mech(\Wor, \Fir, \wpref, \rema).
		\end{align*}
		In words, if for some example a worker $\wor$ would rather report $\wpref_\wor ' $ instead of $\wpref_\wor$, then $\mech$ is preference manipulable.
	\end{definition}
	A natural question arises as to whether a mechanism exists that is non-preference-manipulable and produces a match in the agreeable core.
	\Cref{pro:man} provides a negative answer:
	\begin{proposition}\label{pro:man}
		If $\mech(\wpref)$ is in the agreeable core for all $\wpref$, then $\mech$ is preference manipulable.
	\end{proposition}
	I prove \Cref{pro:man} through a counterexample.
	The counterexample is driven my the possibility of bossiness within the DA.
	A mechanism is \textit{bossy} if an agent can, by misreporting his preference, affect the matches of the other agents without changing his own.
	Consider the example in \Cref{fig:bossy}.
	Worker $\wor_3$ can cause workers $\wor_1$ and $\wor_2$ to exchange partners by misreporting a preference for firm $A$.
	In the counterexample in the proof of \Cref{pro:man}, there is a worker $\wor_1$ who would like to exchange initial partners with $\wor_2$.
	Worker $\wor_1$ reduces $\wor_2$'s ability to match to an initially unmatched firm by including that firm in his own preferences.
	Effectively, if $\wor_2$ is a free agent then $\wor_1$ will not be able to match to $\rema(\wor_1)$.
	Thus, $\wor_1$ manipulates $\wor_2$'s options to keep $\wor_2$ matched to $\rema(\wor_2)$ to cause an exchange.
	
	\begin{figure} 
		\centering
		\begin{subfigure}[b]{0.45\textwidth}
			\centering
			\begin{tikzpicture}
				\def\vshift{-0.575};
				\def\lshift{2.0};
				\def\rshift{1.88};
				\def\colwid{0.93};

				\def\firstrow{0.99+\vshift};
				\def\secondrow{-0.16+\vshift};
				\def\thirdrow{-1.31+\vshift};
				\def\mlw{0.5mm};


				\draw[line width = \mlw, dashed, dash pattern=on 8pt off 3pt] (-\colwid-\lshift,\firstrow) ellipse (0.3 and 0.3); 
				\draw[line width = \mlw, dashed, dash pattern=on 8pt off 3pt] (-\lshift,\firstrow) ellipse (0.3 and 0.3); 
				\draw[line width = \mlw, dashed, dash pattern=on 8pt off 3pt] (+\colwid-\lshift,\firstrow) ellipse (0.3 and 0.3); 
				\draw[line width = \mlw, dashed, dash pattern=on 8pt off 3pt] (-\colwid+\rshift,\thirdrow) ellipse(0.3 and 0.3); 
				\draw[line width = \mlw, dashed, dash pattern=on 8pt off 3pt] (0+\rshift,\secondrow) ellipse (0.3 and 0.3); 
				\draw[line width = \mlw, dashed, dash pattern=on 8pt off 3pt] (+\colwid+\rshift,\firstrow) ellipse (0.3 and 0.3); 
				
				\node (aligner) {};
				\node[pref] (players) [left = 2mm of aligner] {
					\begin{tabular}{C | C | C}
						\wor_1 & \wor_2 & \wor_3  \\
						\hline  \hline 
						\fir_A & \fir_B & \fir_C \\
						\fir_B & \fir_A& \fir_A \\
						\fir_C &  \fir_C & \fir_B \\
					\end{tabular}
				};    
				
				\node[pref] (coaches) [right = 2mm of aligner] {
					\begin{tabular}{C | C | C}
						\fir_A & \fir_B & \fir_C\\
						\hline  \hline 
						\wor_2 &   \wor_1 & \wor_3 \\
						\wor_3 &  \wor_2 &  \wor_2\\
						\wor_1  & \wor_ 3 & \wor_1  \\
					\end{tabular}
				};
			\end{tikzpicture}
			\caption{Outcome of DA before preference swap}
			\label{fig:bossy1}
		\end{subfigure}
		\hfill
		\begin{subfigure}[b]{0.45\textwidth}
			\centering
			\begin{tikzpicture}
				\def\vshift{-0.575};
				\def\lshift{2.0};
				\def\rshift{1.88};
				\def\colwid{0.93};

				\def\firstrow{0.99+\vshift};
				\def\secondrow{-0.16+\vshift};
				\def\thirdrow{-1.31+\vshift};
				\def\mlw{0.5mm};
				
				\draw[line width = \mlw, dashed, dash pattern=on 8pt off 3pt] (-\colwid-\lshift,\secondrow) ellipse (0.3 and 0.3); 
				\draw[line width = \mlw, dashed, dash pattern=on 8pt off 3pt] (-\lshift,\secondrow) ellipse (0.3 and 0.3); 
				\draw[line width = \mlw, dashed, dash pattern=on 8pt off 3pt] (+\colwid-\lshift,\secondrow) ellipse (0.3 and 0.3); 
				\draw[line width = \mlw, dashed, dash pattern=on 8pt off 3pt] (-\colwid+\rshift,\firstrow) ellipse(0.3 and 0.3); 
				\draw[line width = \mlw, dashed, dash pattern=on 8pt off 3pt] (0+\rshift,\firstrow) ellipse (0.3 and 0.3); 
				\draw[line width = \mlw, dashed, dash pattern=on 8pt off 3pt] (+\colwid+\rshift,\firstrow) ellipse (0.3 and 0.3); 
				
				\node (aligner) {};
				\node[pref] (players) [left = 2mm of aligner] {
					\begin{tabular}{C | C | C}
						\wor_1 &\wor_ 2 & \wor_3  \\
						\hline  \hline 
						\fir_A & \fir_B & \fir_A \\
						\fir_B & \fir_A& \fir_C \\
						\fir_C &  \fir_C & \fir_B \\
					\end{tabular}
				};    
				
				\node[pref] (coaches) [right = 2mm of aligner] {
					\begin{tabular}{C | C | C}
						A & B & C\\
						\hline  \hline 
						\wor_2 &   \wor_1 & \wor_3 \\
						\wor_3 &  \wor_2 &  \wor_2\\
						\wor_1  &  \wor_3 & \wor_1  \\
					\end{tabular}
				};
			
				\draw[<->, line width= \mlw, lightgray] 
				(+\colwid-\lshift+0.18,\secondrow)
				to[out = 10, in = -10]
				(+\colwid-\lshift+0.18,\firstrow);
			\end{tikzpicture}
			\caption{Outcome of DA before preference swap}
			\label{fig:bossy2}
		\end{subfigure}
		\caption{Worker $\wor_3$'s exchange of the order of $\fir_C$ and $\fir_A$ in his preference leaves his partner unchanged, but causes workers $\wor_1$ and $\wor_2$ to receive new partners.}
		\label{fig:bossy}
	\end{figure}
	
	\Cref{thm:nomanPE} formalizes this intuition.
	It shows that a worker only has two avenues through which to profit from a misreport.
	First, the worker may profit from finding a partner in the Exchange stage rather than the Propose stage.
	This is similar to truncating\footnote{moving his initial partner higher; see \cite{roth_truncation_1999}} his preferences.
	Second, the worker may find his partner in the Exchange stage but choose to manipulate which workers who participate in the Exchange stage, as in the counterexample previously discussed.
	\begin{theorem}\label{thm:nomanPE}
		If worker $\wor$ has a profitable misreport $\wpref_\wor ' $, then $\wor$ is active in both stages of the $\wpref_\wor ' $-Propose-Exchange algorithm.
	\end{theorem}
	Because whether a worker is active in the Propose stage is independent of his reported preferences, \Cref{thm:nomanPE} further restricts the set of workers who can profitably misreport.
	A worker can only profitably misreport if he both has a $\rema$-firm and is active in the Propose stage.
	For a market designer, these conditions are easy to verify and provide an upper bound on the number of workers who can profitably manipulate.
	Additionally, \Cref{thm:nomanPE} highlights the informational requirements necessary to profitably misreport.
	A worker must be able to predict the outcome of the Exchange stage, which itself is a complicated object and depends on which agents participate in the Exchange stage.

	\subsection{Manipulating $\rema$}
	In this subsection I complement the analysis of how preferences may be profitably misreported with an analysis of how the initial match may be profitably misreported.
	The concern is that because the initial match $\rema$ affects the output of the PE, a pair of agents may find it in their interest to create a superfluous artificial agreement.
	I show that, while such a manipulation is possible, it usually requires an additional preference manipulation to be successful.
	I conclude that profitably misreporting the initial match requires a similar level of sophistication as a preference manipulation.
	
	Formally, let $\rema$ be given (and fixed throughout this subsection) with $\matExch$ the output of the $\rema$-PE.
	Let worker $\wor$ and firm $\fir$ be both $\rema$-unmatched, and let $\rema ' $ be formed from $\rema$ by matching $\wor$ and $\fir$.
	Let $\matProp ' $ and $\matExch ' $ be the respective outputs of the $\rema ' $-Propose and $\rema ' $-Exchange stages.
	If both $\wor $ and $\fir$ strictly prefer $\matExch ' $ to $\matProp ' $, then $\wor$ and $\fir $ can \textit{profitably misreport} an initial match.
	Profitably misreporting an initial match requires that both $\wor$ and $\fir$ strictly gain from the deviation.
	
	I show that, although it is possible for the PE to be manipulated in this way, its extent is quite limited and involves substantial risk for the worker.
	First, I show in \Cref{thm:art} that any profitable misreport pushes $\wor$ and $\fir$ from the Propose stage into the Exchange stage ($\matProp ' (\wor) = \fir$).
	The intuition is that if $\matProp ' (\fir) \neq \wor$, then $\fir$ has received a better partner in the $\rema ' $-Propose stage and thus that all of the workers have received a worse partner.
	Second, \Cref{thm:art} also shows that for any profitable misreport, $\wor$ \textit{cannot} be active in the $\rema ' $-Propose stage.
	\begin{theorem}\label{thm:art}
		If $\wor$ and $\fir$ can profitably misreport an initial match, then $\matProp ' (\wor) = \fir$ and $\wor$ is not active in the $\rema ' $-Propose stage.
	\end{theorem}
	The interpretation of \Cref{thm:art} is that $\fir$ must prefer $\wor$ to $\fir$'s match when $\wor$ is removed from the matching problem entirely.
	In effect, $\fir$ faces little risk from the misreporting because $\wor$ is as good as (if not better than) what $\fir$ would receive if $\wor$ were not present.
	For $\wor$ however, an initial match with $\fir$ could carry great risk if $\fir$ is low on $\wor$'s preferences relative to $\matProp(\wor)$.
	This strategy may backfire because a mistake in $\wor$'s calculations (or a misrepresentation by $\fir$) could render $\wor$ assigned to $\fir$.
	
	In summary, neither misreporting preferences or the initial match appears likely to succeed without detailed knowledge of the other participants' preferences.
	Misreports frequently expose misreporting agents to a large downside risk.
	These incentive findings inform the broader applicability of the PE, which I discuss in the following section.

	\section{Conclusion}\label{sec:discussion}
	This paper has shown the strength of the agreeable core in providing a theory of equilibrium for a broad class of matching markets.
	The initial match organically models numerous real-world examples, and the Propose-Exchange algorithm is ready to be implemented in a variety of applications.
	In this closing section I discuss three topics.
	First, I close with a discussion of my modeling choices and possible extensions.
	Second, I provide guidance on applying the Propose-Exchange algorithm in several environments.
	Third, I review the connections between this paper and existing research.
	
	\subsection{Future Directions}
	The many-to-one setting introduces complex constraints because firms participating in an agreeable coalition must consider multiple binding agreements.
	The motivation behind my focus on one-to-one matching is driven by two competing models of a firm in many-to-one markets.
	In the first model, each firm is modeled as a collection of unit-demand sub-firms, each endowed with the master firm's preference over individual workers.
	This is the model used in most applications because eliciting a single ranking over workers from each firm is easier than a preference over sets of workers.
	The agreeable core then treats each sub-firm as an individual agent.
	A worker is initially matched to a single sub-firm, and he must include \textit{that} sub-firm in any agreeable coalition.
	This model straightforwardly extends the one-to-one theory, and the same results hold.\footnote{
		There are some interesting additional questions in this environment, such as how a worker should construct his preference over two identical sub-firms which are initially matched to different workers, and whether a firm could rearrange the initial matches of its sub-firms to construct a new agreeable and blocking coalition?
	}
	In the second model, each firm is treated as an agent with a preference over \textit{sets} of workers.
	Even in the classic model without an initial match, restrictions such as substitutability need to be placed on firm preferences to guarantee existence.\footnote{See \cite{echenique_theory_2004} for a unified treatment of the many-to-many case.}
	Beyond the question of existence, the requirement that $\nat(\Coal) = \Coal$ for a coalition $\Coal$ blocking with match $\nat$ implies that the size of an agreeable blocking coalition increases dramatically.
	For example, if a firm seeks to join a coalition, that coalition must include all of its initial workers (who themselves are possibly matched to other firms) and all of the workers it will match to (who themselves may be initially matched to other firms).
	In a market with many workers initially matched, agreeable coalitions quickly must contain almost every agent in the model.
	The usefulness of the agreeable core in this context is unclear, and adapting it to these environments is a future avenue of research.
	
	The agreeable core can provide insights into the formation of the initial match $\rema$.
	The model is agnostic as to how $\rema$ is determined.
	It could be interesting to use the agreeable core or the Propose-Exchange algorithm in combination with a model of the formation of $\rema$ to understand pre-matching dynamics.
	Because the initial match is instrumental in the PE, developing a theory of pre-match formation could be insightful for other market-design applications.
	\Cref{thm:art} addresses one such question, but more questions abound.
	
	\subsection{Applications}
	The PE can unify out-of-match residencies with the NRMP, creating a larger overarching match that nests both and guarantees Pareto efficiency while allowing for early matches.
	It is well-known that a fraction of medical residencies are offered independently of the centralized clearinghouse operated by the NRMP.
	These out-of-match residency programs entice prospective residents to sign binding contracts prior to the operation of the NRMP because these contracts provide guarantees to risk-averse residents.
	Because the rules of the NRMP forbid residents from participating if they have already accepted an out-of-match offer, these two markets operate independently.\footnote{Recently, the NRMP has implemented the \say{All-In} policy in an attempt to curtail residency programs from offering out-of-match residencies. The All-In policy requires that any residency program participating in the NRMP offer residencies exclusively through the NRMP.}
	The out-of-match offers introduce inefficiency by dividing the market temporally.
	Under the PE, the out-of-match market operates essentially unchanged: programs can entice residents with early offers.
	However, if the NRMP uses the Propose-Exchange algorithm, the residents and programs who have already formed contracts are allowed to participate as agents under an initial match.
	\Cref{rem:efficiency} guarantees that the final match is Pareto efficient.
	A similar construction can be used to integrate Early Decision agreements into the regular college admission cycle.
	
	The PE also allows for asymmetrical obligations, such as professional sports contracts or tenured positions, which bind participants unequally.
	For example, an athlete's contract with a team may allow the team to trade the athlete to another without the athlete's consent, but the athlete cannot \say{trade} his team without the team's consent.
	Similarly tenured professor or teacher's contract allows her leave her institution unilaterally, restricts the institutions ability to remove her; see \cite{combe_design_2022} for an application to the French public school system.
	To incorporate this one-way obligation into the PE, I modify the participants' preferences.
	For the professor $\wor$ tenured at (that is, initially matched to) institution $\fir$, I modify $\fir$'s preference $\wpref_\fir$ by moving $\wor$ to the bottom of $\wpref_\fir$.
	This guarantees that $\wor$ is \textit{never} required to remain at $\fir$, but always may choose to do so.
	Without an initial match, the standard model is instead forced to move $\wor$ to the top of $\wpref_\fir$; this achieves the same result ($\wor$ can always match to $\fir$), but suffers from inefficiency \citep{pereyra_dynamic_2013}.
	The one-way contracts that allow for trades, as in professional sports, can similarly be included under additional assumptions.\footnote{
		For instance, a \say{tradable} contract can be included through modifying the athlete's preference by putting the team and being unmatched at the bottom of the preference.
		Therefore, the athlete is always matched to a team, but the identity of the team can change.
		However, there is a tension: if the athlete can express a preference for being unmatched, then the team can terminate the athlete at will.
		Hence, in this model it is essential that the athlete can only be traded to a set of teams which he prefers to being unmatched.\\
		Again, there is a limit to who can have tradable contracts.
		If a team is allowed to trade an athlete, then the PE algorithm must have the teams propose and point.
		This precludes any athlete from trading her team.
		In professional sports this is a reasonable assumption, but caution is needed in more general applications.
}
	
	The initial match can also be leveraged to achieve minimum quotas that balance individual preferences and institutional needs.
	Examples of minimum quotas are minimum enrollment at a school or in a class, or guarantees that some \say{rural} hospitals are matched to residents.
	For instance, a minimum quota of students may be required for a school to operate or for a class to be offered.
	The PE can incorporate these quotas by using the initial match to assign the minimum number of students to the school or class.
	By then modifying the school's or class's priority order (preferences) over students by moving the initially assigned students to the bottom, just above being unmatched, the designer guarantees that the school or class will enroll at least its minimum quota.
	The initial assignments are only binding if no other student desires the school or class.
	In this way, the initial match requires the minimal restriction on students' choices while meeting the institutional objective.
	The agreeable core provides a clear justification for why some students' choices are restricted.
	If a restricted student would like to attend another school, then at least one school would not meet its minimum quota or some student would be harmed.

	\subsection{Connection to the Literature}\label{sec:literaturereview}
	This paper develops a novel theory of matching under initial contracts that bridges object allocation and two-sided matching.
	It connects several literatures on two-sided matching.
	An exhaustive review of the literature is far beyond the scope of this paper, so I list the only the most closely related work and its connections with this paper.
	
	I integrate the classic model of two-sided matching with recent advances in recontracting.
	In the classic model, a stable match always exists and can be found by the DA \citep{gale_college_1962}.
	It is well known that the set of pairwise-stable matches corresponds to the core of a related cooperative game \citep{Roth_Sotomayor_1990}.
	Later research largely discarded the connection with the core in favor of pairwise-stability notions.
	When considering matching with an initial match (in which the intersection of pairwise stable and individually rational outcomes may be empty), \cite{pereyra_dynamic_2013} and \cite{guillen_matching_2012} generalize pairwise-stability by partitioning claims between valid and invalid claims and then removing all valid claims.
	This may be strongly inefficient \citep{combe_reallocation_2024, combe_design_2022}, and hence a mechanism with minimal envy is considered \citep{kwon_justied-envy-minimal_2023}.
	Although efficient, these minimal envy mechanisms are inscrutable to participants: the designer allows some claims but not others only because doing so minimizes some objective.
	My  paper advances this literature by reconnecting the initial back to the core, a more interpretable solution.
	I both minimize envy as in \cite{kwon_justied-envy-minimal_2023} but also provide a clear definition of valid and invalid claims as in \cite{pereyra_dynamic_2013}.
	
	Research in school choice has made extensive use of both the DA and TTC.
	\cite{abdulkadiroglu_school_2003} suggests the Deferred Acceptance algorithm from \cite{gale_college_1962} or the Top Trading Cycles algorithm from \cite{shapley_cores_1974} as desirable and implementable solutions.
	Both algorithms run in polynomial time, are relatively easy to describe, and are strategyproof.
	The DA is fair (no blocking pairs) while the TTC is efficient (Pareto efficient for the students).
	A plethora of researchers seek to combine or modify the two algorithms to reconcile these properties, allowing certain priority violations. \citep{abdulkadirog_generalized_2011,dur_school_2019,kesten_two_2006, kwon_justied-envy-minimal_2023, reny_efficient_2022, troyan_essentially_2020, morrill_making_2013, dur_impossibility_2017}.
	Papers in this vein typically define a set of properties of a mechanism (such as the allowable priority violations, efficiency, strategyproofness, etc.), and then present a satisfactory algorithm, typically a variation of the DA or TTC.
	My work complements this approach by an algorithm derived from first principles rather than with specific objectives in mind.
	My approach draws from cooperative game theory rather than emphasizing certain desirable properties of the final allocation.
	
	A connected branch of matching theory develops methods for matching with minimum quotas.
	Schools are modeled as having both a maximum capacity for students but also a minimum required quota of students.
	One approach is to allow for wasted seats but not envy \citep{fragiadakis_improving_2017}.
	A separate approach uses an auxiliary \say{master list} \citep{ueda_strategy-proof_2012} or \say{precedence list} \citep{fragiadakis_strategyproof_2016, hamada_strategy-proof_2017} as a means to break ties: if two students wish to take an empty seat but the minimum quota requires that only one may do so, the list determines which worker can.
	The algorithms described in both approaches typically either sacrifice efficiency (based on the DA) or fairness (based on the TTC), and both require that all agents are mutually acceptable.
	I develop both approaches by endogenizing the master list into the initial match and not requiring any assumptions on preferences.
	Although a master list is natural in some applications, whether a master list or the initial match is more appropriate depends on the application.
	
	Surprisingly, no authors have connected matching with minimum quotas and the matching with an initial match.
	I combine these subfields with the observation that, if the initial match provides a guarantee for both workers \textit{and firms}, then minimum quotas are the special case when every firm is assigned workers equal to its minimum quota in the initial match.
	The initial match provides a different justification for why some blocking pairs are allowable but others are not, one which I think applies well to school choice.
	
	Finally, the paper closest in spirit to ours is \cite{abdulkadiroglu_house_1999}, \say{House Allocation with Existing Tenants.}
	Their model is one-sided, and they show that a hybrid of the Serial Dictatorship algorithm and the TTC algorithm provides an efficient improvement over the initial match.
	I present a two-sided model with a hybrid algorithm between the DA and the TTC.
	Although my models are different, my approach is remarkably similar to theirs.

	\bibliography{bibliography}

	\appendix
	\setcounter{lemma}{0}
	\renewcommand{\thelemma}{\Alph{section}.\arabic{lemma}}
	\setcounter{definition}{0}
	\renewcommand{\thedefinition}{\Alph{section}.\arabic{definition}}
	\section{Omitted Proofs}\label{appendix}
	
%
	Throughout the appendix I abuse notation and write $\age \in \edg$ to mean that either the first or second coordinate of $\edg$ is $\age$.
	\begin{lemma}\label{lem:altClean}
		If $\Path$ is a complete and alternating path in $(\Age, \rema \cup \mat \cup \Imp(\mat))$, then every agent contained in $\Path$ is in exactly one edge from $\rema$ and one edge from $\mat \cup \Imp(\mat)$.
	\end{lemma}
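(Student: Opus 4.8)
The plan is to reduce the statement to a claim about where a vertex's two edges sit in the sequence, and then read off the colors from the alternating condition. By completeness every vertex $\age$ in $\Path$ lies in exactly two edges of $\Path$, so it suffices to show that these two edges have different colors, i.e.\ that one lies in $\rema$ and the other in $\mat \cup \Imp(\mat)$. Because the path alternates, any two edges that are \emph{consecutive} in $(\edg_1, \dots, \edg_n)$ automatically have different colors; so the whole lemma reduces to showing that the two edges containing a given vertex are consecutive in the sequence.

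To establish consecutiveness I would track the vertices shared by neighboring edges. For each $1 \le k < n$ pick $x_k \in \edg_k \cap \edg_{k+1}$. Completeness forces $x_k \neq x_{k+1}$: otherwise this common vertex would sit in the three distinct edges $\edg_k, \edg_{k+1}, \edg_{k+2}$, contradicting that each vertex lies in exactly two. Consequently every interior edge is exactly $\edg_k = \{x_{k-1}, x_k\}$, and each $x_k$ is contained in precisely the two consecutive edges $\edg_k$ and $\edg_{k+1}$ (these are two edges containing it, and completeness says there are no others). The alternating property then places each such $x_k$ in one $\rema$-edge and one $(\mat \cup \Imp(\mat))$-edge. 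The only vertices not of this form are the ``extra'' endpoints of $\edg_1$ and $\edg_n$; if $\edg_1$ (resp.\ $\edg_n$) is a loop, its single vertex $x_1$ (resp.\ $x_{n-1}$) is already covered, and otherwise a short argument shows such an endpoint's second edge cannot be an interior edge, so it must be shared between $\edg_1$ and $\edg_n$.

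This leaves the genuinely delicate case, which I expect to be the main obstacle: the ``closed'' configuration in which $\edg_1$ and $\edg_n$ meet at a common vertex $s$, so that $s$'s two edges are $\edg_1$ and $\edg_n$ rather than a consecutive pair. Here I cannot invoke alternation directly, since the hypothesis only constrains consecutive edges $\edg_k, \edg_{k+1}$ for $k < n$ and says nothing about the wrap-around pair $(\edg_n, \edg_1)$. The resolution is to use that $(\Age, \rema, \mat \cup \Imp(\mat))$ is bipartite between $\Wor$ and $\Fir$: since $\rema$, $\mat$, and $\Imp(\mat)$ each only pair a worker with a firm, every non-loop edge crosses the bipartition. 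When $\edg_1$ and $\edg_n$ share a vertex, the edge set of $\Path$ is a connected, $2$-regular, loopless subgraph, i.e.\ a single cycle of length $n$, and bipartiteness forces $n$ to be even. An even-length alternating sequence has $\edg_1$ and $\edg_n$ of opposite colors, so $s$ again lies in one $\rema$-edge and one $(\mat \cup \Imp(\mat))$-edge, completing the proof. One could alternatively dispatch the subcase where both $\edg_1$ and $\edg_n$ would be $\rema$-edges by noting that $\rema$ is a matching and hence places $s$ in at most one $\rema$-edge, but bipartiteness handles every subcase uniformly.
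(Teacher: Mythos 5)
Your proof is correct and takes essentially the same route as the paper's: vertices lying in two consecutive edges are dispatched immediately by the alternating property, and the only delicate case is a wrap-around vertex in $\edg_1 \cap \edg_n$, which both you and the paper resolve by a parity argument showing $n$ is even, so that $\edg_1$ and $\edg_n$ have opposite colors. The only (cosmetic) difference is how evenness is obtained --- you invoke bipartiteness of the resulting cycle between $\Wor$ and $\Fir$, while the paper uses the $\rema$-bijection between the workers and firms contained in $\Path$ --- and your endpoint/loop bookkeeping is somewhat more explicit than the paper's.
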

	\begin{proof}
		Let $\Path = (\edg_1 , \ldots, \edg_n )$ be a complete and alternating path in $(\Age, \rema \cup \mat \cup \Imp(\mat))$ and let $\age$ be contained in $\Path$.
		If $n = 2$, then the statement is trivial because completeness implies every $\age \in \Path$ is in both $\edg_1$ and $\edg_2$ and $\Path$ alternating implies that one of $\{\edg_1, \edg_2\}$ is in $\rema$ and the other is in $\mat \cup \Imp(\mat)$.
		Hence, let $n \geq 3$.
		
		Again, if $\age \in \edg_k \cap \edg_{k+1}$ for $k \geq 1$ then the statement is true because completeness implies $\edg_k $ and $\edg_{k+1}$ are the only edges in $\Path$ containing $\age$, both $\edg_k$ and $\edg_{k+1}$ cannot be from $\rema$ by construction, and $\Path$ alternating implies that both $\edg_k$, and $\edg_{k+1}$ cannot be from $\mat \cup \Imp(\mat)$.
		Therefore, one of $\{\edg_k, \edg_{k+1}\}$ is from $\rema $ and the other from  $\mat \cup \Imp(\mat)$.
		Hence, let $\age \in \edg_1 \cap \edg_n$ and thus $\Path$ is cyclic.
		Let $\age$ be a worker; the argument is symmetric if $\age$ is a firm.
		
		Because there is a bijection\footnote{namely, $\rema$} between the workers and firms contained in $\Path$ and every agent in $\Path$ is contained in two edges of $\Path$, $n$ is even.
		Therefore, if $\edg_1 \in \rema$ then $\edg_n \in \mat \cup \Imp(\mat)$, and if $\edg_1 \in \mat \cup \Imp(\mat)$ then $\edg_n \in  \rema$.
		This proves the result.
	\end{proof}
	
	\begin{proof}[\unskip\nopunct]
		\textbf{\textit{Proof of \Cref{pro:ac-iff-nbps}:}}
		Let $\mat$ be individually rational.
		
		\textit{For the $(\Rightarrow)$ direction:}
		I prove the contrapositive; that is, if $\mat$ admits a blocking path, then $\mat$ is not in the agreeable core.
		Let $\Path = (\edg_1, \ldots , \edg_n)$ be a blocking path in $(\Age, \rema \cup \mat \cup \Imp(\mat))$.
		Note that $\rema(\Path) = \Path$ and $\mat^\Path(\Path) = \Path$.
		
		By the definition of $\Imp(\mat)$, it follows that $\mat^\Path \wpref_\Path \mat$.
		Because $\Path$ is blocking, there is an edge $\edg$ in $\Path$ that is also in $\Imp(\mat)$.
		Hence, both agents in $\edg$ strictly prefer $\mat^\Path $ to $\mat$.
		Therefore, $\Path$ is an agreeable blocking coalition and $\mat$ is not in the agreeable core.
		
		\textit{For the $(\Leftarrow)$ direction:}
		I prove the contrapositive; that is, if $\mat$ is not in the agreeable core then $\mat$ admits a blocking path.
		Let $\mat$ be not in the agreeable core.
		Then there exists an agreeable blocking coalition $\Coal$ that blocks $\mat$ through $\nat$.
		
		Let $\age_1$ be an agent in $\Coal$ such that $\nat (\age_1) \spref_{\age_1} \mat(\age_1)$; such an agent exists by the definition of a blocking coalition.
		I will construct a path $\Path$ from $\age_1$ by iteratively adding alternating edges from $\rema$ and $\nat $ to $\{\age_1, \nat(\age_1) \}$, first with increasing indices and then with decreasing indices.
		I assume that $\age_1 \in \Wor$; the other case follows from a symmetric argument.
		
		Starting with $\edg_1 \equiv (\age_1, \nat(\age_1))$ and $\Path_1 \equiv (\edg_1)$, do the following iteratively.
		Choose an edge $\edg_{k+1}$ from $\rema$ or $\nat $ that is not already present in $\Path_{k}$ such that the second coordinate of $\edg_k$ is the first coordinate of $ \edg_{k+1}$, then define $\Path_{k+1}$ by appending $\edg_{k+1}$ to $\Path_k$.
		Continue until no more edges may be added in this way.
		Finally, repeat the same process starting from $\edg_1$, but \textit{prepending} edges $\edg_0, \edg_{-1}, \ldots$ to $\Path_{k}$.
		
		Observe that $\Path$ is a path in $(\Age, \rema \cup \mat \cup \Imp(\mat))$ because $\nat \wpref_\Coal \mat$.
		Next, observe that because every agent in $\Path$ is contained in at most two edges (one from $\rema$ and the other form $\nat$); every agent in $\Path$ is contained in at least two edges because edges are added until no more can be added without including repeats and therefore $\Path$ is complete.
		Also, $\Path$ is alternating because $\edg_{2k} \in \rema$ and $\edg_{2k-1} \in \nat $.
		Finally, observe that $\edg_1 \in \Imp(\mat)$.
		Therefore, $\Path$ is a blocking path of $\mat$.
		Therefore $(\Age, \rema \cup \mat \cup \Imp(\mat))$ contains a blocking path, completing the proof.
	\end{proof}
	
	\textit{\textbf{Introduction to the proofs of \Cref{lem:propose} and \Cref{pro:ord}:}}
	
	Before proving \Cref{lem:propose}, I first introduce some notation and a short result:
	\begin{definition}
		I say that loop $\edg = (\age, \age)$ is a \textit{proposal source} if either 
		\begin{enumerate}
			\item[1(a)]: $(\age, \age) \in \rema$ and $\age \in \Wor$, or
			\item[1(b)]: $(\age, \age)\notin \rema$ and $\age \in \Fir$.
		\end{enumerate}
		I say that loop $\edg = (\age, \age)$ is a \textit{proposal sink} if $\edg$ in not a proposal source; that is, if either 
		\begin{enumerate}
			\item[2(a)]: $(\age, \age) \notin \rema$ and $\age\in \Wor$ or
			\item[2(b)]: $(\age, \age)\in \rema$ and $\age\in \Fir$.
		\end{enumerate}
	\end{definition}
	A straightforward parity argument shows that if $\Path = (\edg_1, \ldots, \edg_n)$ is a complete, alternating, and acyclic path in $(\Age, \rema \cup \mat \cup \Imp(\mat))$, then $\edg_1$ is a proposal source and $\edg_n$ is a proposal sink.
	\begin{lemma}\label{lem:SS}
		Let $\Path = (\edg_1, \ldots, \edg_n)$ be a complete, alternating, and acyclic path in $(\Age, \rema \cup \mat \cup \Imp(\mat))$ with $n \geq 3$.
		Then $\edg_1$ is a proposal source and $\edg_n$ is a proposal sink.
	\end{lemma}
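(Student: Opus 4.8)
The plan is to run the announced parity argument, tracking two parities along $\Path$ and showing that the feature distinguishing a source from a sink changes exactly once between the two ends. I would start from the structural facts already noted for linear, complete, alternating paths: the terminal edges are loops, say $\edg_1 = \{\age\}$ and $\edg_n = \{\bge\}$ with $\age \neq \bge$ (as $\Path$ is linear), and no interior edge is a loop, so the interior edges $\edg_2, \ldots, \edg_{n-1}$ form a path alternating between workers and firms. Indeed, every non-loop edge of $(\Age, \rema, \mat \cup \Imp(\mat))$ joins a worker and a firm, since every non-loop edge --- whether in $\rema$, in $\mat$, or in $\Imp(\mat)$ --- pairs agents on opposite sides. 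Hence the side (worker/firm) flips across each of the $n-2$ interior edges, so $\age$ and $\bge$ lie on the same side if and only if $n$ is even.

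Next I would convert the source/sink conditions --- stated via the $\rema$-status of the terminal vertex, where $\age \in \rema$ abbreviates $\rema(\age) \neq \age$ --- into a condition on which edge set the terminal loop belongs to. By \cref{lem:altClean}, each vertex of $\Path$ meets exactly one edge from $\rema$ and one from $\mat \cup \Imp(\mat)$. For the endpoint $\age$ this yields a dichotomy: if $\age \notin \rema$ then its only $\rema$-edge is the loop $\{\age\}$ itself, so $\edg_1 \in \rema$; if $\age \in \rema$ then its $\rema$-edge is the non-loop $\{\age, \rema(\age)\}$, which must be $\edg_2$, forcing $\edg_1 \in \mat \cup \Imp(\mat)$. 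Reading this against the definition, $\edg_1$ is a source exactly when its side is \say{aligned} with its loop's edge set (a worker with a $\mat \cup \Imp(\mat)$-loop, or a firm with an $\rema$-loop) and a sink exactly when the two are crossed.

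I would then finish with a short XOR count. Encode the side of an endpoint as $\sigma \in \{0,1\}$ (worker $0$, firm $1$) and the edge set of its loop as $\epsilon \in \{0,1\}$ ($\mat \cup \Imp(\mat)$ as $0$, $\rema$ as $1$), so that, by the alignment above, the endpoint is a source iff $\sigma = \epsilon$. Write $(\sigma, \epsilon)$ for $\age$ and $(\sigma', \epsilon')$ for $\bge$. The bipartite count gives $\sigma \oplus \sigma' = 1$ iff $n$ is odd, while alternation of the edge sets across the $n$ edges of $\Path$ places $\edg_1$ and $\edg_n$ in the same set iff $n$ is odd, i.e.\ $\epsilon \oplus \epsilon' = 1$ iff $n$ is even. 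Exactly one of \say{$n$ odd} and \say{$n$ even} holds, so $(\sigma \oplus \epsilon) \oplus (\sigma' \oplus \epsilon') = (\sigma \oplus \sigma') \oplus (\epsilon \oplus \epsilon') = 1$; thus $\age$ and $\bge$ receive opposite labels, one a source and the other a sink.

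The main obstacle is entirely bookkeeping: correctly pinning the correspondence between $\age \in \rema$ and the edge set of the terminal loop via \cref{lem:altClean}, and fixing the encoding so that \say{source} is the aligned case. Once that is done, the key observation is that the side parity flips precisely when $n$ is odd while the edge-set parity flips precisely when $n$ is even, so their sum --- which is exactly what separates source from sink --- flips regardless of the parity of $n$. This is what lets the argument avoid a case split and close uniformly.
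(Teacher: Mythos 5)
Your proof is correct and is essentially the paper's own argument: the paper tracks exactly the same two parities (the worker/firm side flips across each of the $n-2$ interior non-loop edges, while membership in $\rema$ versus $\mat\cup\Imp(\mat)$ alternates across all $n$ edges) but spells them out as a four-case analysis on the parity of $n$ and the type of $\edg_1$, which your XOR bookkeeping compresses into one line; your version is, if anything, tighter, since it shows directly that the two endpoint labels differ and so never needs the paper's closing \say{symmetric argument} to rule out the possibility that both endpoints are sinks. One caveat: you gloss $\age\in\rema$ as $\rema(\age)\neq\age$, whereas the paper's intended reading is $\{\age\}\in\rema$, i.e.\ $\rema(\age)=\age$ --- in the proof of \cref{lem:propose} a worker sitting at a proposal source is deduced to satisfy $\rema(\age_1)=\age_1$ --- so your \say{aligned} characterization of a source is exactly inverted relative to the paper's. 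For the present lemma this is harmless, because the conclusion is invariant under exchanging \say{source} and \say{sink} and your parity computation proves precisely that the two labels disagree; but carrying your reading into the proof of \cref{lem:propose}, where the specific identity of the source (an $\rema$-unmatched worker, or an $\rema$-matched firm who releases her worker) is what kicks off the induction, would cause genuine errors.
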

	\begin{proof}
		Because $\Path$ is acyclic and complete, $\edg_1$ and $\edg_n$ are both loops.
		Let $\edg_1 = (\age_1, \age_1)$ and $\edg_n = (\age_{n-1}, \age_{n-1})$.
		Similarly, let $\edg_2 = (\age_1, \age_2)$ and $\edg_{n-1} = (\age_{n-2}, \age_{n-1})$.
		Because $n \geq 3$, $\age_1 \neq \age_2$ and $\age_{n-2} \neq \age_{n-1}$.
		
		Consider the following cases:
		\begin{enumerate}
			\item $\age_1 \in \Wor$:
			Then because there are no edges between two distinct workers, it follows that $
			\age_2 \in \Fir$.
			Therefore, $\edg_2 \in \mat \cup \Imp(\mat)$.
			This implies that $\edg_1 \in \rema$.
			Therefore $\edg_1$ is a proposal source. 
			
			\item $\age_1 \in \Fir$:
			Then because there are no edges between two distinct workers, it follows that $
			\age_2 \in \Wor$.
			Therefore, $\edg_2 \in \rema$.
			This implies that $\edg_1 \in \mat\cup \Imp(\mat)$.
			Therefore $\edg_1$ is a proposal source. 
		\end{enumerate}
		Symmetric arguments show that $\edg_n$ is a proposal sink.

	\end{proof}
	
	\begin{proof}[\unskip\nopunct]
		\textbf{\textit{Proof of \Cref{lem:propose}:}}
		
		Suppose (toward a contradiction) that $\Path = (\edg_1, \ldots, \edg_n)$ is an acyclic blocking path of $\matProp$.
		Because $\Path$ is acyclic and complete, $\edg_1$ and $\edg_n$ are both loops and $n \geq 3$.
		By \Cref{lem:SS}, $\edg_1$ is a proposal source and $\edg_n$ is a proposal sink.
		Let
		\begin{align*}
			\edg_1 &= (\age_1) \\
			\edg_2 &= (\age_1, \age_2) \\ 
			& \:\;\vdots \\
			\edg_{n-1} &= (\age_{n-2}, \age_{n-1})\\
			\edg_{n} &= (\age_{n-1})
		\end{align*}
		
		I argue by induction that every worker $\age_k\in \Path$ makes a proposal during the \Propose algorithm.
		Because every agent contained in $\Path$ weakly prefers $\mat^\Path $ to $\matProp$, it follows that every worker contained in $\Path$ \textit{who proposes} proposed to his $\mat^\Path $-partner.
		In my base case I show that the worker with the lowest index contained in $\Path$ proposes during the \Propose algorithm.
		There are two possibilities:
		\begin{enumerate}
			\item \textit{$\age_1$ is a worker:}
			Because $\edg_1$ is a proposal source by definition $\rema(\age_1) = \age_1$.
			Hence $\age_1$ begins the \Propose algorithm activated.
			Therefore, $\age_1$ proposes during the \Propose algorithm.
			
			\item \textit{$\age_1$ is a firm:}
			Because $\edg_1$ is a proposal source, by definition $\rema(\age_1) \neq \age_1$.
			Therefore $ \rema(\age_1) = \age_2$.
			Because $\age_1$ prefers $\mat^\Path$ to $\rema$ and $\mat^\Path(\age_1) = \age_1$ because $\edg_1$ is loop, it follows that $\age_2$ is activated at the start of the \Propose algorithm.
			Therefore, $\age_2$ proposes during the \Propose algorithm.
		\end{enumerate}
		
		For the inductive step, suppose $\age_{k-1} \in \Wor$ makes a proposal; I will show that the worker with the next highest index makes a proposal.
		If $k-1 \geq n-2$, then $\age_{k-1}$ is the worker with the highest index and the claim is vacuous; therefore, suppose $k-1 < n-2$.
		Because $\mat^\Path(\age_{k-1}) = \age_k$, it follows that $\age_{k-1}$ proposes at some point to $\age_{k}$.
		Because $\matProp$ is individually rational and $\rema(\age_k) = \age_{k+1}$, it follows that $\age_k$ weakly prefers $\age_{k-1}$ to $\age_{k+1}$.
		Therefore $\age_{k+1}$ is activated at some point and thus $\age_{k+1}$ makes at least one proposal during the \Propose algorithm, concluding my inductive argument.
		
		Next, I show that an agent contained in a proposal sink never rejects a proposal from their $\mat^\Path$-partner.
		If $\age_{n-1}$ is a worker, then he never rejects a proposal from himself.
		If $\age_{n-1}$ is a firm, then $\rema(\age_{n-1}) = \age_{n-1}$ by definition.
		Because $\age_{n-1}$ prefers $\mat^\Path$ to both $\rema$ and $\matProp$ and because $\age_{n-1}$ receives no proposals she prefers to $\matProp(\age_{n-1})$ (by construction of $\matProp$), it follows that $\age_{n-1}$ does not reject a proposal from $\mat^\Path(\age_{n-1})$.
		
		Finally, I show that no worker contained in $\Path$ is rejected by his $\mat^\Path$-partner.
		To see this, suppose (toward a contradiction) that $k-1$ is the largest index such that $\age_{k-1}$ is rejected by $\mat^\Path(\age_{k-1})$.
		Because a proposal sink does not reject a proposal by his or her $\mat^\Path$-partner, it follows that $k-1 <  n-2$ (that is, $\age_{k-1}$ is not one of the last two agents in the path).
		
		Because $\age_{k}$ prefers $\age_{k-1}$ to $\matProp(\age_{k})$ and yet $\age_{k}$ rejects $\age_{k-1}$, it must be that $\rema(\age_{k}) = \matProp(\age_{k})$ (by construction of $\matProp$).
		Therefore $\age_{k}$ is matched to $\age_{k+1}$ by both $\rema$ \textit{and} $\matProp$.
		Because matches are bijective, I have $\matProp (\age_{k+1}) = \rema(\age_{k+1}) =\age_{k}$.
		Consider that, because $\Path$ is a complete and $n \geq 3$, it follows that $\mat^\Path(\age_{k+1}) \neq \matProp(\age_{k+1})$.
		Therefore $\age_{k+1}$ must be rejected by $\mat^\Path (\age_{k+1})$, a contradiction to my supposition that ${k-1}$ is the largest index for which a worker is rejected by his $\mat^\Path $-match.
		
		Therefore, because no worker in $\Path$ is rejected by his $\mat^\Path$-partner, it follows that $\mat^\Path$ agrees with $\matProp$ on $\Path$.
		Hence, every edge in $\Path$ from $\matProp \cup \Imp(\matProp)$ is from $\matProp$.
		But because $\Path $ is a blocking path, it must contain an edge from $\Imp(\matProp)$.
		Because $\mat \cap \Imp(\matProp) = \emptyset$, this is a contradiction.
		Therefore no blocking path of $\matProp$ is acyclic.
	\end{proof}
	
	\begin{proof}[\unskip\nopunct]
		\textbf{\textit{Proof of \Cref{pro:ord}:}}
		I say that a \textit{proposal order} is a function that, at every step of the Propose phase, indicates which worker makes the next proposal.
		Let $T$ and $T '$ be two proposal orders, and let the output of the Propose stage using order $T$ be $\mat$ and using $T ' $ be $\mat ' $.
		Suppose (toward a contradiction) that $\mat \neq \mat ' $.
		Let 
		\begin{align*}
			U &= \{\wor \in \Wor \spbr \mat(\wor) \neq \mat ' (\wor)\} \\
			V &= \{\wor \in \Wor \spbr \wor \t{ proposes under both $T$ and $T ' $} \}.
		\end{align*}
		There are two cases:
		\begin{enumerate}
			\item $U \cap V \neq \emptyset$:
			WLOG, there is some worker in $U \cap V$ who strictly prefers $\mat ' $ to $\mat$.
			Let $\wor$ be the first such worker who is rejected by $\fir ' \equiv \mat ' (\wor)$ in the Propose stage under $T$.
			Because $\fir ' $ is $\wor$'s $\mat ' $-partner, this implies that $\fir ' $ prefers $\wor$ to being unmatched.
			Therefore, $\fir ' $ must reject $\wor$ in favor of some $\wor^*$.
			Because $\fir ' $ is $\wor$'s $\mat ' $-partner, this implies that $\wor^* $ does not propose to $\fir ' $ under $T ' $.
			
			Because $\wor^*$ makes a proposal under $T$, it follows that there is some sequence of workers $\wor_1, \ldots, \wor_{n-1}, \wor^* \equiv \wor_n $ such that $\wor_1$ or $\rema(\wor_1)$ is a proposal source, and each $\wor_k$ makes the first proposal to $\rema(\wor_{k+1})$ under $T$.
			Let $k$ be the greatest index such that $\wor_k \in V$.
			Then it follows that $\wor_k$ strictly prefers $\mat ' $ to $\mat$.
			Therefore, $\wor_k$ must be rejected by $\mat ' (\wor_k)$ earlier than $\wor $ is rejected by $\fir ' $ under $T$, a contradiction.
			
			\item $U \cap V = \emptyset$:
			Observe that $U$ is nonempty by supposition.
			Let $\wor^* \in U$ and WLOG let $\wor^*$ strictly prefer $\mat$ to $\mat ' $.
			Thus, $\wor^*$ must make a proposal under $T$.
			It follows that there is some sequence of workers $\wor_1, \ldots, \wor_{n-1}, \wor^* \equiv \wor_n $ such that $\wor_1$ or $\rema(\wor_1)$ is a proposal source, and each $\wor_k$ makes the first proposal to $\rema(\wor_{k+1})$ under $T$.
			Observe that $\wor_1 \in V$.
			Furthermore, if $\wor_k \in V$, then $\mat(\wor_k) = \mat ' (\wor_k)$ by supposition.
			Hence, $\wor_{k+1}$ makes a proposal under both $T$ and $T ' $.
			Therefore, $\wor_{k+1} \in V$.
			It follows that $\wor^* \in V$, a contradiction to the supposition that $U \cap V$ is empty.
		\end{enumerate}
		Therefore, $\mat = \mat ' $.
	\end{proof}
	
	\begin{proof}[\unskip\nopunct]
		
		\textbf{\textit{Proof of \Cref{lem:exchange}:}}
		Suppose (toward a contradiction) $\Path = (\edg_1, \ldots \edg_n)$ is a cyclic blocking path in $(\Age, \rema \cup \matExch \cup \Imp(\matExch))$.
		Because there is a bijection\footnote{namely, $\rema$} between the workers and firms contained in $\Path$, $n$ is even.
		Define $m \equiv \frac{n}{2}$.
		
		From $\Path$ (after a possible relabeling) define a vector of agents $(\age_1, \age_2, \ldots, \age_{n} \equiv \age_0)$ such that $(\age_{k-1}, \age_{k}) = \edg_{k -1 }$, $\age_1 \in \Wor$, and $\edg_1 \in \Imp(\matExch)$.
		Because $\Path$ is alternating, every odd agent is a worker and every even agent is a firm.
		
		I first show that every agent in $\Path$ is active in the \Exchange stage.
		To see this, suppose (toward a contradiction) that some worker $\age_k$ in $\Path$ is not active during the \Exchange stage.
		Then $\age_k$ makes a proposal during the \Propose stage to $\age_{k+1}$.
		Therefore, $\age_{k+2}$ makes a proposal during the \Propose stage.
		I can iterate this argument to show that every worker in $\Path$ makes a proposal during the \Propose stage.
		Because $\Path$ is a blocking path, each firm in $\Path$ prefers her respective proposal to her $\matProp$-partner.
		Because $\age_{k-2}$ is rejected by $\age_{k-1}$, it necessarily follows that $\matProp(\age_k) = \age_{k-1}$.
		Therefore, $\age_k$ is active in the \Exchange stage, a contradiction.
		Therefore, every agent in $\Path$ is active during the \Exchange stage.
		
		Let $t_k$ be the iteration of the \textbf{while} \ldots \textbf{do}  loop of the \Exchange algorithm that $\age_k$ sits down in.\footnote{That is, if $\age_k$ sits down on the fourth iteration of the while loop, then $t_k = 4$.}
		During the \Exchange algorithm every worker $\age_{2k-1}$ points to firm $\age_{2k}$; hence, firm $\age_{2k}$ sits down weakly earlier than worker $\age_{2k-1}$.
		In symbols, $t_{2k-1} \geq t_{2k}$ for all $1 \leq k \leq m$.
		Because $\edg_1 \in \Imp(\matExch)$, it follows that $t_{1} > t_{2}$.
		Therefore,
		\begin{align*}
			\sum_{k = 1}^m t_{2k-1} > \sum_{k = 1}^m t_{2k} 
		\end{align*}
		
		However, every worker $\age_{2k+1}$ sits down at the same time firm $\age_{2k}$ sits down.
		In symbols, $t_{2k+1} = t_{2k}$ for all $1 \leq k \leq m$.
		Therefore,
		\begin{align*}
			\sum_{k = 1}^m t_{2k+1} = \sum_{k = 1}^m t_{2k} 
		\end{align*}
		Because $\sum_{k = 1}^m t_{2k+1} = \sum_{k = 1}^m t_{2k-1}$, I reach a contradiction.
	\end{proof}

	\begin{proof}[\unskip\nopunct]
		
		\textbf{\textit{Proof of \Cref{pro:nofreeblockingpairs}:}}
		For the first claim, suppose (toward a contradiction) that $\wor$ and $\fir$ are both free agents in $\mat$ who also both prefer each other to $\mat(\wor)$ and $\mat(\fir)$, respectively.
		I construct a blocking path in $(\Age, \rema \cup \mat \cup \Imp(\mat))$, a contradiction to the supposition that $\mat$ is in the agreeable core.
		
		Because $\wor$ is a free agent in $\mat$, $\wor$ lies on an acyclic, complete, and alternating path $\Path_\wor$ of $(\Age, \rema, \mat)$.
		Rewrite $\Path_\wor$ such that 
		\begin{align*}
			\Path_\wor &= \big(\edg_1, \ldots, \edg_{k-1},\big(\rema(\wor), \wor\big), \big(\wor, \mat(\wor)\big), \ldots\big) 
		\end{align*}
		Similarly, there is a complete and alternating $\Path_\fir$ such that 
		\begin{align*}
			\Path_\fir &= \big(\ldots, \big(\mat(\fir), \fir\big), \big(\fir, \rema(\fir)\big), \edg_{k+1}, \ldots, \edg_n\big) 
		\end{align*}
		There are two cases:
		\begin{enumerate}
			\item 
			\textit{$\Path_\wor$ and $\Path_\fir$ do not intersect:}
			Then
			\begin{align*}
				\big(\edg_1, \ldots, \edg_{k-1}, \big(\wor, \fir\big) , \edg_{k+1}, \edg_n\big)
			\end{align*} 
			is a blocking path of $\mat$.
			\item 
			\textit{$\Path_\wor$ and $\Path_\fir$ do intersect:}
			Then let $i$ be the greatest index less than $k$ such that $\edg_i$ is in $\Path_\fir$.
			Let $\edg_j$ be the edge in $\Path_\fir$ such that $\edg_i = \edg_j$.
			Therefore the path
			\begin{align*}
				\big(\edg_j, \ldots, \edg_{k-1}, \big(\wor, \fir\big) , \edg_{k+1}, \ldots \edg_{j - 1}\big)
			\end{align*} 
			is a blocking path of $\mat$.
		\end{enumerate}
		In either case there is a blocking path of $\mat$.
		But then $\mat$ is not in the agreeable core, a contradiction.
		
		For the second claim I can repeat the argument from the first claim, substituting the edge $(\wor, \wor)$ for $\{\wor, \mat(\wor)\}$ in path $\Path_\wor$ and $(\fir, \fir)$ for $\{\mat(\fir), \fir\}$ in path $\Path_\fir$.
	\end{proof}

	\begin{lemma}\label{lem:ruralhospitalsmeetjoin}
		Let $\mat $ and $\nat$ be structurally similar matches in the agreeable core.
		Then $(\mat \join \nat )(\wor) \in \Fir$ if and only if $\mat(\wor) \in \Fir$ \underline{or} $\nat(\wor) \in \Fir$.
		Similarly, $(\mat \join \nat )(\fir) \in \Wor$ if and only if $\mat(\fir) \in \Wor$ \underline{and} $\nat(\fir) \in \Wor$.
		A symmetric result holds for $\meet$.
	\end{lemma}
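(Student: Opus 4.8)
The plan is to reduce both biconditionals to the pointwise definition of the join together with the second claim of \cref{lem:nofreeblockingpairs}. Recall that $\mat \join \nat$ assigns each worker his $\wpref_\wor$-better partner from $\{\mat(\wor), \nat(\wor)\}$ and each firm her $\wpref_\fir$-worse partner from $\{\mat(\fir), \nat(\fir)\}$. First I would dispose of the non-free agents: since $\mat$ and $\nat$ are structurally similar, $\mat(\age) = \nat(\age)$ for every non-free $\age$, so $(\mat \join \nat)(\age)$ equals this common value and the two sides of each biconditional hold or fail simultaneously. It therefore remains only to treat free agents, and for those I would invoke \cref{lem:nofreeblockingpairs}, which says a free agent weakly prefers its match to being unmatched in both $\mat$ and $\nat$; by strictness of preferences this is a \emph{strict} preference whenever the agent is actually matched. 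The upshot I will use repeatedly is that a matched free worker strictly prefers its firm to $\wor$, and a matched free firm strictly prefers its worker to $\fir$.

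For the worker biconditional, note that $(\mat\join\nat)(\wor)$ is the $\wpref_\wor$-maximum of $\mat(\wor)$ and $\nat(\wor)$, whose possible values lie in $\Fir \cup \{\wor\}$. If at least one of $\mat(\wor), \nat(\wor)$ is a firm, say $\mat(\wor) \in \Fir$, then the observation above gives $\mat(\wor) \spref_\wor \wor$, so the maximum is strictly better than $\wor$ and hence a firm. If neither is a firm, both equal $\wor$ and the maximum is $\wor \notin \Fir$. This establishes the ``or'' form. The firm statement is the mirror image: $(\mat\join\nat)(\fir)$ is the $\wpref_\fir$-minimum of $\mat(\fir)$ and $\nat(\fir)$. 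If both are workers the minimum is a worker, so the join lies in $\Wor$; if at least one of them equals $\fir$, then any worker occurring among $\{\mat(\fir),\nat(\fir)\}$ is (by the observation) strictly preferred to $\fir$, forcing the minimum to be $\fir \notin \Wor$. This yields the ``and'' form.

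Finally, the $\meet$ statement follows by exactly the same argument with ``better'' and ``worse'' interchanged, which is the symmetry between workers and firms and between $\join$ and $\meet$ built into the definition; there $\meet$ gives workers the ``and'' form and firms the ``or'' form. I do not anticipate a genuine obstacle here. The only point demanding care is the appeal to strictness of preferences to upgrade the weak inequality of \cref{lem:nofreeblockingpairs} to a strict one, since it is precisely this strictness that lets a single firm in $\{\mat(\wor),\nat(\wor)\}$ pull the worker's join-maximum up to a firm, and a single self-match in $\{\mat(\fir),\nat(\fir)\}$ pull the firm's join-minimum down to $\fir$.
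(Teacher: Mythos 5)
Your proof is correct and follows essentially the same route as the paper's: dispose of non-free agents via structural similarity, then for free agents use the pointwise definition of $\join$ together with \cref{lem:nofreeblockingpairs} (upgraded to strict preference by strictness) to show that a single firm among $\{\mat(\wor),\nat(\wor)\}$ forces the worker's maximum to be a firm, while a single self-match among $\{\mat(\fir),\nat(\fir)\}$ forces the firm's minimum to be $\fir$. No gaps; this matches the paper's case analysis in both directions of both biconditionals.
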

	\begin{proof}
		Both statements clearly hold for every agent that is not free in $\mat$ (and $\nat$ because $\mat$ and $\nat$ are structurally similar).
		Hence, I show that the statements hold for the free agents in $\mat$.
		
		For the first statement:
		\begin{itemize}
			\item
			\textit{For the $(\Rightarrow)$ direction:}
			I show that if $\mat(\wor) \notin \Fir$ and $\nat(\wor) \notin \Fir$, then $(\mat \join \nat )(\wor) \notin \Fir$.
			Then $\mat(\wor) = \nat(\wor) = \wor$, which implies $(\mat \join \nat )(\wor)  = \wor$.
			Thus $(\mat \join \nat )(\wor) \notin \Fir$.
			
			\item
			\textit{For the $(\Leftarrow)$ direction:}
			I show that if $\mat(\wor) \in \Fir$ or $\nat(\wor) \in \Fir$, then $(\mat \join \nat )(\wor) \in \Fir$.
			To see this, note that if $\mat(\wor) = \fir$ or $\nat(\wor) = \fir$, then $\wor$ strictly prefers $\fir$ to being unmatched ($\wor$) by \Cref{pro:nofreeblockingpairs}.
			Therefore, $\mat \join \nat$ cannot leave $\wor$ unmatched and therefore $(\mat \join \nat )(\wor) \in \Fir$.
		\end{itemize}
		
		For the second statement:
		\begin{itemize}
			\item
			\textit{For the $(\Rightarrow)$ direction:}
			I show that if either $\mat(\fir) \notin \Wor$ or $\nat(\fir) \notin \Wor$, then $(\mat \join \nat )(\fir) \notin \Wor$.
			Then $\mat(\fir) = \fir$ or $\nat(\fir) = \fir$.
			By \Cref{pro:nofreeblockingpairs}, $\fir$ weakly prefers both $\mat(\fir)$ and $\nat(\fir)$ being unmatched.
			By the definition of $\join$, $(\mat \join \nat)(\fir) = \fir$.
			Therefore, $(\mat \join \nat )(\fir) \notin \Wor$.
			
			\item
			\textit{For the $(\Leftarrow)$ direction:}
			I show that if $\mat(\fir) \in \Wor$ and $\nat(\fir) \in \Wor$, then $(\mat \join \nat )(\fir) \in \Wor$.
			Then $\{\mat(\fir), \nat(\fir)\} \subseteq \Wor$.
			Therefore $(\mat \join \nat) (\fir) \in \Wor$.
		\end{itemize}
		This completes the proof.
	\end{proof}

	\begin{proof}[\unskip\nopunct]
		\textbf{\textit{Proof of \Cref{lem:joinmeetmatch}:}}
		I draw my proof from the proof of Theorem 2.16 in \cite{Roth_Sotomayor_1990}.
		I show that $\mat \join \nat$ is a match; the argument for $\mat \meet \nat$ is symmetric.
		
		Because the free agents are the same in $\mat$ and $\nat$, I need only to show that $\mat \join \nat $ is a match on the free agents of $\mat $ and $\nat$; all other matches are left unchanged because $\mat$ and $\nat$ are structurally similar.
		It is immediate from the definition of $\join$ that items 1 and 2 from the definition of a match hold.
		That is, I only need that $(\mat \join \nat )(\age) = \bge \iff (\mat \join \nat )(\bge) = \age$.
		Of course, if $\age = \bge$ then the statement is tautological; hence, I prove for $\wor \in \Wor$ and $\fir \in \Fir$:
		\begin{align*}
			(\mat \join \nat )(\wor) = \fir \iff (\mat \join \nat )(\fir) = \wor.
		\end{align*}
		
		\textit{For the $(\Rightarrow)$ direction:}
		I show that $(\mat \join \nat )(\wor) = \fir$ implies $(\mat \join \nat )(\fir) = \wor$.
		I consider the case when $\mat(\wor) = \fir$; the other case is symmetric.
		Suppose (toward a contradiction) that $(\mat \join \nat )(\fir) \neq \wor$.
		Then $(\mat \join \nat )(\fir)  = \nat(\fir)$.
		Then $\fir$ strictly prefers $\wor$ to $\nat(\fir) $ and $\wor$ strictly prefers $\fir$ to $\nat(\wor)$, so $\wor$ and $\fir$ is a blocking pair of $\nat$, a contradiction by \Cref{pro:nofreeblockingpairs}.
		This completes this direction.
		
		\textit{For the $(\Leftarrow)$ direction:}
		I show that $(\mat \join \nat )(\fir) = \wor$ implies $(\mat \join \nat )(\wor) = \fir$.
		I define a sequence of sets, then study their cardinality.
		Let
		\begin{align*}
			\Wor ' &\equiv \{\wor \in \Wor \spbr (\mat \join \nat )(\wor) \in \Fir\} \\
			&= \{\wor \in \Wor \spbr \mat (\wor) \in \Fir \t{ or } \nat (\wor) \in \Fir\} 
			&\because \Cref{lem:ruralhospitalsmeetjoin}.
		\end{align*}
		and 
		\begin{align*}
			\Fir '  &\equiv  \{\fir \in \Fir \spbr (\mat \join \nat)(\fir) \in \Wor \}\\
			&= \{\fir \in \Fir \spbr \mat(\fir) \in \Wor \t{ and } \nat(\fir) \in \Wor \} 
			&\because \Cref{lem:ruralhospitalsmeetjoin}.
		\end{align*}
		Observe the following relations: 
		\begin{align*}
			|\Fir ' | &= |\mat(\Fir ' )| & \because \t{$\mat$ is a match} \\
			\mat(\Fir ' ) &\subseteq \Wor ' & \because \t{Definition of $\Fir ' $ and $\Wor ' $}
		\end{align*}
		Therefore $ |\Fir ' | \leq | \Wor '  |$.
		Similarly,
		\begin{align*}
			|\Wor '| & =  | (\mat \join \nat)(\Wor ' ) |  & \because \t{$(\Rightarrow)$ implication} \\
			(\mat \join \nat) (\Wor ' ) &\subseteq \Fir ' & \because \t{$(\Rightarrow)$ implication} 
		\end{align*}
		Therefore $ |\Wor ' | \leq | \Fir '  |$ and thus $|\Wor ' | = |\Fir ' |$.
		Therefore $ |(\mat \join \nat)(\Wor ') | = |\Fir ' | $ and thus $(\mat \join \nat)(\Wor ')  =  \Fir ' $.
		
		The final string of implications is as follows:
		If $(\mat \join \nat)(\fir) \in \Wor$, then $\fir \in \Fir ' $.
		If $\fir \in \Fir'$, then there exists $\wor$ in $\wor \in \Wor '$ such that $(\mat \join \nat)(\wor) = \fir$.
		This completes this direction.

		Therefore, $\mat \join \nat $ satisfies item 3 from the definition of a match and thus $\mat \join \nat $ is a match.
	\end{proof}

	\begin{lemma}\label{lem:structureofjoin}
		Let $\mat $ and $\nat$ be structurally similar matches in the agreeable core.
		Then $\mat \join \nat \subseteq \mat \cup \nat$ and $\Imp(\mat \join \nat) \subseteq \Imp(\mat) \cup \Imp(\nat)$.
		The same holds for $\mat \meet \nat$.
	\end{lemma}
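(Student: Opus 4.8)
The plan is to prove both containments directly from the pointwise definitions of $\join$ and $\meet$, carrying out the join case in detail and deferring the meet case to a mirror-image argument (swapping the roles of $\Wor$ and $\Fir$, and of ``better'' and ``worse''). Throughout I use that, by \cref{lem:joinmeetmatch}, $\mat \join \nat$ is a genuine match, so that $\Imp(\mat \join \nat)$ is well defined, together with the defining fact that $(\mat \join \nat)(\age) \in \{\mat(\age), \nat(\age)\}$ for every $\age$.

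First I would dispatch $\mat \join \nat \subseteq \mat \cup \nat$, which is essentially immediate. Every element of the set $\mat \join \nat$ has the form $\{\age, (\mat \join \nat)(\age)\}$, and since $(\mat \join \nat)(\age)$ is by construction either $\mat(\age)$ or $\nat(\age)$, this edge (or loop) lies in $\mat$ or in $\nat$. The identical observation gives $\mat \meet \nat \subseteq \mat \cup \nat$.

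The substantive claim is $\Imp(\mat \join \nat) \subseteq \Imp(\mat) \cup \Imp(\nat)$. I would fix an arbitrary edge of $\Imp(\mat \join \nat)$ (every such edge is a worker--firm edge or a loop, since workers rank only firms and firms only workers) and show it belongs to $\Imp(\lat)$ for a suitable $\lat \in \{\mat, \nat\}$. The key structural fact is that the join weakly improves every worker and weakly harms every firm relative to \emph{both} $\mat$ and $\nat$: for each worker $\wor$ one has $(\mat \join \nat)(\wor) \wpref_\wor \mat(\wor)$ and $(\mat \join \nat)(\wor) \wpref_\wor \nat(\wor)$, while $(\mat \join \nat)(\fir)$ is the $\wpref_\fir$-worse of $\mat(\fir)$ and $\nat(\fir)$. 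Now take $\{\wor, \fir\} \in \Imp(\mat \join \nat)$. From $\fir \spref_\wor (\mat \join \nat)(\wor)$ and the two worker inequalities, transitivity gives $\fir \spref_\wor \mat(\wor)$ and $\fir \spref_\wor \nat(\wor)$. On the firm side, let $\lat \in \{\mat,\nat\}$ be the match attaining the worse partner, that is $\lat(\fir) = (\mat \join \nat)(\fir)$; then $\wor \spref_\fir \lat(\fir)$, while the worker side already gives $\fir \spref_\wor \lat(\wor)$, so $\{\wor, \fir\} \in \Imp(\lat)$. Loops are handled the same way: a worker loop $\{\wor\}$ forces $\wor \spref_\wor (\mat\join\nat)(\wor) \wpref_\wor \mat(\wor)$, hence $\{\wor\} \in \Imp(\mat)$, and a firm loop $\{\fir\}$ lands in $\Imp(\lat)$ for the $\lat$ attaining the worse partner.

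I expect no serious obstacle; the only point requiring care is the asymmetry in how the two sides are pinned down. Because the worker receives his better of the two partners, his strict preference for $\fir$ transfers to both $\mat$ and $\nat$; because the firm receives her worse of the two partners, her strict preference for $\wor$ transfers only to the match attaining that worse partner. Selecting $\lat$ as that match is exactly what makes both half-edges strict simultaneously, and this choice is automatically consistent with the worker condition, since the latter holds for either choice of $\lat$. The containment $\Imp(\mat \meet \nat) \subseteq \Imp(\mat) \cup \Imp(\nat)$ then follows by the symmetric argument, now selecting the match attaining the \emph{worker's} worse partner (the meet harms workers and improves firms). Combining the two containments for $\join$, and then for $\meet$, completes the proof.
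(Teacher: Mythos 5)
Your proof is correct, and it takes a genuinely different route from the paper's. For the containment $\Imp(\mat \join \nat) \subseteq \Imp(\mat) \cup \Imp(\nat)$, the paper argues by cases on how many endpoints of an improving edge lie in the set of free agents $\AgeFree$: if neither (or exactly one) endpoint is free, structural similarity forces $\mat \join \nat$ to agree with $\mat$ (or with $\nat$) on the relevant agents, placing the edge in $\Imp(\mat)$ or $\Imp(\nat)$, while the case of two free endpoints is ruled out by \cref{lem:nofreeblockingpairs}. You instead argue pointwise from the lattice definition of $\join$: since $(\mat \join \nat)(\wor) \wpref_\wor \mat(\wor)$ and $(\mat \join \nat)(\wor) \wpref_\wor \nat(\wor)$, the worker's strict preference for $\fir$ transfers to both matches by transitivity, whereas the firm's join partner is literally $\lat(\fir)$ for some $\lat \in \{\mat, \nat\}$, and choosing that $\lat$ makes both halves of the edge strict simultaneously. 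Your argument needs no case analysis, never invokes free agents or \cref{lem:nofreeblockingpairs}, handles loops explicitly (the paper writes every improving edge as $\{\wor, \fir\}$ and leaves loops implicit), and is more general: it applies to any two matches whose pointwise join is a match, using structural similarity and agreeable-core membership only through \cref{lem:joinmeetmatch}. What the paper's route buys is that it keeps the free/non-free decomposition in the foreground --- the language in which the downstream results \cref{lem:blockingpathsmeetjoin} and \cref{thm:joinmeet} are phrased --- and it directly exhibits the side fact that no improving edge of the join connects two free agents; but that fact also follows from your containment combined with \cref{lem:nofreeblockingpairs}, so nothing substantive is lost.
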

	\begin{proof}
		By construction, $\mat \join \nat $ only contains matches from $\mat$ and $\nat$ and thus $\mat \join \nat  \subseteq \mat \cup \nat$.
		
		Let $\{\wor, \fir \}\in \Imp(\mat \join \nat)$ and let $\AgeFree $ be the free agents in $\mat$ (and $\nat$ because $\mat$ and $\nat$ are structurally similar).
		There are three cases:
		\begin{enumerate}
			\item $| \{\wor, \fir \} \cap \AgeFree | = 0$:
			Then $(\mat \join \nat)(\wor)= \mat(\wor)$ and $(\mat \join \nat) (\fir) = \mat(\fir)$ by construction, so $\{\wor, \fir\} \in \Imp(\mat)$.
			\item $| \{\wor, \fir \} \cap \AgeFree | = 1$:
			Suppose that $\wor \in \AgeFree$; the other case is symmetric.
			Then either $(\mat \join \nat)(\wor)= \mat(\wor)$ or $(\mat \join \nat) (\wor) = \nat(\fir)$; again, let $(\mat \join \nat)(\wor)= \mat(\wor)$ and the other case is symmetric.
			Then $(\mat \join \nat) (\fir) = \mat(\fir)$ by construction, so $\{\wor, \fir\} \in \Imp(\mat)$.
			
			\item $| \{\wor, \fir \} \cap \AgeFree | = 2$:
			This contradicts \Cref{pro:nofreeblockingpairs} and thus cannot happen.
		\end{enumerate}
		In the cases that do not lead to a contradiction I see that $\{\wor, \fir\} \in \Imp(\mat) \cup \Imp(\nat)$, which completes the proof.
	\end{proof}

	\begin{definition}
		A \textit{crossing edge} at $\mat$ contains both a free agent and an agent who is not free at $\mat$.
	\end{definition}

	\begin{lemma}\label{lem:blockingpathsmeetjoin}
		Let $\mat$ and $\nat$ be structurally similar matches in the agreeable core.
		Then any blocking path of $\mat \join \nat$ must contain two crossing edges at $\mat$.
		All crossing edges at $\mat$ of any blocking path of $\mat \join \nat$ are contained in either $\Imp(\mat)$ or $\Imp(\nat)$.
		
		A symmetric result holds for $\mat \meet \nat$.
	\end{lemma}
	\begin{proof}
		Let $\AgeFree$ denote the free agents in $\mat$ (and $\nat$ because $\mat$ and $\nat$ are structurally similar), and let $\Path$ be a blocking path of $\mat \join \nat$.
		
		I first prove that all crossing edges at $\mat$ of any blocking path of $\mat \join \nat$ are contained in either $\Imp(\mat)$ or $\Imp(\nat)$.
		To see this, let $(\age, \bge) \in \Path$ be a crossing edge with $\age\in \AgeFree$ and $\bge \in \Age\backslash\AgeFree$.
		Because $\rema(\AgeFree) = \AgeFree$, it follows that $(\age, \bge) \in \mat\join\nat \cup \Imp(\mat\join\nat)$.
		Because $\mat(\AgeFree) = \AgeFree$ and $\nat(\AgeFree) = \AgeFree$ by construction it follows that $\{\age, \bge\} \notin \mat\join\nat $.
		Therefore $(\age, \bge) \in \Imp(\mat\join\nat)$.
		
		Next, I show that $\Path \not \subseteq \AgeFree$ and $\Path \not \subseteq \Age \backslash \AgeFree$.
		To see this, consider both cases (toward a contradiction in each case):
		\begin{enumerate}
			\item \textit{Suppose $\Path \subseteq \AgeFree$}:
			Then exists an edge $\edg$ in $\Path$ such that $\edg \in \Imp(\mat \join \nat)$.
			By \Cref{lem:structureofjoin}, $\edg \in \Imp(\mat)$ (the other case is symmetric).
			If $\edg = (\wor, \fir)$, then $\edg$ constitutes a blocking pair and contradicts \Cref{pro:nofreeblockingpairs}.
			If $\edg =( \age, \age)$, then $\age$ strictly prefers being unmatched to $\mat$ and contradicts \Cref{pro:nofreeblockingpairs}.
			Therefore, $\Path \not \subseteq \AgeFree$.
			
			\item \textit{Suppose $\Path \subseteq \Age \backslash \AgeFree$}:
			Note that $\mat\join\nat$ agrees with $\mat$ on $\Age \backslash \AgeFree$.
			If $\Path$ blocks $\mat \join \nat$ then $\Path$ blocks $\mat$, a contradiction to the supposition that $\mat$ is in the agreeable core.
			Therefore, $\Path \not \subseteq \Age \backslash \AgeFree$.
		\end{enumerate}
		Therefore, $\Path$ intersects both $\Age$ and $\Age \backslash \AgeFree$.
		By the definition of a path, there exists some crossing edge at $\mat$ in $\Path$.
		
		Third, to see that two crossing edges at $\mat$ exist, suppose not.
		Let $(\age, \bge)$ be the crossing edge at $\mat$ in $\Path$ such that $\age \in \AgeFree$ and $\bge \notin \AgeFree$.
		As observed earlier, $ (\age, \bge) \in \Imp(\mat\join \nat) $.
		By \Cref{lem:structureofjoin}, it follows that $ (\age, \bge) \in \Imp(\mat ) $ (the other case is symmetric).
		Suppose that $\age\in\Wor$; the other case is symmetric.
		Then $\Path$ may be written
		\begin{align*}
			\Path &= (\overbrace{\edg_1, \ldots, \edg_{k-1} ,(\rema(\age), \age)}^{\t{contained in $\AgeFree$}}, \underbrace{(\age, \bge),}_{\t{contained in $\Imp(\mat)$}} \overbrace{\edg_{k} \ldots, \edg_K}^{\t{contained in $\Age\backslash\AgeFree$}}).
		\end{align*}
		Note that every edge from $\edg_k$ to $\edg_K$ exists in $(\Age, \rema \cup \mat \cup \Imp(\mat))$ because $\mat\join\nat$ agrees with $\mat$ for these agents.
		Because $\age\in\AgeFree$, there is an alternating, complete, and acyclic path $\Path^\age$ in $(\Age, \rema \cup \mat \cup \Imp(\mat))$ such that 
		\begin{align*}
			\Path^\age &= (\edg_1^\age, \ldots, \edg_{l-1}^\age ,(\rema(\age), \age), (\age, \mat(\age)), \edg_{l}^\age \ldots, \edg_L^\age ).
		\end{align*}
		Because $\rema(\AgeFree) = \AgeFree$ and $\mat(\AgeFree) = \AgeFree$ by construction, it follows that every agent in $\Path^\age$ is in $\AgeFree$.
		Observe that the path
		\begin{align*}
			\Path^* &= (\edg_1^\age, \ldots, \edg_{l-1}^\age ,(\rema(\age), \age), (\age, \bge), \edg_{k} \ldots, \edg_K ).
		\end{align*}
		is a blocking path of $\mat$, a contradiction.
		Hence, there are at least two edges that intersect both $\AgeFree$ and $\Age\backslash\AgeFree$.
	\end{proof}

	\begin{proof}[\unskip\nopunct]
		
		\textbf{\textit{Proof of \Cref{thm:joinmeet}:}}
		I show that $\mat \join \nat$ is in the agreeable core; the argument for $\mat \meet \nat$ is symmetric.
		By \Cref{lem:joinmeetmatch}, $\mat \join \nat$ is a match.
		Because $\mat $ and $\nat$ are both individually rational, $\mat \join \nat$ is individually rational.
		The remaining step is to show that there are no blocking paths of $\mat \join \nat$.
		
		Suppose (toward a contradiction) that $\mat \join \nat$ is blocked by an agreeable coalition.
		By \Cref{pro:ac-iff-nbps}, there is a blocking path $\Path$ of $\mat \join \nat$.
		Let $\AgeFree$ denote the free agents in $\mat$ (and $\nat$ because $\mat$ and $\nat$ are structurally similar).
		
		By \Cref{lem:blockingpathsmeetjoin}, there are two crossing edges at $\mat$ in $\Path$, and both of these is in $\Imp(\mat \join \nat)$.
		There are two cases:
		\begin{enumerate}
			\item
			\textit{There exists two crossing edges $\edg_k$ and $\edg_K$ at $\mat$ in path $\Path$ such that the edges $\edg_{k+1}, \ldots, \edg_{K-1}$ (if any) are contained within $\Age\backslash\AgeFree$.}
			Let $\{\wor, \fir\} = \edg_k$ and $(\wor' , \fir') = \edg_K$ with $\age_k, \age_K \in \AgeFree$ and $\bge_k, \bge_K \in \Age\backslash\AgeFree$.
			Because $\mat \join \nat \wpref_\Wor \mat$ and $\mat \join \nat \wpref_\Wor \nat$, it follows that one of $(\wor, \fir) \in \Imp(\mat)$ and $(\wor, \fir) \in \Imp(\nat)$.
			By \Cref{lem:structureofjoin}, let $(\wor ' , \fir ' ) \in \Imp(\mat)$ (the other case is symmetric).

			Because $\wor \in \AgeFree$, there exists an acyclic, complete, and alternating path $\Path^\wor$ of $(\Age, \rema \cup \mat\cup \Imp(\mat))$:
			\begin{align*}
				\Path^\wor = (\edg_1^\wor , \ldots, \edg_{i-1}^\wor, \{\rema(\wor), \wor\}, \{\wor, \mat(\wor) \}, \ldots).
			\end{align*}
			Similarly because $\fir ' \in \AgeFree$:
			\begin{align*}
				\Path^{\fir ' } = (\ldots, (\mat({\fir ' }), {\fir ' }), ({\fir ' }, \rema({\fir ' })), \edg_{j-1}^{\fir ' } , \ldots, \edg_{1}^{\fir ' }).
			\end{align*}
			Then the path
			\begin{align*}
				\Path^* = (
				\underbrace{
					\edg_1^\wor , \ldots, \edg_{i-1}^\wor, (\rema(\wor), \wor),
				}_{
					\Path^\wor
				}
				\overbrace{
					(\wor,\fir), \edg_{k+1}, \ldots, \edg_{K-1}, (\wor ', \fir '),
				}^{
					\Path
				}
				\underbrace{
					({\fir ' }, \rema({\fir ' })), \edg_{j-1}^{\fir ' } , \ldots, \edg_{1}^{\fir ' }
				}_{
					\Path^{\fir ' }
				}
				).
			\end{align*}
			is a blocking path of $\mat$, a contradiction to the supposition that $\mat$ is in the agreeable core.
			
			\item 
			\textit{There does not exist two crossing edges $\edg_k$ and $\edg_K$ at $\mat$ in path $\Path$ such that the edges $\edg_{k+1}, \ldots, \edg_{K-1}$ (if any) are contained within $\Age\backslash\AgeFree$.}
			Let $(\age, \bge)$ be a crossing edge of $\mat$ of $\Path$ with $\age\in \AgeFree$.
			Let $\bge\in\Wor$; the other case is symmetric.
			The supposition implies that $\Path$ must be acyclic and hence can be written
			\begin{align*}
				\Path = (\underbrace{\edg_1 , \ldots, \edg_{k-1},}_{\t{contained in $\Age\backslash\AgeFree$}} (\bge, \age ),  (\age, \rema(\age)), \ldots).
			\end{align*}
			Because $\age \in \AgeFree$, there exists an acyclic, complete, and alternating path $\Path^\age$ of $(\Age, \rema, \mat\cup \Imp(\mat))$:
				\begin{align*}
				\Path^\age = (\ldots, (\mat(\age), \age), (\age, \rema(\age)), \edg_{i-1}^\age , \ldots, \edg_{1}^\age).
			\end{align*}
			Then the path
			\begin{align*}
				\Path^* = (
				\overbrace{
					\edg_1 , \ldots, \edg_{k-1},
				}^{
					\Path
				}
				\underbrace{
					(\age, \rema(\age)), \edg_{i-1}^\age , \ldots, \edg_{1}^\age
				}_{
					\Path^{\age}
				}
				).
			\end{align*}
			is a blocking path of $\mat$ because $\mat$ and $\mat \join \nat$ agree on the agents in $\Age\backslash\AgeFree$.
			This is a contradiction to the supposition that $\mat$ is in the agreeable core.
		\end{enumerate}
		Therefore, there are no blocking paths of $\mat \join \nat$, which implies that $\mat \join \nat$ is in the agreeable core.
	\end{proof}

	\begin{proof}[\unskip\nopunct]
		
		\textbf{\textit{Proof of \Cref{pro:man}:}}
		Consider the following counterexample.
		There are three workers denoted by the numbers $1$, $2$, and $9$, and three firms denoted by the letters $A$, $B$, and $Z$.
		Workers $1$ and $2$ are reference matched to $A$ and $B$, respectively, while worker $9$ and firm $Z$ are each reference matched to him or herself.
		Formally:
		\begin{align*}
			&\rema(1) = A  &&\rema(2) = B  &&\rema(9) = 9\\
			&\rema(A) = 1  &&\rema(B) = 2  &&\rema(Z) = Z
		\end{align*}
		
		A profile of preferences $\spref$ and an alternate profile of worker preferences are given in \Cref{fig:pref}.
		I use the circles to indicate match $\matcir$, the squares to indicate match $\matsqu$, and $\tilde{\phantom{o}}$ to indicate $\mattil$.
		\begin{align*}
			&\matcir(1) = B  &&\matcir(2) = A  &&\matcir(9) = Z\\
			&\matcir(A) = 2  &&\matcir(B) = 1  &&\matcir(Z) = 9\\
			\\
			&\matsqu(1) = A  &&\matsqu(2) = Z  &&\matsqu(9) = B\\
			&\matsqu(A) =1  &&\matsqu(B) = 9  &&\matsqu(Z) = 2 \\
			\\
			&\mattil(1) = A  &&\mattil(2) = B  &&\mattil(9) = Z\\
			&\mattil(A) =1  &&\mattil(B) = 2  &&\mattil(Z) = 9
		\end{align*}
		I keep the firm preference profile fixed at $\spref_A$, $\spref_B$, and $\spref_Z$ for the firms and only specify preferences for the workers.
		
		\begin{figure}
			\begin{center}
				\begin{tikzpicture}
					\node[pref] (pal) [] {};
					
					\node[pref] (p2) [left = 0mm of pal] {
						\begin{tabular}{C | C | C }
							\spref_{1} ' & \spref_2 & \spref_9 \\
							\hline  \hline 
							\circled B & \squared Z &  \squared{ B} \\
							\textcolor{gray}{ Z} & \textcolor{black}{\circled{ A }}& \textcolor{black}{\circled{ Z}} \\
							\squared{ A} &  B &  \\
						\end{tabular}
					};	
					
					\node[pref] (p3) [right = 0mm of pal] {
						\begin{tabular}{C | C | C }
							\spref_{1} ' & \spref_2 ' & \spref_9 \\
							\hline  \hline 
							\circledinv{B} & \squared Z &  \squared{ B} \\
							\textcolor{gray}{\circledinv Z} & \textcolor{gray}{{ A }}& \textcolor{black}{{\tilde Z}} \\
							\squared{\tilde A} & \tilde B &  \\
						\end{tabular}
					};	
					
					\node[pref] (p1) [left = 2.5mm of p2] {
						\begin{tabular}{C | C | C }
							\spref_{1} & \spref_2 & \spref_9 \\
							\hline  \hline 
							\circled B &  Z &  { B} \\
							\textcolor{black}{ Z} & \textcolor{black}{\circled{ A }}& \textcolor{black}{\circled{ Z}} \\
							\squaredinv{ A} &  B &  \\
						\end{tabular}
					};
					
					\node[pref] (p4) [right = 2.5mm of p3] {
						\begin{tabular}{C | C | C }
							\spref_{1} ' & \spref_2 ' & \spref_9 ' \\
							\hline  \hline 
							{B} & \squared Z &  \squared{ B} \\
							\textcolor{gray}{\circledinv Z} & \textcolor{gray}{{ A }}& \textcolor{gray}{{ Z}} \\
							\squared{ A} &  B &  \\
						\end{tabular}
					};	
					
					\node[pref] (psch) [below = 30mm of pal] {
						\begin{tabular}{C | C | C }
							\spref_A& \spref_B & \spref_Z \\
							\hline \hline
							A & \squared { 9} & \circled {\tilde 9} \\
							\circled { 2} & \circled 1 &  1 \\
							\squared {\tilde 1}& \tilde 2 & \squared 2 \\
						\end{tabular}
					};
					
					\node[pref] (p1name) [above = 1mm of p1] {$P_1$};
					\node[pref] (p2name) [above = 1mm of p2] {$P_2$};
					\node[pref] (p3name) [above = 1mm of p3] {$P_3$};
					\node[pref] (p4name) [above = 1mm of p4] {$P_4$};

				\end{tikzpicture}
				\caption{Tables provide preferences $\spref$ and alternate worker preferences $\spref '$. A grayed-out firm in $\spref '$ indicates that the worker matching to himself more than to that firm. If the table does not specify a preference over an alternative, then they are worse than every alternative listed.}
				\label{fig:pref}
			\end{center}
		\end{figure}
		
		To prove the result, suppose that $\mech$ is not preference manipulable.
		I consider the sequence of preference profiles $P_1$, $P_2$, $P_3$, and $P_4$ formed by swapping $\spref_1 '$ for $\spref_1$, then $\spref_2 ' $ for $\spref_2$, and then $\spref_9' $ for $\spref_9$.
		I use the non-manipulability of $\mech$ to restrict $\mech$ to a unique match in each case.
		I then show that at $P_3$ worker $9$ can profitably deviate to $\spref_9'$, a contradiction to the non-manipulability of $\mech$.
		
		First, I limit the scope of matches I consider.
		Consider any $\mat$ and any $P_j$.
		\begin{itemize}
			\item If $A \spref_1 \mat(1)$ then $1$ strictly prefers $\rema(1)$ to $\mat(1)$, hence $\mat$ is not in the agreeable core; the same holds for $B \spref_2 \mat(2)$, $2 \spref_B \mat(B) $, and $1 \spref_A \mat(A)$.
			\item If $j \neq 4$ and $Z \spref_9 \mat(9)$, then $\{9, Z\}$ is an agreeable coalition that blocks $\mat$.
			\item If $j = 4$ and $Z \spref_9 \mat(9)$, then $\mat$ in the agreeable core implies that $\mat(1) \neq Z$ and hence $B\spref_9 \mat(9)$ implies that $\{2, 9, B, Z\}$ is an agreeable coalition that blocks $\mat$; hence, if $\mat$ is in the agreeable core then $\mat(9) = B$.
			\item If $\mat(1) = Z$ and $\mat(2) = A$, then for $P_1$ $\{1,A,Z\}$ is an agreeable blocking coalition and for $P_2$, $P_3$, and $P_4$ $A \spref_1 ' Z$.
			Hence for all $P_j$ $\mat(1) = Z$ and $\mat(2) = A$ imply that $\mat$ is not in the agreeable core.
		\end{itemize}
		It follows that every worker is matched to a firm, and thus every firm is matched to a worker.
		Therefore, any match in the agreeable core only occurs between agents who are listed on each other's preferences in \Cref{fig:pref}.
		An exhaustive search reveals that $\matcir$, $\matsqu$, and $\mattil$ are the only matches that meet these criteria.
		
		For $P_1$, the agreeable core is $\{\matcir\}$ because:
		\begin{itemize}
			\item[\cmark] $\matcir$ is the output of the PE algorithm and hence is in the agreeable core.
			\item[\xmark] $\matsqu$ is blocked by the agreeable coalition $\{1,A,Z\}$ with any deviation $\mat '$ such that $\mat ' (1 ) = Z$ and $\mat '(A) = A$.
			\item[\xmark] $\mattil$ is blocked by the agreeable coalition $\{1,2,A,B\}$ with any deviation $\mat '$ such that $\mat ' (1 ) = B$ and $\mat '(2) = A$.
		\end{itemize}
		Hence, $\mech(P_1 ) = \matcir$.
		
		For preferences $P_2$, the agreeable core is $\{\matcir, \matsqu\}$ because:
		\begin{itemize}
			\item[\cmark] $\matcir$ does not match any worker to a firm he dropped from his preference, so every blocking coalition under these preferences forms under the prior preferences.
			\item[\cmark] $\matsqu$ is the output of the PE algorithm and hence is in the agreeable core.
			\item[\xmark] $\mattil$ is blocked by the agreeable coalition $\{1,2,A,B\}$ with any deviation $\mat '$ such that $\mat ' (1 ) = B$ and $\mat '(2) = A$.
		\end{itemize}
		If $\mech(P_2) = \matsqu$, then consider the deviation by worker $1$ of misreporting $\spref_1$ at $P_2$.
		Because $\matcir(1) \spref_1 ' \matsqu(1)$, this is a profitable deviation.
		Therefore, because $\mech$ is not preference manipulable, $\mech(P_2) = \matcir$.
		
		For preferences $P_3$, the agreeable core is $\{\matsqu, \mattil\}$ because:
		\begin{itemize}
			\item[\xmark] $\matcir$ matches worker $2$ to firm $A$, which violates the requirement that $\mat(2)\wpref_2 B$.
			\item[\cmark] $\matsqu$ is the output of the PE algorithm and hence is in the agreeable core.
			\item[\cmark] $\mattil$:
			Observe that $Z$ cannot be strictly better off in any blocking coalition, and thus $2$ cannot be strictly better any blocking coalition.
			Furthermore, any agreeable coalition that makes $1$ strictly better off must include $B$ and hence, because the coalition is agreeable, $2$.
			Therefore, any agreeable blocking coalition cannot make any worker strictly better off.
			Hence, $\mattil$ is also in the agreeable core.
		\end{itemize}
		If $\mech(P_3) = \matsqu$, then consider the deviation by worker $2$ of reporting $\spref_2 '$ at $P_2$.
		Because $\matsqu(2) \spref_2  \matcir(2)$, this is a profitable deviation.
		Therefore, because $\mech$ is not preference manipulable, $\mech(P_3) = \mattil$.
		
		In this final step, I note that the core under $P_4$ is the singleton $\matsqu$.
		To see this, observe that $\matcir$ and $\mattil$ each match a worker to a firm he lists below his reference match, and therefore none of these three matches is in the agreeable core.
		$\matsqu$ is the output of the PE algorithm and hence is in the agreeable core.
		However, consider the deviation by worker $9$ of reporting $\spref_9 ' $ at $P_3$.
		Because $\matsqu(9) \spref_9 \mattil(9)$, this is a profitable deviation.
		Therefore, $\mech$ is preference manipulable, a contradiction.
	\end{proof}

	\textit{\textbf{Introduction to the proofs of \Cref{thm:nomanPE}:}}
	
	\begin{lemma}\label{lem:actBloPair}
		For any $\matProp$, there is no $\wor$ and $\fir$ such that all three conditions are true:
		\begin{enumerate}
			\item $\wor$ is active in the Propose stage; and
			\item $\rema(\fir) \neq \matProp(\fir)$; and
			\item $(\wor, \fir)$ is a blocking pair of $\matProp$.
		\end{enumerate}
	\end{lemma}
	\begin{proof}
		Toward a contradiction, suppose $(\wor, \fir)$ is such a pair.
		Because $\wor$ is active and $\wor$ strictly prefers $\fir$ to $\matProp(\wor)$, $\wor$ makes a proposal to $\fir$.
		Because $\rema(\fir) \neq \matProp(\fir)$ and $\fir$ strictly prefers $\wor$ to $\matProp(\fir)$, $\fir$ does not reject the proposal from $\wor$.
		This is a contradiction to the supposition that $(\wor, \fir)$ is a blocking pair.
		Therefore, no such pair exists.
	\end{proof}

	\begin{proof}[\unskip\nopunct]
		
		\textbf{\textit{Proof of \Cref{thm:nomanPE}:}}
		Suppose (toward a contradiction) that $\wor$ can profitably misreport $\wpref_\wor ' $ but that $\wor$ is not active in both the $\wpref_\wor '$-Propose and $\wpref_\wor '$-Exchange stages.
		First I consider the case when $\wor $ is not active in the $\wpref_\wor '$-Propose stage, and then the case when $\wor$ is not active in the $\wpref_\wor '$-Exchange stage.
		Before continuing, I note that $\wor$'s preferences do not affect whether $\wor$ is active in the $\wpref_\wor$-Propose or $\wpref_\wor '$-Propose stages.
		
		Suppose $\wor$ is not active in the $\wpref_\wor '$-Propose stage.
		The rest of the proof follows directly from the non-manipulability of the Top Trading Cycles algorithm.
		This is well-known in the literature; see \cite{ma_strategy-proofness_1994} for one such proof, and footnote 4 of \cite{dur_two-sided_2019} for a list of references to other proofs.
		This is a contradiction to the supposition that $\wor$ can profitably misreport $\wpref_\wor ' $.
		
		The remainder of the proof is built on the proof of the blocking lemma of \cite{Roth_Sotomayor_1990}.
		
		For the remainder of the proof, suppose that $\wor$ is active in the $\wpref_\wor '$-Propose stage but not in the $\wpref_\wor '$-Exchange stage.
		Therefore, $\wor$ is active in the $\wpref_\wor$-Propose stage as well.
		Let $\matProp' $ be the output of the $\wpref_\wor '$-Propose stage.
		Let $\Wor ' $ be the set of workers who strictly prefer $\matProp ' $ to $\matProp$ and are active in the $\wpref_\wor$-Propose stage.
		By supposition, $\wor \in \Wor ' $, so $\Wor ' $ is nonempty.
		Because $\matProp$ is individually rational, every worker in $\Wor '$ is active in the $\wpref_\wor '$-Propose stage but not active in the $\wpref_\wor '$-Exchange stage.
		
		Next, I show that there always exists a worker $\wor^*$ and firm $\fir^*$ such that the following four conditions hold:
		\begin{enumerate}
			\item $\wor^*$ is active in the $\wpref_\wor '$-Propose stage; and
			\item $\rema(\fir^*) \neq \matProp ' (\fir^*)$; and
			\item $(\wor^*, \fir^*)$ is a blocking pair of $\matProp ' $; and
			\item $\wor^* \neq \wor$.
		\end{enumerate}
		There are two cases:
		\begin{enumerate}
			\item $\matProp ' (\Wor ' ) = \matProp(\Wor ' )$:
			First, I show that every $\wor ' $ who is active in the $\wpref_\wor $-Propose stage is also active in the $\wpref_\wor ' $-Propose stage.
			To see this, note that there is a sequence of workers $\wor_1, \ldots, \wor_n \equiv \wor ' $ such that $\wor_k$ is acceptable\footnote{That is, $\fir^*$ prefers $\tilde\wor$ to $\rema(\fir^*)$.} to $\rema(\wor_{k+1})$ and $\wor_k$ is the first worker to propose to $\rema(\wor_{k+1})$ in the $\wpref_\wor $-Propose stage.
			Toward a contradiction, suppose that some workers in the sequence are not active in the $\wpref_\wor '$-Propose stage, and let $\wor_k$ be the one with the lowest index.
			Obviously, $k\neq 1$.
			By construction, $\wor_{k-1}$ is active in the $\wpref_\wor '$-Propose stage and prefers $\matProp ' $ to $\matProp$ because $\wor_{k-1}$ does not propose to $\rema(\wor_k)$.
			By supposition, $\matProp '(\Wor ' )= \matProp (\Wor ' )$.
			Therefore, there is some acceptable $\tilde \wor \in \Wor ' $ who proposes to $\rema(\wor_k)$ in the $\wpref_\wor '$-Propose stage.
			Hence $\wor_k$ is active in the $\wpref_\wor '$-Propose stage, a contradiction.
			Therefore $\wor '$ is active in the $\wpref_\wor '$-Propose stage.
			
			Let $\Fir ' \equiv \matProp ' (\Wor ' )$.
			Fix an arbitrary order of proposals and let $\fir^* $ be the last firm in $\Fir ' $ to receive a proposal from an acceptable worker in $\Wor ' $ in the $\wpref_\wor $-Propose stage. 
			Because $\matProp ' $ is individually rational, each worker in $\Wor ' $ is acceptable to her $\matProp '$-partner.
			Because $ \Wor' $ is nonempty and every worker in $\Wor ' $ makes a proposal in the $\wpref_\wor $-Propose stage, such a firm exists.
			
			Because every worker in $\Wor ' $ strictly prefers $\matProp '  $ to $\matProp$ and is active in the $\wpref_\wor $-Propose stage, every firm in $\Fir ' $ must have rejected at least one proposal from an acceptable worker in $\Wor ' $ in the $\wpref_\wor $-Propose stage (namely, the firm's $\matProp ' $-partner).
			Thus $\fir^*$ was matched to some $\wor^* \in \Wor $ when she received this last proposal and $\fir^*$ rejects $\wor^*$.
			Note that $\wor^*$ cannot be in $\Wor ' $; otherwise, after being rejected by $\fir^*$, $\wor^*$ would have proposed to another firm in $\Fir ' $ because $\matProp (\Wor ' ) = \Fir ' $.
			Hence, $\wor^* \neq \wor$.
			Note that $\wor^*$ is active in the $\wpref_\wor $-Propose stage, so he is also active in the $\wpref_\wor ' $-Propose stage.
			\textit{\textbf{This satisfies conditions 1 and 4.}}
			 
			Next, note that $\rema(\fir^*) \neq \matProp ' (\fir^*)$ because $\matProp  (\fir^*) \in \Wor ' $ and no worker in $\Wor ' $ is active in the $\wpref_\wor ' $-Exchange stage (see earlier comment).
			\textit{\textbf{This satisfies condition 2.}}
			
			Finally, note that $\fir^*$ strictly prefers $\wor^*$ to $\matProp ' (\fir^*)$ because  $\fir^*$ must have rejected $\matProp ' (\fir^*)$ but $\wor^*$ was tentatively accepted immediately prior to $\fir^*$ accepting $\matProp (\fir^*)$ in the $\wpref_\wor$-Propose stage.
			Because $\wor^*$ is active in both the $\wpref_\wor$- and $\wpref_\wor '$-Propose stages and $\wor\notin\Wor ' $, it follows that $\wor$ weakly prefers $\matProp$ to $\matProp '$ 
			Because $\wor^*$ strictly prefers $\fir$ to $\matProp(\wor)$ and $\wor$ weakly prefers $\matProp$ to $\matProp ' $, it follows that $\wor^*$ strictly prefers $\fir$ to $\matProp' (\wor^*)$.
			Therefore, $(\wor^*, \fir^*)$ is a blocking pair of $\matProp ' $.
			\textit{\textbf{This satisfies condition 3.}}
			
			This completes this case.

			\item $\matProp ' (\Wor ' ) \neq \matProp(\Wor ' )$:
			Fix an arbitrary order of proposals and let $\fir^*$ be the first firm in $\matProp ' (\Wor ' ) \backslash \matProp (\Wor ' )$ to receive a proposal from $\matProp' (\fir^*)$ in the $\wpref_\wor '$-Propose stage.
			Note that $\rema(\fir^*) \neq \matProp ' (\fir^*)$ because $\matProp ' (\fir^*) \in \Wor ' $ and no worker in $\Wor ' $ is active in the $\wpref_\wor ' $-Exchange stage (see earlier comment).
			\textit{\textbf{This satisfies condition 2.}}
			
			Let $\wor^* \equiv \matProp(\fir^*)$.
			Note that $\wor^* \notin \Wor ' $ and thus $\wor^* \neq \wor$.
			\textit{\textbf{This satisfies condition 4.}}
			
			Let $\wor ' \equiv \matProp ' (\fir^*)$.
			Note that $\wor ' $ proposes to $\fir^*$ in the $\wpref_\wor $-Propose stage because $\wor ' \in \Wor ' $.
			Therefore, $\wor^*$ is active in the $\wpref_\wor $-Propose stage.
			
			Next, I show that $\wor^* $ is active in the $\wpref_\wor '$-Propose stage.
			To see this, note that there is a sequence of workers $\wor_1, \ldots, \wor_n \equiv \wor^*$ such that in the $\wpref_\wor $-Propose stage, $\wor_k$ is acceptable to $\rema(\wor_{k+1})$ and $\wor_k$ is the first worker to propose to $\rema(\wor_{k+1})$.
			Toward a contradiction, suppose that some workers in the sequence are not active in the $\wpref_\wor '$-Propose stage, and let $\wor_k$ be the one with the lowest index.
			Obviously, $k\neq 1$.
			By construction, $\wor_{k-1}$ is active in the $\wpref_\wor '$-Propose stage and prefers $\matProp ' $ to $\matProp$ because $\wor_{k-1}$ does not propose to $\rema(\wor_k)$.
			Therefore, $\wor_{k-1}$ must propose to $\matProp ' (\wor_{k-1})$ at an earlier step of the $\wpref_\wor '$-Propose stage than $\wor ' $ proposes to $\fir^*$, a contradiction to the supposition that $\wor ' $ is the first such worker to do so.
			Hence $\wor_k$ is active in the $\wpref_\wor '$-Propose stage, a contradiction.
			Therefore $\wor^*$ is active in the $\wpref_\wor '$-Propose stage.
			\textit{\textbf{This satisfies condition 1.}}
			
			Note that $\wor^*$ strictly prefers $\fir^*$ to $\matProp ' (\wor^*)$ because $\wor^* \notin \Wor ' $, $\wor^*$ is active in both Propose stages, and $\fir^* = \matProp (\wor^*)  \neq \matProp ' (\wor^*)$.
			Similarly, $\wor^* \neq \rema(\fir^*)$ because $\matProp ' $ is individually rational.
			Because $\wor ' $ is rejected by $\fir^*$ in favor of $\wor^*$ in the $\wpref_\wor $-Propose stage, it follows that $\fir^*$ strictly prefers $\wor^*$ to $\wor ' $.
			Therefore, $(\wor^*, \fir^*)$ is a blocking pair of $\matProp '$.
			\textit{\textbf{This satisfies condition 3.}}
			
			This completes this case.
		\end{enumerate}
		
		Because only $\wor$ misreports, $\wor^*$ in each case has the same preferences.
		Therefore, the conditions of \cref{lem:actBloPair} are met, a contradiction to the supposition that $\matProp ' $ is the output of the $\wpref_\wor ' $-Propose stage.
		This completes the proof.
		
	\end{proof}

	\begin{proof}[\unskip\nopunct]
		
		\textbf{\textit{Proof of \Cref{thm:art}:}}
		This proof has two parts.
		In the first, I show that $\matProp ' (\wor) = \fir$.
		In the second, I show that $\wor$ is not active the $\rema ' $-Propose stage.
		
		Suppose (toward a contradiction) that $\matProp ' (\wor) \neq \fir$.
		I show that every worker who proposes in the $\rema$-Propose stage weakly prefers $\matProp $ to $\matProp ' $.
		This contradicts the supposition that $\wor$ strictly prefers $\matProp ' $ to $\matProp$.
		
		First, choose an arbitrary proposal order for the $\rema$-Propose stage such that $\wor$ only makes his first proposal if he is the only active worker.
		Use the notation $(\tilde\wor, \tilde\fir)$ to indicate that $\tilde\wor$ proposes to $\tilde\fir$, and let $(\wor_1, \fir_1), (\wor_2, \fir_2), \ldots, (\wor_n, \fir_n)$ be the order of proposals.
		By \Cref{pro:ord} the output of the Propose stage is independent of the proposal order.
		
		Second, I argue by induction that there is a proposal order for the $\rema '$-Propose stage such that the first $n$ proposals are $(\wor_1, \fir_1), (\wor_2, \fir_2), \ldots, (\wor_n, \fir_n)$.
		In the base case, consider $(\wor_1, \fir_1)$.
		There are two cases:
		\begin{enumerate}
			\item $\wor_1 \neq \wor$:
			Then $\wor_1$ or $\rema(\wor_1)$ is a proposal source in $\rema$.
			Thus $\wor_1$ or $\rema(\wor_1)$ is a proposal source in $\rema ' $.
			Therefore $\wor_1$ is active at the start of the $\rema '$-Propose stage.
			\item $\wor_1 = \wor$:
			Then $\wor$ is the only active worker at the start of the $\rema$-Propose stage.
			Because $\matProp ' (\wor) \neq \fir$, this implies that $\wor$ is active at some point in the $\rema '$-Propose stage.
			Therefore, $\wor$ is active at the start of the $\rema '$-Propose stage.
		\end{enumerate}
		Therefore there is a proposal order such that $(\wor_1, \fir_1)$ is the first proposal in the $\rema '$-Propose stage.
		
		For the inductive step, suppose that there is a proposal order such that $(\wor_1, \allowbreak \fir_1), (\wor_2, \fir_2), \ldots, (\wor_{k-1}, \fir_{k-1})$ are the first $k-1$ proposals in the $\rema '$-Propose stage.
		There are two cases:
		\begin{enumerate}
			\item \textit{$\wor_{j} \neq \wor$ for any $j < k$}:
			Observe that there are weakly more rejections in the $\rema '$-Propose stage.
			Therefore, the set of active agents is weakly larger in the $\rema '$-Propose stage, with the possible exception of $\wor$.
			If $\wor_k = \wor$, then $\wor$ is the only active worker in the $\rema$-Propose stage.
			Because $\matProp ' (\wor) \neq \fir$, this implies that $\wor$ is active at some point in the $\rema '$-Propose stage.
			Therefore $\wor$ must be active at the $k^\t{th}$ step of the $\rema '$-Propose stage
			Therefore $\wor_k$ must be active at the $k^\t{th}$ step of the $\rema '$-Propose stage
			\item \textit{$\wor_{j} = \wor$ for some $j < k$}:
			Observe that there are weakly more rejections in the $\rema '$-Propose stage.
			Therefore, the set of active agents is weakly larger in the $\rema '$-Propose stage because $\wor$ has been active at least once.
			Therefore $\wor_k$ must be active at the $k^\t{th}$ step of the $\rema '$-Propose stage.
		\end{enumerate}
		Therefore, $\wor$ makes weakly more proposals in the $\rema '$-Propose stage, a contradiction to the supposition that $\wor$ and $\fir$ profitably misreport the initial match.
		Therefore, $\matProp ' (\wor) = \fir$.

		Suppose (toward a contradiction) that $\wor$ is active in the Propose phase $\rema ' $-Propose stage.
		Let
		\begin{align*}
			\wor_1 \equiv \wor, \fir_1 \equiv \matExch ' (\wor_1), \wor_2 \equiv \rema ' (\fir_1), \ldots, \fir_n \equiv \fir 
		\end{align*}
		be the cycle in which $\wor$ and $\fir$ sit down in in the $\rema ' $-Exchange stage.
		
		Consider any $\wor_k$ in this cycle.
		If $\wor_k$ is active in the $\rema '$-Propose stage, then $\wor_k$ proposes to $\fir_k$ in the $\rema '$-Propose stage because $\matProp ' (\wor_k) = \rema ' (\wor_k)$.
		Because $\matExch ' (\fir_k) = \wor_k$, it follows that $\fir_k $ weakly prefers $\wor_k$ to $\rema ' (\fir_k)$.
		Because $\fir_k$ rejects $\wor_k$ at some point of the $\rema ' $-Propose stage, it then follows that $\wor_{k+1}$ is active in the $\rema '$-Propose stage.
		By supposition, $\wor$ is active in the $\rema '$-Propose stage.
		
		Therefore, $\wor_n$ is active in the $\rema '$-Propose stage.
		Therefore, $\wor_n$ proposes to $\fir$ in the $\rema '$-Propose stage but $\fir$ rejects $\wor_n$.
		Because $\fir$ strictly prefers $\matExch '(\fir) $ to $\matExch(\fir)$, and weakly prefers $\matExch(\fir)$ to being unmatched, it follows that $\fir$ does not reject a proposal from $\wor_n$, a contradiction.
		Therefore, $\wor$ is not active in the $\rema ' $-Propose stage.
	\end{proof}

	
\end{document}